\newcommand{\hx}{{\mathbf{\hat{x}}}}
\newcommand{\tx}{{\mathbf{\tilde{x}}}}
\newcommand{\xv}{\mathbf{x}}
\newcommand{\ff}{\frac{1}{\sqrt{2}}}
\newcommand{\fs}{\frac{n}{4}}
\newcommand{\fl}{\lfloor \frac{n}{4} \rfloor}
\newcommand{\fr}{\lceil \frac{n}{4} \rceil}
\newcommand{\flt}{\lfloor \frac{3n}{4} \rfloor}
\newcommand{\frt}{\lceil \frac{3n}{4} \rceil}
\newcommand{\mb}[1]{\bm{#1}}
\newcommand{\ketm}[1]{\ket{\bm{#1}}_0}
\newcommand{\Q}{{Q_{\textrm{algo}}}}    
\newcommand{\QC}{{ QC_{\textrm{algo}}}}
\newcommand{\DP}{{D_{\op}^{(2)}}}
\newcommand{\op}{\oplus}
\newcommand{\cc}{{\sf C_0}}
\newcommand{\cd}{{\sf C_1}}
\newcommand{\cn}[2]{{\sf CNOT^{#1}_{#2}}}
\newcommand{\pg}[2]{{\sf P^{#1}_{#2}}}
\newcommand{\parg}[3]{{\sf Par^{#1}_{#2,#3}}}
\newcommand{\sg}[3]{{\sf S^{#1}_{#2,#3}}}
\DeclarePairedDelimiter{\abs}{\lvert}{\rvert}
\newcommand{\pdeg}{{\sf pdeg}}
\newcommand{\vph}{\vphantom{\frac{1}{2}}}
\DeclarePairedDelimiter\kett{\lvert}{\rangle}
\newcommand{\F}{\mathbb{F}_2}
\newtheorem{fact}{Fact}
\newtheorem{lemma}{Lemma}
\newtheorem{theorem}{Theorem}
\newtheorem{corollary}{Corollary}
\newtheorem{definition}{Definition}
\newtheorem{remark}{Remark}
\providecommand{\nmid}{} 
\renewcommand\nmid{\mathrel{\mathpalette\thiel@nmid\relax}}
\newcommand{\thiel@nmid}[2]{%
  \ooalign{%
    \rotatebox[origin=c]{40}{$\m@th#1-\mkern-3.5mu$}\cr
    \hidewidth$\m@th#1|$\hidewidth\cr}%
}
\begin{document}
\title{Exact Quantum Query Algorithms Outperforming Parity -- Beyond The Symmetric Functions}

\author{
\IEEEauthorblockN{Chandra Sekhar Mukherjee,}
\IEEEauthorblockA{Indian Statistical Institute, Kolkata\\
Email: chandrasekhar.mukherjee07@gmail.com\\}
\and
\IEEEauthorblockN{Subhamoy Maitra,}
\IEEEauthorblockA{Indian Statistical Institute, Kolkata\\
Email: subho@isical.ac.in}}

\maketitle
\begin{abstract}
In Exact Quantum Query model, almost all of the Boolean functions for which non-trivial 
query algorithms exist are symmetric in nature. The most well known techniques in this 
domain exploit parity decision trees, in which the parity of two bits can be obtained by a single query. 
Thus, exact quantum query algorithms outperforming parity decision trees are rare. 
In this paper we first obtain optimal exact quantum query algorithms ($Q_{algo}(f)$) for a direct sum based class of 
$\Omega \left( 2^{\frac{\sqrt{n}}{2}} \right)$ non-symmetric functions.
We construct these algorithms by analyzing the algebraic normal form together with a novel untangling strategy.
Next we obtain the generalized parity decision tree complexity ($D_{\oplus}(f)$) analysing the Walsh Spectrum.
Finally, we show that query complexity of $Q_{algo}$ is $\lceil  \frac{3n}{4} \rceil$ whereas
$D_{\oplus}(f)$ varies between $n-1$ and $\lceil  \frac{3n}{4} \rceil+1$ for different classes,  
underlining linear separation between the two measures in many cases. To the best of our knowledge, this is the first family of 
algorithms beyond generalized parity (and thus parity) for a large class of non-symmetric functions. 
We also implement these techniques for a larger (doubly exponential in $\frac{n}{4}$) class of Maiorana-McFarland type
functions, but could only obtain partial results using similar algorithmic 
techniques. 
\end{abstract}

\begin{IEEEkeywords}
Boolean Functions, 
Direct Sum Construction, 
Exact Quantum Query Algorithm,
Maiorana-McFarland (MM) Construction,
Query Complexity.
\end{IEEEkeywords}

\section{Introduction}
\label{sec:1}
Query complexity considers the model of computation where a function is evaluated by an algorithm, 
either with certainty or with some probability. This is achieved exploiting a classical or a quantum 
computer such that the inputs to the function can only be accessed by making queries to an oracle. 
In this paradigm, the complexity of an algorithm is decided by ``the maximum number of queries it 
makes to an oracle to calculate the output for any input". Studying the classical-quantum 
separation in this model is of theoretical interest. 

In this paper our basic tool is Boolean functions expressed as $f: \mathbb{F}^n_2 \rightarrow \mathbb{F}_2$,
defined at all the points in $\mathbb{F}^n_2$. Since it is defined at all the points, sometimes we call these
Total Boolean functions. In this text, we will sometime simply use the word ``function" also. 
Boolean functions are the most widely studied class of functions in the query model. 
At the same time these are also of huge importance in the fields of cryptography and coding theory~\cite{pbook}. 
In fact, our motivation here is to explore the well studied functions in coding and cryptography to identify
the separation in terms of query complexity.

In the query complexity model, different aspects of Boolean functions are analyzed to design efficient quantum query algorithms. 
Examples are symmetry and Walsh Spectrum (Fourier spectrum) of the function in consideration. In this paper we concentrate on 
the $\mathbb{F}_2$ polynomial of a Boolean function. This is also called the algebraic normal form (ANF). Notably ANF is a 
very important property of Boolean functions from a cryptographic aspect~\cite{pbook}, in particular the algebraic degree. 
However ANFs are seldom used in complexity theoretic studies. Against this backdrop, a central theme of this paper is to use the 
ANF of the functions in the $\sf pdsp$ class to merge our positive and negative results. Our exact quantum query algorithm $\Q()$ 
is designed by analyzing the ANF of these functions. In this regard we have created a novel untangling protocol that gives us 
optimal complexity. On the other hand the functions in the $\sf pdsp$ class have high granularity. Granularity is a property of 
the Fourier spectrum of a function and higher value implies higher parity decision tree complexity. Now, the high granularity 
of the $\sf pdsp$ class is a direct consequence of its ANF structure. This result allows us to separate $\QC(f)$ and 
$D_{\oplus}(f)$, which, alongside the novel untangling protocol, is the main result of our paper. 

The query complexity model we discuss here is technically different from querying the universal quantum gate 
$U_f$, given a Boolean function $f$. Many important Quantum techniques, such as Deutsch-Jozsa~\cite{deutsch}, 
the Grover's search \cite{grover}, Simon's period finding algorithm~\cite{simon} and the hidden subgroup problem, 
underlying Shor's algorithm \cite{shor} are understood in a different setting. For detailed discussion related 
to quantum paradigm, the reader is referred to~\cite{nc}.

Given the model of complexity we discuss in this paper, the value of any variable can only be queried 
using an oracle. An oracle may be viewed as a black-box that can perform a particular computation.
In the classical model, an oracle can accept an input $i \ (1 \leq i \leq n)$ and output the value of 
the variable $x_i$. In the quantum model, the oracle is reversible and it can be represented as an unitary 
$O_x$ which works as:
$O_x\ket{i}\ket{\phi}=\ket{i}\ket{\phi \oplus x_i},~1 \leq i \leq n$.

The query complexity of a function can be defined as the maximum number of times this oracle 
needs to be used to obtain the output value of the function $f$ for any value of the variables 
$x_1, x_2, \ldots, x_n$. Naturally, the query complexity in quantum paradigm will be less than or equal 
to that in classical domain. It is easy to understand that the maximum number of queries for 
$n$ variables will be $n$. The question is how can we reduce the value over classical domain using 
the advantage of quantum computation. 

Query complexity can be defined for both deterministic and probabilistic classical 
computational settings, as well as in the bounded error quantum and exact quantum 
model. Out of these models, the Exact Quantum Query complexity model is perhaps
the least explored. Algorithms showing separations between the classical deterministic
and the exact quantum query model has been formulated for very few classes of functions.
In the exact quantum query model, a Boolean function $f$ needs to be evaluated correctly
for all possible inputs. The class of functions for which classical-quantum separation is known,
and more importantly, for which we have exact quantum algorithms which outperform 
the classical algorithms are far and few. 
Mostly the exact quantum query algorithms that exist use the same method 
of calculating of parity of two input bits in one query, as mentioned
in the work by Ambainis et. al.~\cite{amb2}.
\begin{quote}
``However, the techniques for designing exact quantum algorithms are rudimentary compared to the
bounded error setting. Other than the well known `XOR' trick — constructing a quantum algorithm from
a classical decision tree that is allowed to `query' the XOR of any two bits — there are few alternate
approaches."
\end{quote}
Even in this case, there is no generalized method for constructing parity decision trees
that exhibit deterministic-exact quantum advantage for a given class of functions. 
The most striking result in this area is the example of super-linear separation
between the deterministic classical and exact quantum models, shown in \cite{amb1}.
The work by Barnum et. al. \cite{sdp} is also equally important, defining a semidefinite programming 
formulation to find out the exact quantum query complexity of a given function and also discovering 
an algorithm to achieve it. Finding such separations remains the most interesting problem in this area. 
In terms of the gap between $D(f)$ and $Q_E(f)$, the separations can be distinguished into two different kinds. 

\begin{enumerate}
\item The first is identifying functions $f$
so that $Q_E(f) < \frac{D(f)}{2}$. As explained in~\cite{exact} this leads to
super-linear separation between $Q_E(f^k)$ and $D(f^k)$ where $f^k$ is 
obtained by recursively expanding the function $f$. 
\item The second is where $Q_E(f) \geq \frac{D(f)}{2}$. We do not have any known method of converting such a result
into super-linear separation. However, studying separation of this kind is still of considerable importance. 
This is because there are very few results in the exact quantum query model, and it is always of interest to find new 
approaches of evaluating functions in this model beyond the well known parity method, as highlighted in the quote
above~\cite{amb2}.
\end{enumerate}

Before proceeding further, let us first review the main results in this area in a chronological fashion to 
show where exactly our work is placed among the state of the art literature. 
\begin{itemize}
\item[]
{\bf 2012:} \cite{amb1} Superlinear separation between exact quantum query complexity ($Q_E(f)$) and deterministic query complexity 
($D(f)$) is obtained. This could be achieved by first obtaining a function $f$ with $Q_E(f) < \frac{D(f)}{2}$ 
and then recursively expanding it.

\item[]
{\bf 2013:} \cite{amb3} Exact quantum query complexities of the symmetric function classes $\sf Threshold^n_k$ and 
$\sf Exact^n_k$ have been obtained. For both the cases $Q_E(f) >\frac{D(f)}{2}$ and thus these results provided linear separation only. 

\item[] 
{\bf 2015:} \cite{amb4} Near quadratic separation between $Q_E(f)$ and $D(f)$ was obtained using the concept of pointer functions. 

\item[]
{\bf 2016:} \cite{amb2} Exact quantum query complexity of the symmetric function class $\sf Exact^n_{k,l}$ was obtained.
For all the functions we have $Q_E(f) > \frac{D(f)}{2}$ and thus the separation was linear.
\end{itemize}
As observed from the chronology above, discovering exact quantum query complexity, with only linear separation between the classical and 
quantum models, that is, ${Q_E(f)} > \frac{D(f)}{2}$, remained a relevant topic even after discovering the results 
related to near quadratic separation.

Against this backdrop, we study the exact quantum query model for Boolean functions by a combined analysis of $\F$ polynomial and 
Fourier spectrum to obtain linear separation between $Q_E(f)$ and $D(f)$ and show that our algorithms are more efficient than any 
parity decision tree method. In fact, the algorithms we design outperform generalized parity decision tree methods, 
where one can obtain the parity of any $i \leq n$ variables using a single query. Another interesting characteristics of 
the class of functions we obtain is that, the size of these classes are considerably larger compared to the symmetric function classes 
for which linear separations were previously obtained. The comparison of the existing results with our findings are 
summarized in Table~\ref{tab:1}.

\begin{table}[ht!]
\centering
\normalsize
\begin{tabular}{| c | c | c | c | c |}
\hline
Function & Ref. &\makecell{ Complexity of\\ Exact Quantum \\ Query Algorithm} & \makecell{Total \\ functions \\ covered for $n$} & \makecell{ Provably \\ Optimal?}  \\ \hline
$\sf Exact^n_k$  & \cite{amb3} & $\max\{k,n-k\}$  &\makecell{  $n+1$ \\ (one for\\  each value of $k$ )} & yes \\ \hline

$\sf Threshold^n_k$ & \cite{amb3} &  $\max\{k,n-k+1\}$  & \makecell{  $n+1$ \\ (one for \\ each value of $k$ )} & yes \\ \hline

$\sf Exact^n_{k,l}$ &  \cite{amb2}  &   $\max\{n-k,l\}+1$   &\makecell{  $n \choose 2$ \\ (one for \\  each $\{k,l\}$  pair)} & 
\makecell{For most \\ cases} \\ \hline

\makecell{ The class -- \\ $\sf pdsp$} & \makecell{our \\ work} & $\flt$ &$\Omega(\sqrt{2^{\sqrt{n}}})$ 
& yes \\ \hline

\makecell{ A subclass of \\  MM type \\ Bent functions} & \makecell{our \\ work} & $\lceil \frac{5n}{8} \rceil$ & $\Omega((2^{\fl}!)^22^{2^{\fl}})$ & No \\
\hline

\end{tabular}
\caption{Advantage achieved by Query Algorithms}
\label{tab:1}
\end{table}

Before proceeding further, we first define the following notations that we use in the paper.
\begin{definition}
\label{def:1} \
\begin{enumerate}
\item {$\bm{D(f)}$:} The Deterministic (classical) query complexity $D(f)$ of a Boolean function $f$ 
is the minimum number number of queries any classical algorithm must make 
to evaluate the function correctly for any possible input.

\item {$\bm{Q_E(f)}$:} The exact quantum query complexity $Q_E(f)$ of a Boolean function $f$ is the minimum 
number of queries any quantum algorithm must make to evaluate the function correctly for
any possible input. 

\item {$\bm{D_{\op}(f)}$ and $\bm{\DP(f)}$:} 
We define the generalized parity decision tree complexity $D_{\op}(f)$ of a function $f$ as 
the minimum number of queries any algorithm must make where the algorithm can 
obtain any parity $\op_{i \in S} x_i$ in a single query where $S$ is any subset of 
$[n]=\{1, \ldots, n\}$. 
If we restrict $\abs{S}=2$ then the algorithm is a parity decision tree, which is 
the most well known exact quantum query algorithm and we denote by $\DP(f)$ 
the minimum number of queries any parity decision tree needs to make to evaluate $f$. 
Consequently $D_{\op}(f) \leq \DP(f)$.

\item {$\bm{\Q(f)}$ and $\bm{\QC(f)}$:}
For any Boolean function $f$, we denote by $\Q(f)$ the exact quantum query algorithm 
designed in this paper to evaluate the function. 
The number of queries required by $\Q(f)$ is denoted with 
$\QC(f)$, which we denote as the query complexity of the
exact quantum query algorithm $\Q(f)$.
We call an algorithm $\Q(f)$ optimal if we have $\QC(f)=Q_E(f)$.
Here it should be clearly noted that $\Q(f)$ is an exact quantum query algorithm we design 
whereas $\QC(f)$ is a number, which is the query complexity of $\Q(f)$. 
\end{enumerate}
\end{definition}
We have the following relations between the aforementioned 
quantities.
\begin{fact}
For any Boolean function $f$ we have
\begin{multicols}{2}
\begin{enumerate}
\item $Q_E(f) \leq \DP(f) \leq D(f)$.
\columnbreak
\item $ D_{\op}(f) \leq \DP(f) \leq D(f)$.
\end{enumerate} 
\end{multicols}
\end{fact}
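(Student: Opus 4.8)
My plan is to read both displayed chains as statements about the \emph{relative power of the four query models}, supplemented by a single quantum primitive. The only genuinely non-trivial inequality is $Q_E(f)\le\DP(f)$; the inequalities $\DP(f)\le D(f)$ and $D_{\op}(f)\le\DP(f)$ follow from the observation that each model in the chain can simulate the preceding one query-for-query, so an optimal algorithm in the weaker model is a legal (possibly suboptimal) algorithm in the stronger model.

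First I would dispose of the containments. Under the standard convention that a parity decision tree may query $\op_{i\in S}x_i$ with $\abs{S}\le 2$, a deterministic decision tree, whose every node reads a single variable $x_i$, is itself a parity decision tree of the same depth, which gives $\DP(f)\le D(f)$. Likewise, a query with $\abs{S}=2$ is the special case $S\subseteq[n]$ of a generalized parity query, so any parity decision tree is a legal generalized-parity decision tree, which gives $D_{\op}(f)\le\DP(f)$. These two observations establish every inequality of the Fact except $Q_E(f)\le\DP(f)$.

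The crux is therefore $Q_E(f)\le\DP(f)$, which I would prove by exhibiting an exact quantum simulation of an optimal parity decision tree $T$ of depth $\DP(f)$. The one primitive needed is the \emph{exact, single-query} evaluation of $x_i\op x_j$ via phase kick-back. Preparing the index register in $\ff(\ket{i}+\ket{j})$ and the target in $\ket{-}=\ff(\ket{0}-\ket{1})$ and applying $O_x$ produces
\begin{equation}
\ff\bigl((-1)^{x_i}\ket{i}+(-1)^{x_j}\ket{j}\bigr)\otimes\ket{-},
\end{equation}
whose relative phase encodes $x_i\op x_j$; measuring the index register in the basis $\{\ff(\ket{i}+\ket{j}),\,\ff(\ket{i}-\ket{j})\}$ returns $x_i\op x_j$ with certainty. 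The quantum algorithm then walks $T$ from root to leaf: at each node it evaluates the associated two-bit parity with one such query, branches exactly as $T$ prescribes on the deterministic outcome, and outputs the leaf label. Since each node costs exactly one query and the path length is at most $\DP(f)$, the simulation is exact and uses at most $\DP(f)$ queries, which proves $Q_E(f)\le\DP(f)$.

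The step I expect to require the most care is the exactness of this simulation along a root-to-leaf path: I must argue that each parity is recovered \emph{deterministically}, so that the branching of $T$ is reproduced without error and without amplitude leaking between branches, and that re-initialising the workspace between nodes lets the single-query parity subroutine be composed $\DP(f)$ times without inflating the query count. The only other delicate point, which is purely a matter of convention, is to fix consistently that reading an individual variable $x_i$ counts as one query in each of $D$, $\DP$, $D_{\op}$ and $Q_E$; once this is settled, both chains of the Fact follow.
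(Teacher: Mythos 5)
Your proposal is correct; the paper states this Fact without any proof, treating it as standard, and your argument --- the query-for-query simulation containments $\DP(f)\le D(f)$ and $D_{\op}(f)\le\DP(f)$, plus the exact one-query evaluation of a two-bit parity by phase kickback to simulate an optimal parity tree node by node --- is precisely the standard justification the paper implicitly invokes. One remark: your switch to the convention $\abs{S}\le 2$ is not merely cosmetic but necessary, since under the paper's literal ``$\abs{S}=2$'' wording the inequality $\DP(f)\le D(f)$ would fail (for instance $f=x_1$ is not determined by pairwise parities, which are invariant under complementing all input bits), so flagging and fixing that convention is the right call.
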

One should note here that unlike the situation of $Q_E(f) \leq \DP(f)$, there is no strict relationship between  $D_{\op}(f)$ and $Q_E(f)$. 
In fact for the simple parity function on $n$ variables, we have $Q_E(f)=\lceil \frac{n}{2} \rceil$ whereas it only takes a single 
query in the generalized parity decision tree model by definition, making $D_{\op}(f)=1$. Let us now lay out the organization and 
contribution of this paper.

\subsection{Organization \& Contribution}
In this paper we discuss the exact quantum query complexity of two classes of non-symmetric functions
which we analyze based on their $\F$ polynomial structure. We attempt to obtain $Q_E(f)$, $D_{\op}(f)$ and $D(f)$ of the functions 
and identify the situations with $\QC(f)=Q_E(f) < D_{\op}(f) \leq D(f)$. 
The motivation for this is threefold. 
\begin{enumerate}
\item To design non-trivial optimal exact quantum algorithms $\Q(f)$ for non-symmetric functions.
\item To identify situations where $\QC(f) < \DP(f)$. In fact, we discover classes for which $\QC(f) < D_{\op}(f)$. 
\item Design a class of algorithmic techniques, so that it can outperform parity decision tree method for a large number 
of functions for any $n$. 
\end{enumerate}
We summarize the list of our results in Table~\ref{tab:2}.
\begin{table}[H]
\centering
\normalsize
\begin{tabular}{| c | c | c | c | c | c | c |}
\hline
\makecell{ Functions} & Size & $Q_E(f)$ & $\QC(f)$ & $D_{\op}(f)$ &  
$\DP(f)$  & $D(f)$
\\ \hline 

\makecell{ The classes \\ $\sf{pdsp}(n,\frt,t+1)$, \\ $1 \leq t \leq \fl$} &
$\Omega \left( \sqrt{2^{\sqrt{n}}} \right)$ & 
$\flt$ & $\flt$ & $n-t$ & $ n-\lfloor \frac{t}{2} \rfloor$ & $n$
\\ \hline

\makecell{ A subclass of \\  MM type \\ Bent functions} & $\Omega((2^{\fl}!)^22^{2^{\fl}})$
& $\geq \lceil \frac{n}{2} \rceil$ & $\lceil \frac{5n}{8} \rceil$ & $\lceil \frac{n}{2} \rceil +1$
& $\leq \frt$ & $n$ \\ \hline
\end{tabular}
\caption{Advantage achieved by Query Algorithms}
\label{tab:2}
\end{table}

Let us now discuss these results in more details. We have worked with two Boolean function classes, 
the $\sf pdsp$ class and Maiorana-McFarland (MM) type bent functions.
In this direction, Section~\ref{sec:prelim0} explains these two important classes. 
As well, in Section~\ref{sec:prelim}, we describe the different unitary operations needed towards building the quantum algorithms. 

Section~\ref{sec:2} is on the $\sf pdsp$ class of functions. We build various algorithmic techniques needed 
towards proving Theorem~\ref{th:main}, which is the main contribution of this paper. First we explain the methodologies to obtain 
$D_{\op}(f)$ and $D(f)$ in Section~\ref{sec:2-1}. Then, in Section~\ref{sec:2-2}, we describe the oracle in the quantum query model 
and the registers on which quantum query algorithms are designed. Section~\ref{sec:2-3} presents the algorithmic techniques
to design the family of exact quantum algorithms and how these techniques are modified when dealing with different functions in question 
leading to the $\sf pdsp$ class. Finally, for the function $f$ in $\sf{pdsp}(n,\frt,t+1)$ class, we have 
obtained the following result in Theorem~\ref{th:main}.
\begin{enumerate}

\item 
We first observe $D(f)=n$ using the real polynomial representation of $f$.

\item 
Then we obtain that the generalized parity decision tree complexity  $D_{\op}(f)$ 
is $n-t$, which we derive using the concept of Granularity in the work~\cite{PAR} 
and results we obtain in Lemma~\ref{th:par}. 
In this direction it is easy to see that $\DP(f)$ is at most $n-\lfloor \frac{t}{2} \rfloor$.
Thus we have $n-t \leq  \DP(f) \leq n- \lfloor \frac{t}{2} \rfloor$ as we know $D_{\op}(f) \leq \DP(f)$. 

\item 
Next we observe that $Q_E(f) \geq \flt$ by reducing $f$ to the $\textrm{AND}_{\flt}$ function.
Finally we design an algorithm that reaches this query complexity using the untangling protocol described in Theorem~\ref{lemma:1}.
Thus, for $0 \leq t \leq \fl$ this is more efficient than any generalized parity decision tree method.

\end{enumerate}
We conclude with Corollary~\ref{cor:3} showing that there are $\Omega\left( \sqrt{2^{\sqrt{n}}} \right)$ such functions, 
for which this separation is achieved.

Next we discuss a subclass of the MM type bent functions. 
Section~\ref{sec:4} describes the results related to the MM bent functions borrowing certain ideas 
from Section~\ref{sec:2}. We first study the relation between $D(f), \DP(f), D_{\op}(f)$ and $Q_E(f)$ on a small number of variables, 
and then get into the generic discussion. In fact the study for $n = 4, 6$ had been our initial study that directed us to explore the
area in this manner. The effort starts with this subclass and we finally estimated that 
it contains a number of functions that is doubly exponential in $n$, as explained in Section~\ref{mm:num}. 
However, while we design an algorithm in Theorem~\ref{th:4} that evaluates any function in this class by making a total 
of $\lceil \frac{5n}{8} \rceil$ 
functions, our results are incomplete from two directions. 
\begin{itemize}
\item Firstly, we cannot obtain $Q_E(f)$ of the functions in these class. All we know is 
$\frac{n}{2} \leq Q_E(f) \leq \lceil \frac{5n}{8} \rceil$.
\item On the other hand, as described in~\cite{parity} we have $\DP(f) \leq \flt$, but we do not have any relevant lower bound 
on this measure. Furthermore, it is easy to see that $D_{\op}(f)$ is $\lceil \frac{n}{2} \rceil +1$ and thus even linear 
separation between $Q_E(f)$ and $D_{\op}(f)$ is not possible. Coming up with tight bounds for $Q_E(f)$ and $\DP(f)$ remains
the main open problem related to this MM class.
\end{itemize}
We conclude the paper in Section~\ref{sec:5}. 

Having discussed the function related results, we now talk about another important point related to our contribution. Let us
explain our novel algorithmic idea based on untangling of qubits that allows us to finally design algorithm whose complexity 
touches $Q_E(f)$ for the $\sf pdsp$ class.

\subsubsection*{Novel Algorithmic Techniques}
The exact quantum query algorithms for the symmetric functions such as Threshold or Exact as described in Table~\ref{tab:1} are designed 
by creating an equal superposition of all possible input states. Then some properties of the symmetric functions are exploited to obtain the 
desired result. In this paper we design a general class of algorithms that works in a completely different way. Informally, our algorithms 
are based on treating the function in $f$ in question as the direct sum of two functions $g$ and $h$, i.e., 
$f(\xv)=g(\hx) \op h(\tx)$ where the variables in $\xv$ are divided (partitioned) into two disjoint subsets $\hx$ and $\tx$. 
Then $(-1)^{g(\hx)}$ and $(-1)^{h(\tx)}$ are evaluated as relative phases at parallel using the superposition property of quantum 
computation which in its course entangles the system. Next we design a novel way of un-entangling the system described in 
Theorem~\ref{lemma:1}. We effectively negate the entangling due to $4$ variables using a single query in the functions we choose, 
which is driven by the fact that $\hx$ and $\tx$ do not share any variable. Repeated use of this technique is central to the 
separations we achieve.

\subsection{Boolean Function Classes}
\label{sec:prelim0}
\begin{definition}[The pdsp class]
\label{def:2}
First we define a perfect direct sum function to be a function on $n$ variables
such that all the variables $x_i, 1 \leq i \leq n$ appear only once in the function's
unique algebraic normal form ($\mathbb{F}_2$ polynomial).

Then a function $f$ is said to belong to the class ${\sf pdsp}(n,l,q)$
if the variable space $\xv=(x_1,x_2, \ldots, x_n)$ consisting of $n$ variables
can be divided into the two subspaces $\hx=(x_{r_1},x_{r_2},\ldots, x_{r_l})$ and 
$\tx=(x_{r_{l+1}},x_{r_{l+2}}, \ldots ,x_{r_n})$ containing $l$ and $n-l$ variables 
respectively so that
\begin{enumerate}
\item $f(\xv)= f_1(\hx)f_2(\tx)$. 
\item $f_1$ is a perfect direct sum function defined on the $l$ variables 
$\hx$, which consists of $q$ monomials such that each monomial consists 
of at least $q$ variables. 
\item $f_2$ is the product function of the $n-l$ variables in $\tx$. 
That is, $f_2(\tx)=\prod\limits_{i=l+1}^n x_{r_i}$.
If $l=n$ then $f_2$ function is not defined. 
\end{enumerate}
\end{definition}
We observe that these functions have high granularity as well as their $\F$ polynomial structure agrees with our exact quantum query algorithm, and 
this leads us to Theorem~\ref{th:main} which provides the provable separations noted in Table~\ref{tab:2}. 
To see that $f$ (as in Theorem~\ref{th:main}) is indeed a function in ${\sf pdsp}(n,\frt,t+1)$,
it suffices if we define 
$\hx= \left(x_1, \ldots ,x_{\frac{n}{2}}, x_{\flt+1}, \ldots x_n \right)$ and 
$\tx=\left( x_{\frac{n}{2}+1}, \ldots x_{\flt} \right)$. Then We have 
$f_1(\hx)=\prod_{i=1}^{\frac{n}{2}} x_i \bigoplus g(\xv')$
and $f_2(\tx)=\prod_{j=\frac{n}{2}+1}^{\flt} x_j$, and thus $f_1$ 
is a perfect direct sum function containing $t+1$ monomials so that the degree of each monomial is at least $t+1$. 
Considering $g(\xv')=\prod_{k=\flt+1}^n x_k$, we get the function 
 $f(\xv)= \prod_{i=1}^{\flt} x_i \op \prod_{j=\frac{n}{2}+1}^n x_j$ for which we obtain the desirable separations.

\begin{definition}[MM type functions]
For any two positive integers $n_1$ and $n_2$ with $n_1+n_2=n$,
An MM Boolean function on $\mathbb{F}_2^n$ is defined as
$$f(\hx,\tx)=  \phi(\hx) \cdot \tx \op g(\hx)$$
where the subspaces are
$\hat{x} \in \mathbb{F}_2^{n_1}$, $\tilde{x} \in \mathbb{F}_2^{n_2}$,
$g$ is any Boolean function defined on $F_2^{n_1}$
and $\phi$ is a map of the form $\phi: \mathbb{F}_2^{n_1} \rightarrow \mathbb{F}_2^{n_2}$.
\end{definition}
Here $a \cdot b$ is the dot product of two $n_2$ dimensional vectors, defined as 
$a \cdot b=a_1b_1 \op a_2b_2 \ldots a_{n_2}b_{n_2}$.
If we set $n_1=n_2$ and restrict $\phi$ to be a bijective map, then all resultant 
Boolean functions are bent. These are the functions with highest possible nonlinearity for any even $n$ \cite{bent1}. 
 
Now consider the quadratic bent functions on $4$ variables, $f_{id}^4(x) = x_1x_3 \op x_2x_4$ 
and on $6$ variables, $f_{id}^6(x)=x_1x_4 \op x_2x_5 \op x_3x_6$. 
We observed that for any MM type bent function there is a parity decision tree of query complexity $\frt$.  
For the case of $f_{id}^4$, this strategy proved to be optimal, as verified through semi-definite programming 
formulation of \cite{sdp}. 
However for $f_{id}^6$ we note that $Q_E(f_{id}^6)=4$, whereas the parity decision technique has a query 
complexity of $5$. This motivated us to study how to design an exact quantum query algorithm $\Q$ based on the $\F$ polynomial 
structure of these functions, so that it matches the $Q_E$ value for $f_{id}^6$, and then generalized it to it a novel untangling based algorithmic technique that we have discussed in
Theorem~\ref{lemma:1} which has given us the efficient quantum query algorithms for the said $\sf pdsp$ classes. 

\subsection{Some Unitary Matrices}
\label{sec:prelim}
\subsubsection*{$\sg{n}{i}{j}$}
This operation is also defined on an $n+1$ dimensional register. 
It only performs the transformation 
$~~ \ket{\mb{0}} \xrightarrow{\sg{n}{i}{j}} \ff ( \ket{\mb{i}} +\ket{\mb{j}} )$.

The corresponding matrix can fairly simply be defined as 
$\sg{n}{i}{j}=\big( \parg{n}{i}{j} \big)^*$.
Below are examples of the matrices corresponding to operations of type $\pg{n}{i}$, $\parg{n}{i}{j}$ and
$\sg{n}{i}{j}$.
\begin{multicols}{3}

$$\pg{4}{3}=
\begin{bmatrix}
1 &  0 & 0 & 0 & 0 \\
0 &  0 & 0 & 1 & 0 \\
0 &  0 & 1 & 0 & 0 \\
0 &  1 & 0 & 0 & 0 \\
0 &  0 & 0 & 0 & 1 
\end{bmatrix}
$$
\vfill\null
\columnbreak
$$\parg{4}{1}{3}=
\begin{bmatrix}
0   & \ff & 0 & \ff  & 0 \\
0   & \ff & 0 & -\ff & 0 \\
1   & 0   & 0 & 0    & 0 \\
0   & 0   & 1 & 0    & 0 \\
0   & 0   & 0 & 0    & 1 
\end{bmatrix}
$$

\vfill\null
\columnbreak
$$\sg{4}{0}{1}=
\begin{bmatrix}
\ff & \ff & 0 & 0 & 0 \\
\ff &-\ff & 0 & 0 & 0 \\
0   &   0 & 1 & 0 & 0 \\
0   &   0 & 0 & 1 & 0 \\
0   &   0 & 0 & 0 & 1 
\end{bmatrix}
$$
\vfill\null

\end{multicols}

To avoid confusion, it should be noted that in a Quantum
Computer built with qubits, a unitary operator working
on $z$ qubits is has a dimension of $2^z$.
In the following remark we state how a general $n$ dimensional
register can be implemented in such a setup.

\begin{remark}
To implement an unitary operator  $U$ on an $n+1$ dimensional register, it would require $\lceil \log(n+1) \rceil$ qubits and
thus the corresponding unitary matrix 
$U'$ being applied on these qubits 
would actually be $2^{\lceil \log(n+1) \rceil}$ dimensional.
The matrix  $U'$ can be formed by adding
$2^{\lceil \log(n+1) \rceil}-(n+1)$ rows and columns
to $U$, such that entries in the
rows and columns corresponding to the basis states $\ket{i}, i \in\{ n+1, 
\ldots ,\lceil \log(n+1) \rceil-1 \}$
would simply be $U'(i,i)=1$.
\end{remark}
For example, given the matrix $\pg{2}{2}$,
the matrix $\pg{2}{2}'$ that would be implemented in a $2^{\lceil \log(3+1) \rceil}=4$ dimensional system built of $2$ qubits is as follows:

\begin{multicols}{3}

$$\pg{2}{2}=
\begin{bmatrix}
1 &  0 & 0 \\
0 &  0 & 1 \\
0 &  1 & 0  
\end{bmatrix}
$$ 

\columnbreak

$$\pg{2}{2}'=
\begin{bmatrix}
1 &  0 & 0 & 0 \\
0 &  0 & 1 & 0 \\
0 &  1 & 0 & 0 \\
0 &  0 & 0 & 1 
\end{bmatrix}
$$

\columnbreak

\end{multicols}

\subsubsection*{$\cc U$ and $\cd U$}

The algorithm we design uses unitary matrices that are controlled
on the state of a work qubit $w_1$.
At each step we apply a set of unitaries controlled on $w_1=\ket{0}$
and another set controlled on $w_1=\ket{1}$.
Given any unitary $U$, we denote by $\cc U$ the operation that is 
$U$ controlled on $w_1=\ket{0}$.
We use the notation $\cd U$ to denote the operation that is 
$U$ controlled on $w_1=\ket{1}$.
It is easy to see that if $U$ is a unitary operation, then so is
$\cc U$ and $\cd U$.

\subsubsection*{$\cn{a}{b}$}

This operation is the Controlled-NOT operation from register $a$ to register $b$,
with register $a$ as control and register $b$ as target.
Here either one of the registers is the query
register, or else both the registers are work qubits.
Let us suppose register $a$ is the query register,
then the transformation will be denoted by
\begin{itemize}
\item $ \ket{\mb{1}}_a\ket{0}_b \rightarrow \ket{\mb{1}}_a\ket{1}_b$
\item $ \ket{\mb{1}}_a\ket{1}_b \rightarrow \ket{\mb{1}}_a\ket{0}_b$
\end{itemize}

If both the registers are qubits, then it works as the conventional C-NOT operation.
If $a$ and $b$ are both qubits, then it is a $4$ dimensional
unitary operation, otherwise it is a $2(n+1)$ dimensional
operation.

\subsubsection*{$swap(a,b)$}

This operation is simply defined as 
$swap(a,b)=\cn{a}{b}~\cn{b}{a}~\cn{a}{b}$, and swaps the value of two registers $a$ and $b$, 
with dimensions $d_1$ and $d_2$, so that it is defined on the 
computational basis states $\ket{i} \otimes \ket{j}: i,j \leq \min (d_1,d_2)$.

\section{Results on the $\sf pdsp$ class of functions} 
\label{sec:2}
In this section we build till Theorem~\ref{th:main},
by obtaining different query complexity measures, describing the properties of 
Boolean functions and the algorithmic techniques designed
for the $\sf pdsp$ class of functions.

First we obtain the deterministic query complexity $D(f)$ and 
generalized parity query complexity $D_{\op}(f)$ of this class of functions.

\subsection{$D_{\op}(f)=n-t$ and $D(f)=n$ for the $\sf pdsp$ class}
\label{sec:2-1}
We first obtain the deterministic query complexity of the functions 
defined in Definition~\ref{def:2} by analyzing the polynomial degree of the function.
Let $\pdeg(f)$ be the degree of the unique real multi-linear
polynomial $p: \F^n \rightarrow \mathbb{R}$ such that $f(\xv)=p(\xv)~ \forall \xv \in \F^n$.
From~\cite{polypoly}, we know $D(f) \geq \pdeg(f)$. Then we have the following result.
\begin{theorem}
\label{th:df}
For any function $f \in {\sf pdsp}(n,l,q)$ we have $\pdeg(f)=n$.  
\end{theorem}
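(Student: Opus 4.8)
The plan is to show that the top-degree coefficient—the coefficient of the monomial $\prod_{i=1}^n x_i$—in the unique real multilinear polynomial $p$ representing $f$ is nonzero, which immediately forces $\pdeg(f)=n$, since $p$ is multilinear in $n$ variables and hence has degree at most $n$ with equality exactly when this top coefficient survives.

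First I would exploit the product structure $f(\xv)=f_1(\hx)f_2(\tx)$ over the disjoint variable sets $\hx$ and $\tx$. Let $p_1$ and $p_2$ be the real multilinear representations of the Boolean functions $f_1$ and $f_2$ respectively. Since the two variable sets are disjoint, $p_1p_2$ is itself multilinear and agrees with $f$ at every point of $\F^n$; by uniqueness of the multilinear representation, $p=p_1p_2$. Consequently the coefficient of $\prod_{i=1}^n x_i$ in $p$ equals the product of the top-degree coefficients of $p_1$ (on its $l$ variables) and of $p_2$ (on its $n-l$ variables), so it suffices to show each of these is nonzero. For $f_2=\prod_{i=l+1}^n x_{r_i}$ the real representation is literally this product, whose top coefficient is $1$.

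The substance is therefore computing the top coefficient of $p_1$. Writing $f_1=m_1\op\cdots\op m_q$, where each $m_j=\prod_{i\in A_j}x_i$ is a monomial and the index sets $A_j$ partition the $l$ variables of $\hx$ (this is exactly the perfect direct sum property—every variable appears in precisely one monomial), I would arithmetize the XOR of Boolean quantities via
\begin{equation*}
f_1=\tfrac12\Bigl(1-\prod_{j=1}^q(1-2m_j)\Bigr),
\end{equation*}
and substitute $m_j=\prod_{i\in A_j}x_i$. On expanding the product, the only contribution to the full monomial $\prod_{i\in[l]}x_i$ comes from selecting the $-2m_j$ term from every factor, because any proper sub-selection covers only a proper subset of the variables and hence yields strictly smaller degree; disjointness of the $A_j$ guarantees no reduction through $x_i^2=x_i$ and therefore no cancellation. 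This gives the top coefficient $-\tfrac12(-2)^q=(-1)^{q+1}2^{q-1}$, which is nonzero for $q\ge1$, so $\pdeg(f_1)=l$.

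Combining the two factors, the coefficient of $\prod_{i=1}^n x_i$ in $p$ is $(-1)^{q+1}2^{q-1}\neq0$, whence $\pdeg(f)=l+(n-l)=n$; the case $l=n$ (where $f=f_1$ and $f_2$ is absent) is handled by the same computation. The one step that genuinely requires care is the no-cancellation claim in the XOR expansion, and it is precisely there that the disjointness of the monomials in the perfect direct sum is indispensable—this is the place I would write out in full. I would also remark that the hypothesis that each monomial contains at least $q$ variables plays no role in this degree computation and is needed only for the later query-complexity bounds.
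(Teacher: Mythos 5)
Your proof is correct, and it is worth noting at the outset that the paper as given states Theorem~\ref{th:df} without any proof at all, deferring details to the extended arXiv version~\cite{self}; so there is no in-text argument to compare yours against line by line. Your route --- factor $p=p_1p_2$ by uniqueness of the multilinear representation over the disjoint variable sets $\hx$ and $\tx$, then arithmetize the XOR as $f_1=\tfrac12\bigl(1-\prod_{j=1}^q(1-2m_j)\bigr)$ and read off the top coefficient $(-1)^{q+1}2^{q-1}\neq 0$ --- is the natural one, and each step checks out: disjointness of the $A_j$ means every term $(-2)^{\abs{S}}\prod_{j\in S}m_j$ in the expansion is already multilinear, and only $S=[q]$ covers all $l$ variables (each $A_j$ is nonempty), so the top coefficient of $p_1$ is exactly $-\tfrac12(-2)^q$ with no cancellation possible; the coefficient of $\prod_{i=1}^n x_i$ in $p$ is then this value times $1$ from $p_2$. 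One small remark: for the top coefficient alone, the uniqueness of the covering selection $S=[q]$ already suffices, so the ``no reduction through $x_i^2=x_i$'' point, while true, is doing less work than your emphasis suggests. Your observation that the hypothesis ``each monomial has at least $q$ variables'' is irrelevant to the degree computation is also accurate --- in the paper that hypothesis is consumed only by the granularity bound of Lemma~\ref{th:par}, whose proof, incidentally, runs a closely parallel computation: there the same sign structure $(-1)^{q-1}$ and the same $2^{q-1}$ surface in the support-size count $\alpha 2^q+(-1)^{q-1}2^{q-1}$, which is essentially the Fourier-side shadow of your top-coefficient calculation. So your argument both fills the omitted proof and is consistent in spirit with the counting technique the paper actually uses for its $D_{\op}$ lower bound.
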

Now let us discuss the generalized parity decision tree complexity $D_{\op}(f)$
for these functions. Naturally this also works as lower bound on the parity decision
tree complexity $\DP(f)$. We first obtain lower bounds on $D_{\op}(f)$
by analyzing the granularity of the functions as described in~\cite{PAR}
and then show that these bounds are in fact tight for $\sf pdsp$ class.
Let us first define the notion of granularity and the result.
\begin{definition}\cite{PAR}
\label{def:gran}
\begin{enumerate}
\item For a set $S \subseteq [n]$ the Fourier Character 
at $S$ is defined as 
$\chi_S(\xv)=\displaystyle \prod_{i \in S} (-1)^{x_i}$.

\item Given a function $f$ on $n$ variables its Fourier coefficient 
on the $S \subseteq [n]$ is denoted as 
$\hat{f}(S)= \frac{ \displaystyle \sum_{\xv \in \{0,1\}^n} \left( (-1)^{f(\xv)}\chi_S(\xv) \right) }{2^n}$.~~

\item The granularity of a rational number $t$ is defined as $gran(f)=k$ 
where $k$ is the smallest power of two such that $t \times k$ is an integer. 
\end{enumerate}
\end{definition}
Having noted the notations used in \cite{PAR}, we now refer to the following result.
\begin{theorem}
\cite{PAR}
\label{th:par2}
The general parity complexity of a Boolean function $f$ on $n$ variables is lower
bounded by $D_{\op}(f) \geq gran_m(f)+1$ 
where $gran_m(f)= \displaystyle \max_{S \subseteq [n]} gran(\hat{f}(S))$.
\end{theorem}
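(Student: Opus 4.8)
The plan is to start from an optimal generalized parity decision tree and translate its combinatorial structure into a divisibility statement about the Fourier coefficients $\hat{f}(S)$. Let $d=D_{\op}(f)$ and fix a generalized parity decision tree $\mathcal T$ of depth $d$ computing $f$. Without loss of generality I may assume $\mathcal T$ has no redundant query: repeatedly replace any internal node whose two children are equal‑valued leaves by a single leaf; this never increases the depth, so by minimality of $d$ the pruned tree still has depth exactly $d$, and every pair of sibling leaves now carries opposite output values. Each leaf $\ell$ of $\mathcal T$ is reached by fixing the outcomes of the parity queries along its path, so the set of inputs reaching $\ell$ is an affine subspace (coset) $A_\ell\subseteq\F^n$ whose codimension $c_\ell$ equals the number of independent queries on that path, and in particular $c_\ell\le d$. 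Since $f$ is constant on each $A_\ell$, writing $\epsilon_\ell=(-1)^{f|_{A_\ell}}\in\{+1,-1\}$ gives the pointwise identity $(-1)^{f(\xv)}=\sum_\ell \epsilon_\ell\,\mathbf 1_{A_\ell}(\xv)$.

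Next I would compute the Fourier spectrum of an affine‑subspace indicator. If $A$ has codimension $c$, then $A=\{\,\xv:\langle m_i,\xv\rangle=b_i,\ 1\le i\le c\,\}$ for independent $m_1,\dots,m_c$, and expanding $\mathbf 1_A(\xv)=\prod_{i=1}^{c}\tfrac12\big(1+(-1)^{\langle m_i,\xv\rangle+b_i}\big)$ shows that $\widehat{\mathbf 1_A}(S)\in\{0,\pm 2^{-c}\}$, nonzero only when $S$ lies in the span of the $m_i$. Substituting into the identity above yields $\hat f(S)=\sum_\ell \epsilon_\ell\,\widehat{\mathbf 1_{A_\ell}}(S)$, and since every $c_\ell\le d$, each summand is an integer multiple of $2^{-d}$. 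This already proves $2^{d}\hat f(S)\in\mathbb Z$ for every $S$, i.e. $D_{\op}(f)\ge gran_m(f)$; the whole difficulty is to save the extra unit.

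To obtain the $+1$ I would treat the deepest leaves in sibling pairs. A leaf at depth $d$ has a parent $v$ at depth $d-1$ whose two children are both leaves (an internal child would force depth $>d$), so the coset $A_v$ has codimension $d-1$, and the query at $v$ is some parity $\chi_T$ splitting $A_v$ into cosets $A_{v,0},A_{v,1}$. By the no‑redundancy assumption the two child leaves carry opposite values, so their contribution combines as $\epsilon_0\big(\widehat{\mathbf 1_{A_{v,0}}}(S)-\widehat{\mathbf 1_{A_{v,1}}}(S)\big)$. Using the identity $\mathbf 1_{A_{v,0}}-\mathbf 1_{A_{v,1}}=\pm\,\chi_T\,\mathbf 1_{A_v}$ together with the shift rule $\widehat{\chi_T g}(S)=\hat g(S\op T)$, this combined contribution equals $\pm\,\epsilon_0\,\widehat{\mathbf 1_{A_v}}(S\op T)$, which is an integer multiple of $2^{-(d-1)}$ because $A_v$ has codimension $d-1$. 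Every remaining leaf sits at depth at most $d-1$ and hence already contributes a multiple of $2^{-(d-1)}$. Summing, $2^{d-1}\hat f(S)\in\mathbb Z$ for all $S$, so $gran_m(f)\le d-1$, and therefore $D_{\op}(f)=d\ge gran_m(f)+1$.

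The main obstacle is exactly this last step: the crude codimension bound is off by one, and closing the gap requires both the optimality‑driven pruning (to guarantee opposite values at sibling leaves) and the algebraic identity $\mathbf 1_{A_{v,0}}-\mathbf 1_{A_{v,1}}=\pm\chi_T\mathbf 1_{A_v}$, which lets a pair of codimension‑$d$ leaves behave, for divisibility purposes, like a single codimension‑$(d-1)$ coset. Here granularity is read through its exponent, so that $gran(t)\le k$ is the assertion $2^{k}t\in\mathbb Z$; with that reading the divisibility $2^{d-1}\hat f(S)\in\mathbb Z$ is precisely $gran_m(f)\le d-1$, giving the claimed bound.
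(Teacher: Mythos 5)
Your proposal is correct, but there is nothing in the paper to compare it against: Theorem~\ref{th:par2} is imported from~\cite{PAR} and the paper gives no proof of it, using the bound only as a black box inside Lemma~\ref{th:par}. Taken on its own terms, your argument is sound. The leaf decomposition $(-1)^{f(\xv)}=\sum_\ell \epsilon_\ell \mathbf{1}_{A_\ell}(\xv)$, the spectrum $\widehat{\mathbf{1}_A}(S)\in\{0,\pm 2^{-c}\}$ for a codimension-$c$ coset, and the sibling-pair identity $\mathbf{1}_{A_{v,0}}-\mathbf{1}_{A_{v,1}}=\pm\chi_T\mathbf{1}_{A_v}$ combined with the shift rule $\widehat{\chi_T g}(S)=\hat{g}(S\op T)$ all check out, and the bookkeeping (in a depth-$d$ tree every node at depth $d$ is a leaf, so depth-$d$ leaves come in sibling pairs, while every other leaf lies in a coset of codimension at most $d-1$) correctly yields $2^{d-1}\hat{f}(S)\in\mathbb{Z}$ for all $S$, hence $D_{\op}(f)\ge gran_m(f)+1$ once granularity is read through its exponent --- which is indeed the intended reading of the paper's garbled Definition~\ref{def:gran}. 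Two small points you should make explicit. First, degenerate queries: if the parity queried at a node $v$ is already a linear combination of the parities fixed along the path to $v$, then $\chi_T$ is constant on $A_v$ and one child coset is empty; your identity still holds (it degenerates to $\mathbf{1}_{A_{v,0}}-\mathbf{1}_{A_{v,1}}=\pm\mathbf{1}_{A_v}$, and equivalently the surviving leaf has codimension at most $d-1$), but as written you tacitly assume the query splits $A_v$, so a sentence covering this case is needed --- note also that your pruning step merges equal-labelled siblings even when one is empty, so no conflict arises there. Second, the theorem as stated is false for constant $f$ (then $D_{\op}(f)=0$ while $gran_m(f)=0$); \cite{PAR} states it for non-constant functions, and your proof uses non-constancy exactly where a depth-$d$ leaf is assumed to have a parent, i.e.\ $d\ge 1$. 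This hypothesis should be surfaced, since the paper's statement omits it as well.
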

We  observe that the bound due to Theorem~\ref{th:par2} 
is tight for any perfect direct sum function, and then also 
obtain that it is tight for the $\sf pdsp$ class of functions 
defined in Definition~\ref{def:2}.

Interestingly we also identify that the result for this class is even more specific,
in the sense that the granularity is maximum for $\hat{f}(\phi)$. That is, 
the lower bound on $D_{\op}(f)$ is solely based on the number of 
ones in the truth tables of the functions. 
In this regard we have the following result.
\begin{lemma}
\label{th:par}
Let $f$ be a function defined on $n$ variables such that
$f \in {\sf pdsp}(n,l,q)$.
Then we have $D_{\op}(f)=gran_m(f)+1= gran(\hat{f}(\{ \phi \})+1 =n-q+1$.
\end{lemma}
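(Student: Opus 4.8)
The plan is to establish the three equalities from two opposite directions: the lower bound $D_{\op}(f)\geq n-q+1$ will come from Theorem~\ref{th:par2} applied to the empty-set Fourier coefficient, and the matching upper bound $D_{\op}(f)\leq n-q+1$ will come from an explicit adaptive generalized parity decision tree. Along the way I also have to verify that the empty set is actually a maximizer of granularity, i.e. $gran_m(f)=gran(\hat f(\emptyset))=n-q$.

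First I would pin down $gran(\hat f(\emptyset))$. Writing $f=f_1 f_2$ with $f_1=\bigoplus_{i=1}^q m_i$ on the disjoint blocks $B_i$ (with $|B_i|=d_i\geq q$, so $l=\sum_i d_i\geq q^2$) and $f_2$ the product of the $n-l$ variables of $\tx$, the key observation is that $f=1$ forces $f_2=1$, which fixes all of $\tx$; hence the number of satisfying assignments of $f$ equals $N_1:=|\{\hx:f_1(\hx)=1\}|$ and $\hat f(\emptyset)=1-N_1/2^{n-1}$. A short counting argument over which monomials evaluate to $1$ gives the closed form $N_1=2^{l-1}-2^{q-1}\prod_{i=1}^q(2^{d_i-1}-1)$. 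Since each factor $2^{d_i-1}-1$ is odd (as $d_i\geq q\geq 2$; the case $q=1$ is the single monomial $\mathrm{AND}_l$, handled directly) and $l-1>q-1$, the $2$-adic valuation of $N_1$ is exactly $q-1$. Therefore the reduced denominator of $\hat f(\emptyset)=(2^{n-1}-N_1)/2^{n-1}$ is $2^{\,n-1-(q-1)}=2^{\,n-q}$, so $gran(\hat f(\emptyset))=n-q$ and Theorem~\ref{th:par2} yields $D_{\op}(f)\geq n-q+1$.

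Next I would show no coefficient beats this, i.e. $gran(\hat f(S))\leq n-q$ for every $S$. Writing $F_i=(-1)^{f_i}$ and using $(-1)^{f}=\tfrac12(1+F_1+F_2-F_1F_2)$ together with the fact that $f_1,f_2$ act on disjoint variable sets, the coefficient factorizes as
\[
\hat f(S)=\tfrac12\big([S=\emptyset]+\hat{f_1}(S_1)[S_2=\emptyset]+[S_1=\emptyset]\hat{f_2}(S_2)-\hat{f_1}(S_1)\hat{f_2}(S_2)\big),
\]
where $S_1=S\cap\hx$ and $S_2=S\cap\tx$. Every Fourier coefficient of the perfect direct sum $f_1$ has granularity exactly $l-q$ (it is a product over blocks of the $\mathrm{AND}_{d_i}$ coefficients $1-2^{1-d_i}$ or $\pm 2^{1-d_i}$, each of granularity $d_i-1$), and every coefficient of $f_2=\mathrm{AND}_{n-l}$ has granularity $(n-l)-1$. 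Feeding these into the four cases for $(S_1,S_2)$ shows every surviving term has granularity at most $n-q$, and since the granularity of a sum never exceeds the maximum granularity of its terms, $gran(\hat f(S))\leq n-q$; hence $gran_m(f)=n-q=gran(\hat f(\emptyset))$. I expect this granularity bookkeeping — in particular ruling out any accidental increase at the empty set through cancellation — to be the main obstacle, which is why I prefer to treat $\hat f(\emptyset)$ by the direct counting argument above rather than through the factorization.

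Finally, for the upper bound I would exhibit a generalized parity decision tree making $n-q+1$ queries. The tree first reads all $n-l$ variables of $\tx$; if any is $0$ then $f_2=0$ and $f=0$, otherwise $f=f_1$ and the $n-l$ queries on $\tx$ are spent. To finish I evaluate $f_1$ in $l-q+1$ queries: query every block $B_i$ except one reserved variable $a_i\in B_i$ (costing $\sum_i(d_i-1)=l-q$ queries); a block containing a queried $0$ has $m_i=0$, while a block that is all-ones-so-far has $m_i=x_{a_i}$, so $f_1=\bigoplus_{i\in I}x_{a_i}$ for the set $I$ of all-ones blocks, and this single parity is read in one last query. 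The total is $(n-l)+(l-q+1)=n-q+1$, matching the lower bound; combined with $gran_m(f)=n-q$ this gives $D_{\op}(f)=gran_m(f)+1=gran(\hat f(\emptyset))+1=n-q+1$, as claimed.
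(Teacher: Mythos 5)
Your proposal is correct, and it reaches the result by the same two-pronged skeleton as the paper (granularity of $\hat{f}(\emptyset)$ via Theorem~\ref{th:par2} for the lower bound, plus an explicit tree that queries all-but-one variable per monomial, one final parity $\bigoplus_i \tilde{m}_i x_{a_i}$, and the $n-l$ product variables for the upper bound — your tree is the paper's tree up to reordering). Where you genuinely diverge is in the central computation of $gran(\hat{f}(\emptyset))$: the paper counts satisfying inputs term by term, showing $\abs{x^{j}}=\alpha_j 2^q+(-1)^{q-1}\binom{q}{j}$ for the inputs where exactly $j$ monomials fire and summing over odd $j$ to get support size $\alpha 2^q+(-1)^{q-1}2^{q-1}$, whereas you exploit multiplicativity of $(-1)^{f_1}$ over the disjoint blocks to get the closed form $N_1=2^{l-1}-2^{q-1}\prod_{i=1}^q\left(2^{d_i-1}-1\right)$ and read off the $2$-adic valuation $q-1$ directly; your route is cleaner and makes the parity-of-$q$ case split in the paper's version unnecessary. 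The second difference is that you verify $gran_m(f)=gran(\hat{f}(\emptyset))$ head-on, via the factorization $(-1)^f=\tfrac12\left(1+F_1+F_2-F_1F_2\right)$ and granularity bookkeeping over all $S$; this is correct (modulo the trivial $l=n$ boundary case, where $f=f_1$ and every coefficient has granularity exactly $n-q$), but it is dispensable: as the paper implicitly does, the sandwich $n-q+1\leq gran(\hat{f}(\emptyset))+1\leq gran_m(f)+1\leq D_{\oplus}(f)\leq n-q+1$ already forces $gran_m(f)=n-q$ once you have the lower bound at $\emptyset$ and the explicit tree, with no spectral analysis of other coefficients. So your version buys a self-contained and sharper statement about the whole spectrum (and a tidier weight computation), at the cost of roughly a paragraph of bookkeeping the sandwich gives for free.
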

\begin{proof}
We prove this by first showing $gran(\hat{f}(\{ \phi \}) = n-q$ which implies $D_{\op}(f) \geq n-q+1$
and then describe a simple general parity decision tree with complexity $n-k+1$.
For simplicity let us assume $r_i=i$.

\paragraph*{\bf $gran_m(f) \geq n-k$}
The ANF of $f_1$ can be represented as a partition of $\hx=\{x_{r_1},x_{r_2},\ldots x_{r_l}\}$ 
into $q$ disjoint sets. We denote these sets as $m_i, 1 \leq i \leq q$,
where $m_i$ consists of $q_i$ variables.
Then $f$ can be written as 
$$f(\xv)=\left(\bigoplus_{i=1}^q \left( \prod\limits_{x_j \in m_i} x_j \right) \right) \prod\limits_{p=l+1}^n x_p.$$
\
We know that $\hat{f}(\{ \phi \})= \sum\limits_{\bm{a} \in \{0,1\}^n}(-1)^{f(\bm{a})}= 2^n- 2wt(f)$ where $wt(f)$ is the number of input points 
for which the function outputs $1$.

The output of $f_1$ for some input is $1$ if some odd number of these 
$q$ monomials are evaluated to $1$ and $x_{r_p}=1,~ l+1 \leq p \leq n$.
Let us denote by $x^j$ all inputs from $\{0,1\}^l$ for which $j$ of the said monomials evaluate to $1$.
If $j$ is odd then for each input in $x_j$ $f_1$ evaluates to $1$. 
For each such $a \in x^j$, there is only input  $a' \in \{0,1\}^n$ 
for which $f$ evaluates to $1$, where $a'=a||1_{n-l}$.
We can represent the number of ones in the truth table of $f$ as
$\sum\limits_{i=0}^{\lfloor \frac{q-1}{2} \rfloor} \abs{x^{2i+1}}.$
Then we have $\abs{x^1}= \sum\limits_{i=1}^q \left(  \prod\limits_{j \neq i} (2^{q_j}-1) \right)$ as
it consists of inputs for which exactly one monomial has all variables set to one, and because 
of the monomial disjoint nature of the function there is no repetition in the counting.
We can express $x^1$ as
\begin{itemize}

\item[] $\abs{x^1}=\alpha_1 2^q +q$ such that $\alpha_1$ is an integer, if $q$ is odd.

\item[] $\abs{x^1}=\alpha_1 2^q -q$ such that $\alpha_1$ is an integer, if $q$ is even.

\end{itemize}
This is because in expansion of $\abs{x^1}$ in each product term 
we have a $(-1)^{q-1}$ ( $-1$ if $k$ is even $+1$ otherwise ) 
and all other terms are of the form $\pm 2^{q_{i_1} + q_{i_2} \ldots + q_{i_j}}$.
since $q_i \geq q~\forall i$, all these terms are integer multiple of $2^q$,
and thus their sum is also an integer multiple of $2^q$, or zero.
Now since each product term has a $(-1)^{q-1}$ and
there are $q$ terms in the expansion can be written as some $\alpha_1 2^q + (-1)^{q-1} q$.
Similarly, $x^i$ can be expressed as $\alpha_i2^q + (-1)^{q-1} {q \choose i}$
and therefore the support set of $f$ is of the size
\
$\displaystyle \sum_{i=0}^{\lfloor \frac{q-1}{2} \rfloor} \left( \alpha_{2i+1}2^q +(-1)^{q-1}{q \choose {2i+1}} \right)
=\alpha2^q + (-1)^{q-1} 2^{q-1}$.
\
Therefore the Fourier coefficient of the function at $S=\{\phi \}$ is 
\
\begin{align*}
\widehat{f}(\{\phi \})=\frac{2^n-2\left(\alpha2^q + (-1)^{q-1} 2^{q-1}  \right)}{2^n}
= \frac{2^n - \alpha2^{q+1} + (-1)^q 2^q}{2^n}.
\end{align*}

Thus granularity of the Fourier coefficient at this point is $gran(\widehat{f}(\{\phi \}))=n-q$
and therefore $gran_m(f) \geq n-q+1$.

\paragraph*{$D_{\op} \leq n-k+1$}
We now show a simple general parity tree of with $n-q+1$ queries 
that evaluates $f$, showing $D_{\op}(f) \leq n-q+1$.
Given an input $\mathbf{a}= \{a_1, a_2 ,\ldots , a_n \}$
It first queries all but one variable from each monomial of $f_1$. 
This takes $l-q$ queries.
For the monomial $m_i$ the product of these variables evaluate to
$\tilde{m_i}=\prod_{j=1}^{q_i-1} a_{i_j}$.
Then only if $\tilde{m_i}=1$ the output of $f_1$ depends on $x_{i_{q_i}}$. 
Therefore the final query to evaluate $f_1$ is the linear 
function $\bigoplus_{i=1}^q \tilde{m_i} x_{i_{q_i}}$ as the value of 
$\tilde{m_i}$ are already calculated. Thus evaluating $f_1$ needs $l-q+1$ queries.
Now we can simply evaluate $f_2$
which is defined on $n-l$ variables by querying each of the variables 
individually which enables us as to output the function $f(\xv)=f_1(\hx)f_2(\tx)$.
Therefore this method requires a total of $l-q+1+n-l=n-q+1$ query, which shows $D_{\op}(f) \leq n-q+1$. 

Since $D_{\op}(f) \geq gran_m(f)$ and we have $gran_m(f) \geq n-q+1$ and $D_{\op}(f) \leq n-k+1$
this implies $D_{\op}(f)=gran_m(f)=n-q+1$.
\end{proof}

Having determined $D(f)$, $D_{\op}(f)$ and $\DP(f)$, we now describe 
the family of exact quantum query algorithms $\Q$ that we design,
and their query complexity $\QC(f)$.
Let us now proceed with the functionality of the oracle and the different 
registers that are used by $\Q$.

\subsection{Quantum Query Algorithms} 
\label{sec:2-2}
The set-up for a Quantum Query algorithm in relation to Boolean functions is as follows. 
Given a Boolean function on $n$ influencing variables, 
a Quantum Query Algorithm for evaluating the function is defined 
on the Hilbert space $H=H^a \otimes H^q \otimes H^w$.
\begin{itemize}
\item Here $H^a$ represents an $n$ qubit register that contains the input to the function.
The inputs stored in the input register can only be accessed using the oracle $O_x$,
which operates on $H^a \otimes H^q$.

\item The Hilbert space $H^q$ is $n+1$ dimensional, 
and can be indexed with the basis states $\ket{\pmb{0}}$ $\ket{\pmb{1}}, \ket{\pmb{2}}, \ldots \ket{\pmb{n}}$.
This space is used to make queries to the oracle and we call this the query register $Q_n$.

\item The Hilbert space $H^w$ is used as the working memory  and has no restrictions.
We define $H^w$ to be formed of some $w$ qubits,
where the basis states of a qubit is $\ket{0}$ and $\ket{1}$
\footnote{
Therefore a Quantum Query Algorithm corresponding to a function $f$ with $n$ influencing variables
is essentially a circuit defined on the Hilbert space $H$ of  $n+ \lceil \log (n+1) \rceil +w $ qubits
with the restriction that the $n$ qubits corresponding to the input register
can only be accessed through an oracle.}.
\end{itemize}
The oracle $O_x$ works on the space $H^a \otimes H^q$ in the following way.
\begin{itemize}
\item $1 \leq i \leq n : O_x \ket{\xv}\ket{\pmb{i}}\ket{w}=(-1)^{x_i} \ket{\xv}\ket{\pmb{i}}\ket{w}$.

\item $i=0 : O_x \ket{\xv}\ket{\pmb{0}}\ket{w}=\ket{x}\ket{\pmb{0}}\ket{w}$.
\end{itemize}

Since the input register remains unchanged throughout the algorithm, 
we describe our algorithm on $H^q \otimes H^w$,
and describe the working of the oracle as $O_x\ket{\pmb{i}}=(-1)^{x_i}\ket{\pmb{i}}, ~ 1 \leq i \leq n$
and $O_x\ket{\pmb{0}}=\ket{\pmb{0}}$.

An algorithm that uses the oracle $k$ times can be expressed as a series of unitaries 
$U_0, U_1, \ldots U_k$ applied on $H^q \otimes H^w$ with an oracle access between 
each $U_i$ and $U_{i+1},~ 0 \leq i \leq k-1$.
The algorithm starts with the state $\ket{\psi}=\ket{\pmb{0}}\ket{0}\ldots \ket{0}$
and finally reaches the state $U_k O_x U_{k-1} \ldots U_1 O_x U_0 \ket{\psi}$,
on which some measurement is performed and the output is decided 
depending on some predefined condition on the result. 

An exact quantum query algorithm is one which evaluates a function correctly for
any input. The Exact Quantum Query complexity ($Q_E$) of a function is the least possible
number of queries an exact quantum query algorithm needs to make at most to
evaluate the function in any point. 

\subsubsection{The workspace of $\Q$}
The workspace of the algorithms that we design consists of 
$Q_n$ and $l+1$ qubits for some $l$. 
In this paper we denote the basis states of the query register with 
$\ket{\pmb{0}}_0,\ket{\pmb{1}}_0,\ldots, \ket{\pmb{n}}_0$. 
The qubits are denoted by $w_1$ through $w_{l+1}$. 
We denote the computational basis states of the 
$i$-th work qubit by $\ket{0}_i$ and $\ket{1}_i$.
Thus we describe this system with the basis states
$$\ket{\pmb{a_0}}_0 \bigotimes_{i=1}^t\ket{a_i}_i,
~a_0 \in \{0,1,\ldots, n\},
~ a_i \in \{0,1\}~ \forall i \in \{1,\ldots ,l+1 \}.$$

\subsection{Constructing $\Q$ leading to the $\sf pdsp$ class} 
\label{sec:2-3}

\begin{remark}
\label{rem:n}
From here on we assume $n \equiv 2 \bmod 4$. This is to simply reduce the tediousness 
of the proof. For other cases the algorithms and the bounds develop in an almost identical manner, conforming to the same generalized query complexity value. 
One can refer to the extended version of this work~\cite{self} (uploaded as an earlier Arxiv version)to view the simple modifications needed to incorporate the other cases.   
\end{remark}

The general flow of the algorithms are as follows.
\begin{itemize}

\item The function is expressed as $f(\xv)=g(\hx) \op h(\tx)$
where $\hx$ and $\tx$ are two subspaces that form a disjoint partition of $\xv$.

\item We then start with the state $\ket{\pmb{0}}_0 \otimes _{i=1}^{s+1} \ket{0}_i$
where $s$ is dependent on the structure of the function.

\item We apply a Hadamard gate on the first work qubit $w_1$ to obtain the state
\begin{equation*}
\ket{\psi_0}=
\ff \left( \ket{\pmb{0}}_0\ket{0}_1 \otimes _{i=1}^s \ket{0}_{i+1}+ 
\ket{\pmb{0}}_0 \ket{1}_1 \otimes _{i=1}^s \ket{0}_{i+1} \right).
\end{equation*}

\item Then we apply certain transformations to obtain a state of the form
\begin{equation}
\label{eq:01}
\ket{\psi_f}=
\ff \left( (-1)^{g(\hx)} \ket{\pmb{0}}_0\ket{0}_1 \otimes _{i=1}^s \ket{x_{r(i)}}_{i+1}+ 
(-1)^{h(\tx)} \ket{\pmb{0}}_0 \ket{1}_1 \otimes _{i=1}^s \ket{x_{r(l+i)}}_{i+1} \right)
\end{equation}
using some $k \leq s$ queries.
Here $r(i), 1 \leq i \leq 2s$ are elements of an injective map $r: [n] \rightarrow [n]$.
\end{itemize}
At this stage if we had $x_{r(l+i)}=x_{r(i)}=m_i \text{ (say)  }  \forall i$ we could write the 
state as 
$
\ket{\pmb{0}}_0 \ff 
\left( (-1)^{g(\hx)} \ket{0}_1 + (-1)^{h(\tx)} \ket{1}_1 \right)
 \bigotimes _{i=1}^s \ket{m_i}_{i+1}
$
and we could simply apply a Hadamard gate on the $w_1$ to obtain 
$
\ket{\pmb{0}}_0 \ff 
\kett*{{g(\hx)} \op {h(\tx)}}  \bigotimes _{i=1}^s \ket{m_i}_{i+1}
$
(ignoring global phase)
and measuring $w_1$ would suffice.
However we do not have any way of ensuring $x_{r(i)}=x_{r(l+i)}$ which leaves the state $\ket{\psi_f}$ 
in an entangled form. Here we design a new un-entangling protocol that finally gives us the separations.

\subsubsection{The un-entangling protocol}

Our algorithm is currently in the state
$$
\ket{\beta_0}=
\ff \left( (-1)^{g(\hx)} \ket{\pmb{0}}_0\ket{0}_1 \otimes _{i=1}^s \ket{x_{r(i)}}_{i+1}+ 
(-1)^{h(\tx)} \ket{\pmb{0}}_0 \ket{1}_1 \otimes _{i=1}^s \ket{x_{r(l+i)}}_{i+1} \right).
$$
Here it is important to note that $x_{r(i)} \in hx$ and $x_{r(l+i)} \in \tx$, that is the possible 
values of the qubits in the superposition states are also decided by our partition of $\xv$ into 
$\hx$ and $\tx$.
If the system was in a product state at this stage, we could have simply obtained the parity 
of the phases $(-1)^{g(\hx)}$ and $(-1)^{h(\tx)}$, which would have given us the desired outcome. 
However, the system is entangled as the value of $x_{r(i)}$ may differ depending on the input 
on which we have no control. At this stage we design a technique of untangling two qubits 
deterministically using a single query. This can be summarized as follows. 

\begin{theorem}
\label{lemma:1}
Let a quantum query algorithm be in the state 
$$\ket{\gamma} =\ff \big(\ket{\mb{x_a}}_0\ket{0}_1\ket{x_b}_2 \ket{W_1} + \ket{\mb{x_c}}_0\ket{1}_1\ket{x_d}_2\ket{W_2} \big)$$
Here $x_a$, $x_b$, $x_c$ and $x_d$ are inputs to a function corresponding to an oracle.
Then this state can be transformed to
$$\ket{\gamma'}=(-1)^{x_b} \ff \big(\ket{\mb{x_b}}_0\ket{0}_1\ket{x_d}_2\ket{W_1} + \ket{\mb{x_b}}_0\ket{1}_1\ket{x_d}_2\ket{W_2} \big)$$
using a single query to the oracle.
Here $\ket{W_1}$ and $\ket{W_2}$ represent any two arbitrary $m$-qubit states.
\end{theorem}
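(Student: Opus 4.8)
The plan is to realise the claimed map as a short sequence of data-independent unitaries with exactly one oracle call, where the single query is applied to a \emph{superposition} of the two relevant query indices rather than to a single index. Writing the $\ket{0}_1$ component of $\ket{\gamma}$ as branch $0$ and the $\ket{1}_1$ component as branch $1$, the key idea is that one query to $\ff(\ket{\mb b}+\ket{\mb d})$ simultaneously exposes both $x_b$ and $x_d$ as relative phases; the operator $\parg{n}{b}{d}$ then folds $(-1)^{x_b}$ out as a global phase while writing $x_b\op x_d$ into the query register, after which two controlled-NOTs between the query register and register $2$ complete the synchronisation. Here $b\neq d$ are the distinct (nonzero) indices of the two variables.

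Concretely, I would proceed in this order. \emph{First}, apply $w_1$-controlled permutations of the $\pg{n}{i}$ type, namely $\cc\pg{n}{a}$ and $\cd\pg{n}{c}$, routing the two distinct indices carried by $\ket{\mb{x_a}}_0$ and $\ket{\mb{x_c}}_0$ both to $\ket{\mb 0}_0$, so that a single later query acts coherently on both branches. \emph{Second}, apply $\sg{n}{b}{d}$ to produce $\ff(\ket{\mb b}+\ket{\mb d})$ in the query register of each branch. \emph{Third}, make the one oracle call, sending the query register to $(-1)^{x_b}\ff(\ket{\mb b}+(-1)^{x_b\op x_d}\ket{\mb d})$, so that $(-1)^{x_b}$ is already global. \emph{Fourth}, apply $\parg{n}{b}{d}$, collapsing this to $(-1)^{x_b}\ket{\mb{x_b\op x_d}}_0$. \emph{Fifth}, apply $\cc\cn{0}{2}$, which turns register $2$ of branch $0$ from $\ket{x_b}$ into $\ket{x_b\op(x_b\op x_d)}=\ket{x_d}$ while leaving branch $1$ (already holding $\ket{x_d}$) untouched. \emph{Finally}, apply the uncontrolled $\cn{2}{0}$, which resets the query register in both branches to $\ket{\mb{x_b}}_0$ since $(x_b\op x_d)\op x_d=x_b$. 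With $\ket{W_1},\ket{W_2}$ riding along as passive spectators, the resulting state is exactly $\ket{\gamma'}$.

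The proof then reduces to substituting the explicit actions of $\sg{n}{b}{d}$, $\parg{n}{b}{d}$ and the CNOT conventions fixed in Section~\ref{sec:prelim} into the state and checking that the six transitions line up. The step I expect to be the main obstacle, and the one needing the most careful bookkeeping, is getting the branch control exactly right around the query: the initial indices differ between the branches, so the pre-query permutations must be $w_1$-controlled, whereas $\sg{n}{b}{d}$, the query, and $\parg{n}{b}{d}$ must be applied \emph{identically} to both branches so that the single query yields the same phase $(-1)^{x_b}$ in each (hence a global phase); and the register-$2$ correction must be conditioned on $w_1=0$ so that branch $1$ is left intact, while the final query-register reset is shared by both branches. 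Verifying that this is the unique consistent assignment of controls, and that the gadget consumes exactly one oracle call, is the crux.
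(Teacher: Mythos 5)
There is a genuine gap, and it sits in your first step. In this paper the notation $\ket{\mb{x_a}}_0$ denotes the basis state $\ket{\mb{0}}_0$ or $\ket{\mb{1}}_0$ of the query register according to the \emph{unknown input bit} $x_a$ (this is visible in the paper's definitions of $U_{10},U_{11}$, which act on $\ket{\mb{0}}_0$ and $\ket{\mb{1}}_0$, and in expressions like $\ket{\mb{x_b \op x_d}}_0$); it is not a register ``pointing at'' index $a$. Consequently, ``routing the two indices both to $\ket{\mb{0}}_0$'' is a many-to-one map on basis states and cannot be implemented by any input-independent unitary. (Even under your reading, $\pg{n}{a}$ swaps $\ket{\mb{1}}_0 \leftrightarrow \ket{\mb{a}}_0$, so it would land you in $\ket{\mb{1}}_0$, not $\ket{\mb{0}}_0$.) If you instead keep step one unitary, the stale bits $x_a, x_c$ contaminate everything downstream: in branch $0$ the register holds $\ket{\mb{x_a}}_0$, and since $\sg{n}{b}{d}$ sends $\ket{\mb{1}}_0 \to \ff(\ket{\mb{b}}_0 - \ket{\mb{d}}_0)$, your steps 2--4 produce $(-1)^{x_b}\ket{\mb{x_a \op x_b \op x_d}}_0$ rather than $(-1)^{x_b}\ket{\mb{x_b \op x_d}}_0$, and branch $1$ is likewise contaminated by $x_c$. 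This is not fixable within your design: the target $\ket{\gamma'}$ carries no dependence on $x_a$ or $x_c$, all non-oracle gates are input-independent unitaries, so the only mechanism for coherently erasing the bit $x_a$ is interference with the oracle's phase $(-1)^{x_a}$ at index $a$ itself --- but your single query addresses only the pair $\{b,d\}$, identically in both branches, and therefore can never cancel $x_a$ or $x_c$.

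This is exactly where the paper's construction differs from yours. Its one query is spent on \emph{different index pairs per branch}: branch $0$ (with $w_1=\ket{0}$) queries a superposition of $\{a,d\}$ via $\cc U_1$, branch $1$ queries $\{b,c\}$ via $\cd U_2$, and these pre-query unitaries are additionally controlled on $w_2$. In branch $0$, $U_1$ encodes the register bit $x_a$ as a sign on the $\ket{\mb{a}}_0$ component so that the oracle's phase $(-1)^{x_a}$ cancels it (the erasure you are missing), while the oracle phase at $d$ supplies the $x_d$-dependence and the $w_2$-control (that qubit holds $x_b$ in this branch) supplies the $x_b$-dependence without ever querying $b$; dually, in branch $1$ the signed $\ket{\mb{c}}_0$ component cancels $x_c$, the oracle supplies $(-1)^{x_b}$, and the $w_2$-control (holding $x_d$ there) supplies the $x_d$-dependence. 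After $\cc \parg{n}{a}{d}$ and $\cd \parg{n}{b}{c}$ --- note: different index pairs, unlike your uniform $\parg{n}{b}{d}$ --- both branches hold $(-1)^{x_b}\ket{\mb{x_b \op x_d}}_0$. Your steps 5 and 6 (the $w_1$-controlled CNOT into register $2$ followed by the uncontrolled reset CNOT) do coincide with the paper's finishing moves, so your back end is sound; what is missing is precisely the front end: the per-branch choice of queried indices and the $w_2$-controlled sign encodings that make the single oracle call simultaneously erase $x_a, x_c$ and acquire $x_b, x_d$.
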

\begin{proof}
We again define a protocol, $\sf untangle$ which enables the defined 
transformation by making a single query to the oracle.

We first define the unitaries $U_1$ and $U_2$ that act on $Q_n$.
The structure of $\sf untangle$ is as follows.
First $\cc U_1$ and $\cd U_2$ are applied, 
followed by the oracle $O_x$ and then
$\cc \parg{n}{a}{d}$ and $\cd \parg{n}{b}{c}$.
That is, we define
$${\sf untangle}= \left( \cc \parg{n}{a}{d}~\cd \parg{n}{b}{c}~O_x~\cc U_1~\cd U_2 \right),$$
and show that $\ket{\gamma} \xrightarrow{\sf untangle} \ket{\gamma'}$.

We denote $\ket{\mb{x_a}}_0\ket{0}_1\ket{x_b}_2 \ket{W_1}= \ket{\gamma_1}$ and 
$\ket{\mb{x_c}}_0\ket{1}_1\ket{x_d}_2 \ket{W_2}= \ket{\gamma_2}$.Then 
$\ket{\gamma}=\ff\left( \ket{\gamma_1} +\ket{\gamma_2} \right)$
Let us now observe the evolution of the two states $\ket{\mb{x_a}}_0\ket{0}_1\ket{x_b}_2$ and
$\ket{\mb{x_c}}_0\ket{1}_1\ket{x_d}_2$ individually, 
depending on the state of the $w_1$.

We start with the case when $w_1=\ket{0}$.
$U_1$ can be looked as the composition of two operations $U_{10}$ and $U_{11}$.
$U_{10}$ and $U_{20}$ acts on the register $Q_n$
depending on if $w_2=\ket{0}$ or $\ket{1}$, i.e. $x_b=0$ or $x_b=1$
respectively. 
That is $U_{10}$ and $U_{20}$ are operators acting on $H^q \otimes H_2$.
Therefore at any point, only one of the unitaries actually perform 
their transformations, depending on the value of $x_b$. These transformations are
defined as follows.

\begin{multicols}{2}
 
\paragraph{$U_{10}$}
\begin{enumerate}

\item $\ket{\mb{0}}_0 \rightarrow \ff (\ket{\mb{a}}_0+\ket{\mb{d}}_0)$

\item $\ket{\mb{1}}_0 \rightarrow \ff (-\ket{\mb{a}}_0+\ket{\mb{d}}_0)$

\end{enumerate}

\columnbreak

\paragraph{\bf $U_{11}$}
\begin{enumerate}

\item $\ket{\mb{0}}_0 \rightarrow \ff (-\ket{\mb{a}}_0+\ket{\mb{d}}_0)$

\item $\ket{\mb{1}}_0 \rightarrow \ff (\ket{\mb{a}}_0+\ket{\mb{d}}_0)$

\end{enumerate}

\end{multicols}

That is, 
\begin{itemize}

\item $\ket{\mb{x_a}}_0 \xrightarrow{U_{10}} \ff ( (-1)^{x_a} \ket{\mb{a}}_0+\ket{\mb{d}}_0)$

\item $\ket{\mb{x_a}}_0 \xrightarrow{U_{11}} \ff ((-1)^{x_a+1} \ket{\mb{a}}_0+\ket{\mb{d}}_0)$

\end{itemize}

The oracle is then applied on $\cc U_{10} \cc U_{11} \ket{\gamma_1}$,
followed by the Unitary Operation $\cc \parg{n}{a}{d}$.
We now observe the state $\big( \cc \parg{n}{a}{d} O_x \cc U_{10} \cc U_{11} \big) \ket{\gamma_1}$,
depending on the value of $x_2$ and compare the resultant 
state with $$(-1)^{x_b} \big(\ket{\mb{x_b \op x_d}}_0)\ket{0}_1\ket{x_b}_2.$$
We tabulate the comparisons for both $x_b=0$ and $x_b=1$ in 
Table~\ref{tab:3}.
The transformations $\cc U_10, \cc U_{11}, O_x$ and $\cc \parg{n}{a}{d}$ only act on 
the query register, depending on the values of the qubits $w_1$ and $w_2$,
which remain unaltered throughout. Therefore we only show the evolution
of the query register.

\begin{table}[H]

\begin{center}
\begin{tabular}{ |c|c|c|c|c| } 
\multicolumn{1}{l}{$\bf x_b=0:$}\\
 \hline
 $x_a$  & $\cc U_{10}\ket{\gamma_1}$  & $O_x \cc U_{10}\ket{\gamma_1}$ & $\cc \parg{n}{a}{d} O_x \cc U_{10} \ket{\gamma_1}$ & $\ket{\mb \beta}$ \\ \hline
 $0$ & \makecell{$\ff\ket{\mb{a}}_0$ \\+ $ \ff \ket{\mb{d}}_0$}  
 & \makecell{$\ff (-1)^{x_a}\ket{\mb{a}}_0$   \\ + $\ff (-1)^{x_d}\ket{\mb{d}}_0$}   
 & \makecell{$(-1)^{x_a}\ket{\mb{x_a \op x_d}}_0$\\= $\ket{\mb{x_d}}_0$}
 & $\ket{\mb{x_d}}_0$ \\ \hline
 $1$ & \makecell{$-\ff \ketm{a}$\\+$\ff \ketm{d}$} 
 & \makecell{$\ff (-1)^{x_a+1}\ketm{a}$ \\+ $\ff (-1)^{x_d}\ketm{d}$} 
 & \makecell{$(-1)^{x_a+1}\ketm{x_a \op x_d \op 1}$\\= $\ketm{x_d}$}
 & $\ketm{x_d}$ \\ \hline
\end{tabular}
\end{center}

\begin{center}
\begin{tabular}{ |c|c|c|c|c| } 
\multicolumn{1}{l}{$\bf x_b=1:$}\\
 \hline
 $ x_a $  & $\cc U_{11}\ket{\gamma_1}$  & $O_x \cc U_{11}\ket{\gamma_1}$ & $\cc \parg{n}{a}{d} O_x \cc U_{11} \ket{\gamma_1}$ & $\ket{\mb \beta}$ \\ \hline
 $0$ & \makecell{$-\ff\ketm{a}$ \\+ $ \ff \ketm{d}$}  
 & \makecell{$\ff (-1)^{x_a+1}\ketm{a}$   \\ + $\ff (-1)^{x_d}\ketm{d}$}   
 & \makecell{$(-1)^{x_a+1}\ketm{x_a \op x_d \op 1}$\\= $-\ketm{x_d \op 1}$}
 & $-\ketm{x_d \op 1}$ \\ \hline
\ $1$ & \makecell{$\ff \ketm{a}$\\+$\ff \ketm{d}$} 
 & \makecell{$\ff (-1)^{x_a}\ketm{a}$ \\+ $\ff (-1)^{x_d}\ketm{d}$} 
 & \makecell{$(-1)^{x_a}\ketm{x_a \op x_d }$\\= $-\ketm{x_d \op 1}$}
 & $-\ketm{x_d \op 1}$ \\ \hline
\end{tabular}
\end{center}
\caption{Evolution of $\ket{\gamma_1}$ and comparison
with $\ket{\mb \beta}=(-1)^{x_b} 
 \ketm{x_b \op x_d}$}
\label{tab:3}

\end{table}

Therefore in all the cases the state post these transformations is 
$$(-1)^{x_b} \ketm{x_b \op x_d} \ket{0}_1\ket{x_b}_2.$$

Now we describe the evolution of the state $\ketm{x_c}\ket{1}_1\ket{x_d}_2$.
As in the previous case, we apply an unitary $\cd U_2$ and then the state
queries to the oracle, which is followed by $\cd \parg{n}{b}{c}$.
We define $U_2$ as the composition of two unitary operators 
defined on $H^q \otimes H_2$, $U_{20}$ and $U_{21}$.
Similar to $U_{10}$ and $U_{11}$, these are operators
that transform the query register depending on $w_2=\ket{0}$
and $\ket{1}$, respectively.
The transformations due to $U_{20}$ and $U_{21}$ are as follows.

\begin{multicols}{2}

\paragraph{$U_{20}$}
\begin{enumerate}

\item $\ketm{0} \rightarrow \ff (\ketm{b}+\ketm{c})$

\item $\ketm{1} \rightarrow \ff (\ketm{b}-\ketm{c})$

\end{enumerate}

\columnbreak

\paragraph{\bf $U_{21}$}
\begin{enumerate}

\item $\ketm{0} \rightarrow \ff (\ketm{b}-\ketm{c})$

\item $\ketm{1} \rightarrow \ff (\ketm{b}+\ketm{c})$

\end{enumerate}

\end{multicols}

That is
\begin{itemize}

\item $\ketm{x_c} \xrightarrow{U_{20}} \ff (\ketm{b}+(-1)^{x_c}\ketm{c}) $

\item $\ketm{x_c} \xrightarrow{U_{21}} \ff (\ketm{b}+(-1)^{x_c+1}\ketm{c}) $

\end{itemize}
The oracle is applied on $\cd U_{21} \cd U_{20} \ket{\gamma_2}$
and on the resultant state, \\ $O_x \cd U_{21} \cd U_{20} \ket{\gamma_2}$  we apply $\cd \parg{n}{b}{c}$.
We observe the evolution for all possible $\{x_b,x_c,x_d \}$ tuples and compare
the final state with $(-1)^{x_b} \ketm{x_b \oplus x_d}\ket{1}_1\ket{x_d}_2$.
We again list solely the evolution of the query register
in Table~\ref{tab:4}, as the other registers
remain unchanged. 

\begin{table}[H]

\begin{center}
\begin{tabular}{ |c|c|c|c|c| } 
 \hline
 $x_d$  & $ \cd U_{21} \cd U_{20}\ket{\gamma_2}$  & $O_x \cd U_{21} \cd U_{20}\ket{\gamma_1}$ 
 & \makecell{$\cd \parg{n}{b}{c} O_x$ \\ $ \cd U_{21} \cd U_{20} \ket{\gamma_2}$} & $\ket{\mb \beta}$ \\ \hline
 $0$ & \makecell{$\ff\ketm{b}$ \\ $+(-1)^{x_c} \ff \ketm{c}$}  
 & \makecell{$\ff (-1)^{x_b}\ketm{b}$   \\  $+\ff (-1)^{2x_c}\ketm{c}$}   
 & \makecell{$(-1)^{x_b}\ketm{x_b}$}
 & $(-1)^{x_b}\ketm{x_b}$ \\ \hline
 $1$ & \makecell{$\ff\ketm{b}$ \\ $+ (-1)^{x_c+1} \ff \ketm{c}$}  
 & \makecell{$\ff (-1)^{x_b}\ketm{b}$   \\  $+\ff (-1)^{2x_c+1}\ketm{c}$}   
 & \makecell{$(-1)^{x_b}\ketm{x_b \op 1}$}
 & $(-1)^{x_b}\ketm{x_b \op 1}$ \\ \hline
\end{tabular}
\end{center}

\caption{Evolution of $\ket{\gamma_2}$ and comparison
with $\ket{\mb \beta}=(-1)^{x_b} 
 \ketm{x_b \op x_d}$}
\label{tab:4} 
\end{table}

Therefore in all the cases the state post these transformations is 
$$(-1)^{x_b} \ketm{x_b \op x_d} \ket{0}_0\ket{x_d}_1.$$
We now look at the collective effect of the transformations 
$\cc U_1$, $\cd U_2$, $O_x$, $\cc \parg{n}{a}{d}$ and $\cd \parg{n}{b}{c}$.
The state at start was 
$$\ff \big(\ketm{x_a}\ket{0}_1\ket{x_b}_2\ket{W_1} + \ketm{x_c}\ket{1}_1\ket{x_d}_2\ket{W_2}   \big).$$
The state after these operations are applied is
$$ \ff \big( (-1)^{x_b}\ketm{x_b \op x_d}\ket{0}_1\ket{x_b}_2\ket{W_1} + (-1)^{x_b}\ketm{x_b \op x_d}\ket{1}_1\ket{x_d}_2\ket{W_2} \big) .$$
We now apply the operations $\cc \cn{Q_n}{w_2}$
followed by $\cn{w_2}{Q_n}$,
evolving the system to 
$
 (-1)^{x_b} \ff \big( \ketm{x_b}\ket{0}_1\ket{x_d}_2 \ket{W_1}+ \ketm{x_b}\ket{1}_1\ket{x_d}_2 \ket{W_2} \big).
$
and this completes the step.
This also shows that for this method the qubit $w_2$ can be swapped with any other work qubit, 
and the method is indifferent towards its choice.
\end{proof}

Observe that this subroutine does not depend on the function 
we are dealing with. However, the advantage is most prominent 
for the classes of functions that we discuss. 
Given the general framework of this technique, it is an interesting problem to check if this technique can have applications in other black box problems in the quantum paradigm as well as if this methodology can be further optimized in the bounded error quantum model.  

Let us now denote the generalized routine 
in this regard that form part of the exact quantum query algorithm.
This is simply obtained by applying the untangling protocol many times, each time 
untangling two new qubits. We omit this proof for brevity. 

\begin{lemma}
\label{lemma:3}
Corresponding to a quantum query algorithm defined on the variables $\xv=(x_1,x_2, \ldots ,x_n)$ with
$\hx$ and $\tx$ are two subspaces that form a disjoint partition of $\xv$,
where $s=2t$
the state
$$ \ket{\beta_0}=
\ff \left( (-1)^{g(\hx)} \ket{\bm{0}}_0 \ket{0}_1 \bigotimes_{i=1}^{s}\ket{x_{r_i}}_{i+1}
+
(-1)^{h(\tx)}\ket{\bm{0}}_0 \ket{1}_1 \bigotimes_{j=1}^{s}\ket{x_{r_{s+j}}}_{j+1} \right) $$
can be evolved to the state $\ket{\beta_f}$ using the protocol ${\sf untangle^s_n}$,
where,

$$\ket{\beta_f}=
\ff \left( (-1)^{g(\hx)} \ket{\bm{0}}_0 \ket{0}_1 
+
(-1)^{h(\tx)}\ket{\bm{0}}_0 \ket{1}_1
\right) 
\bigotimes_{i=1}^{t} \left(\kett*{x_{r(2i)}}_{2i} \kett*{x_{r(s+2i)}}_{2i+1} \right).
$$
by making $t$ queries to the oracle $O_x$.
\end{lemma}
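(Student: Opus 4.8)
**The plan is to prove Lemma~\ref{lemma:3} by induction on $t$, using Theorem~\ref{lemma:1} as the single-step untangling primitive.** The key observation is that $\ket{\beta_0}$ has exactly the two-branch structure required by Theorem~\ref{lemma:1}, once we identify which qubit pair to untangle. Writing $s=2t$, the state carries $2s=4t$ input-dependent query registers in total (the qubits indexed $2$ through $s+1$ in each branch), and each application of the untangling protocol collapses the entanglement contributed by four of these variables (two from $\hx$, two from $\tx$) at the cost of a single oracle query. Repeating this $t$ times clears all $4t$ dependencies, which accounts exactly for the $t$ queries in the statement.

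**First I would set up the induction and the matching with Theorem~\ref{lemma:1}.** For the inductive step, suppose we have already untangled $i-1$ pairs, so the state factors as a clean two-branch prefix tensored with the already-resolved qubits $\bigotimes_{j<i}\bigl(\kett*{x_{r(2j)}}_{2j}\kett*{x_{r(s+2j)}}_{2j+1}\bigr)$, while the branches still disagree on the remaining qubits indexed from $2i$ onward. To invoke Theorem~\ref{lemma:1}, I would isolate the registers $w_1$, the query register $Q_n$, and the particular work qubit to be cleared (say $w_{2i}$), and absorb everything else—all remaining entangled qubits plus the already-resolved ones—into the arbitrary $m$-qubit states $\ket{W_1}$ and $\ket{W_2}$ that the theorem permits. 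The theorem's freedom to choose $\ket{W_1},\ket{W_2}$ arbitrarily is exactly what makes this packaging legitimate, and the closing remark in the proof of Theorem~\ref{lemma:1} that ``the qubit $w_2$ can be swapped with any other work qubit'' is what lets us target $w_{2i}$ at each stage rather than a fixed qubit.

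**The main obstacle, and the step I would treat most carefully, is bookkeeping the phase and the index substitution $x_b \oplus x_d \mapsto x_b$ across all $t$ applications.** Theorem~\ref{lemma:1} replaces the query-register contents $\ket{\mb{x_a}},\ket{\mb{x_c}}$ by a common $\ket{\mb{x_b}}$ and rewrites the second work qubit to the common value $\ket{x_d}$, while pulling out a global phase $(-1)^{x_b}$. Since each branch of $\ket{\beta_0}$ already carries the intended relative phase $(-1)^{g(\hx)}$ versus $(-1)^{h(\tx)}$, I must verify that every untangling step contributes only a \emph{global} phase that does not disturb this relative phase; this follows because the $(-1)^{x_b}$ factor emerges identically in both branches (it depends only on the oracle value $x_b$, not on which branch we are in). I would then confirm that after all $t$ steps the query register has returned to $\ket{\bm{0}}_0$ and the surviving work-qubit values are precisely the paired assignments $\kett*{x_{r(2i)}}_{2i}\kett*{x_{r(s+2i)}}_{2i+1}$ claimed in $\ket{\beta_f}$, matching the indices $r(\cdot)$ of the injective map. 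The remainder is the routine induction closing: the base case $t=0$ is vacuous, and each step consumes exactly one query by Theorem~\ref{lemma:1}, giving $t$ queries total.
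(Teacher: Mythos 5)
Your proposal is correct and matches the paper's intent exactly: the paper explicitly omits this proof, stating only that the result ``is simply obtained by applying the untangling protocol many times, each time untangling two new qubits,'' which is precisely your induction with Theorem~\ref{lemma:1} as the single-step primitive. Your added bookkeeping is the right filler for the omitted details --- absorbing unresolved registers into the arbitrary states $\ket{W_1},\ket{W_2}$, using query-free swaps to move each target pair into the $Q_n$/$w_2$ roles and to restore $\ket{\bm{0}}_0$ afterwards, and noting that each $(-1)^{x_b}$ factor is common to both branches (the paper itself silently collects these into the $(-1)^{g'(\xv)}$ prefactor appearing in Algorithm~\ref{algo}), so the relative phase between the $(-1)^{g(\hx)}$ and $(-1)^{h(\tx)}$ branches is preserved.
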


Using this protocol on the state $\ket{\psi_f}$ described in Equation~\eqref{eq:01} 
gives us the state using a further $t$ queries:
$$
\ket{\psi_{end'}}=
\ff \left( (-1)^{g(\hx)} \ket{\pmb{0}}_0 \ket{0}_1 +
(-1)^{h(\tx)} \ket{\pmb{0}}_0 \ket{1}_1 \right)
\bigotimes_{i=1}^{t} \left(\kett*{x_{r(2i)}}_{2i} \kett*{x_{r(s+2i)}}_{2i+1} \right).$$

Applying a Hadamard gate and then measuring $w_1$ in the computational basis gives us the output after a total of $k+ \lceil \frac{s}{2} \rceil$ queries.
The efficiency of the algorithm relies on how well can we partition $\xv$ into $\hx$ and $\tx$ and then choose $k$
and $s$ properly.

Let us now go over the rest of the lemmas, subroutines and intermediate results that we obtain en-route.

\subsubsection{The complete algorithm}
We start with $\ket{\psi}_0$ where $l=k$ and define the 
following transformation. 
\begin{lemma}
\label{lemma:0}
Let $f(\xv)$ be a Boolean function on $n$ variable which is being evaluated 
using an algorithm $\Q(f)$ with the registers $Q_n$ and $k$ qubits of working memory.

Then there exists a transformation $acq(i-1)$ which transforms the state
$\ket{\psi_{i-1}}$ to $\ket{\psi_i}$ by making a single query to the oracle, 
where $\ket{\psi_i}$ is defined as follows.
\
\begin{align*}
\ket{\psi_i}=
\ff \left( \ket{\bm{0}}_0 \ket{0}_1 
\bigotimes_{j=1}^{i-1} \ket{x_j}_{j+1} 
\bigotimes_{j=i+1}^{k} \ket{0}_{j}
+
\ket{\bm{0}}_0 \ket{1}_1 
\bigotimes_{j=1}^{i-1} \ket{x_{k+j}}_{j+1} 
\bigotimes_{j=i+1}^{k} \ket{0}_{j}
\right).
\end{align*}
\end{lemma}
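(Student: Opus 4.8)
The plan is to construct the single-query routine $acq(i-1)$ explicitly and verify its action branch-by-branch. The first thing I would do is isolate what actually changes between $\ket{\psi_{i-1}}$ and $\ket{\psi_i}$. Comparing the two tensor products, every register except the work qubit $w_i$ is identical in both states, and $w_i$ passes from $\ket{0}_i$ (in both branches of $\ket{\psi_{i-1}}$) to $\ket{x_{i-1}}_i$ in the $w_1=\ket{0}$ branch and to $\ket{x_{k+i-1}}_i$ in the $w_1=\ket{1}$ branch. Hence $acq(i-1)$ only has to load one fresh input bit into $w_i$ in each branch, coherently and in superposition, while spending exactly one oracle call and returning the query register to $\ketm{0}$.

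Accordingly I would define the routine (operators acting right to left) as
\[
acq(i-1)=\cn{w_i}{Q_n}\ \cn{Q_n}{w_i}\ \big(\cc\parg{n}{0}{i-1}\ \cd\parg{n}{0}{k+i-1}\big)\ O_x\ \big(\cc\sg{n}{0}{i-1}\ \cd\sg{n}{0}{k+i-1}\big),
\]
where the sole oracle access is the single $O_x$ in the middle, so the one-query budget is met by construction. The $\cc$- and $\cd$-controlled operators act on orthogonal $w_1$-subspaces and therefore commute, letting the two branches evolve independently yet coherently, while the two $\cn{}{}$ gates at the end require no $w_1$-control because they act identically in both branches.

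Next I would trace the $w_1=\ket{0}$ branch. The creation operator gives $\ketm{0}\mapsto\ff(\ketm{0}+\ketm{i-1})$; the oracle inserts the relative phase, producing $\ff(\ketm{0}+(-1)^{x_{i-1}}\ketm{i-1})$; and since $\parg{n}{0}{i-1}$ is the adjoint of $\sg{n}{0}{i-1}$ it sends $\ff(\ketm{0}+\ketm{i-1})\mapsto\ketm{0}$ and $\ff(\ketm{0}-\ketm{i-1})\mapsto\ketm{1}$, so the query register collapses to the basis label $\ketm{x_{i-1}}$. The $w_1=\ket{1}$ branch is identical with $k+i-1$ in place of $i-1$, yielding $\ketm{x_{k+i-1}}$. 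At this point the query register carries the desired bit as a computational-basis label in each branch while $w_i$ is still $\ket{0}_i$; I then close the loop with $\cn{Q_n}{w_i}$, which copies that bit onto $w_i$, followed by $\cn{w_i}{Q_n}$, which uncomputes the query register back to $\ketm{0}$ in both branches, leaving precisely the factorised form demanded by $\ket{\psi_i}$.

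The step I expect to require the most care is this final uncomputation. One must verify that after the $\parg{}{}{}$ operators the query register is a clean, unentangled basis state in each branch (with no residual amplitude left on $\ketm{i-1}$ or $\ketm{k+i-1}$); only then do the CNOT pair both load $w_i$ correctly and restore $\ketm{0}$ without leaving the query register entangled with the work memory. This is exactly the disentanglement guaranteed by the $\sg{}{}{}$/$\parg{}{}{}$ adjoint relation, so once that is checked the rest is routine bookkeeping, as every operator other than $O_x$ acts only on $Q_n$ and the work qubits and leaves the input register untouched.
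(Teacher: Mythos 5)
Your construction is correct and essentially identical to the paper's own proof: the same structure of $w_1$-controlled state preparation $\cc\sg{n}{0}{\cdot}$, $\cd\sg{n}{0}{\cdot}$, a single oracle call, the adjoint operators $\cc\parg{n}{0}{\cdot}$, $\cd\parg{n}{0}{\cdot}$ collapsing the phase into the basis label $\ketm{x_\cdot}$, and the final $\cn{Q_n}{w}$, $\cn{w}{Q_n}$ pair to copy the bit and uncompute the query register. The only difference is an index shift (you load $x_{i-1}$ into $w_i$, while the paper's proof loads $x_i$ into $w_{i+1}$), which traces back to an off-by-one discrepancy between the lemma's displayed state and the paper's own proof; your version follows the statement as literally written, so this is a relabeling rather than a gap.
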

\begin{proof}
This is very similar to the $\sf acquire(i)$ transformation shown in Lemma~\ref{lemma2}, and one can refer to it for a more 
detailed view of a similar process.
Here the difference is that in this case the value of two 
variables are stored in the qubits in the entangled system
with each query.

We show that this transformation can be achieved by using the $acq(i-1)$ transformation defined as the 
sequential application of the following unitaries and the oracle in the given order. 
$$\cc\sg{n}{0}{i}, \cd\sg{n}{0}{k+i} ~ O_x,~ \cc\parg{n}{0}{i}, 
\cd\parg{n}{0}{k+i} , \cn{Q_n}{w_{i+1}}, \cn{w_{i+1}}{Q_n}.$$
That is 
$acq(i-1)=\cn{w_{i+1}}{Q_n}\cn{Q_n}{w_{i+1}}\cd\parg{n}{0}{k+i}
\cc\parg{n}{0}{i}O_x\cd\sg{n}{0}{k+i}\cc\sg{n}{0}{i}$

The step-wise transformation due to $acq(i-1)$ on $\ket{\psi_{i-1}}$ is as follows.
\
\begingroup
\allowdisplaybreaks
\begin{align*}
\ket{\psi_{i-1}}=& \ff \left( \ket{\bm{0}}_0 \ket{0}_1 \ket{x_1}_2 \ldots \ket{x_{i-1}}_{i} \ket{0}_{i+1} \ldots \ket{0}_k \right. 
\\
& \left.  +\ket{\bm{0}}_0 \ket{1}_1 \ket{x_{k+1}}_2 \ldots \ket{x_{i-1}}_{i} \ket{0}_{i+1} \ldots \ket{0}_k 
\right)
\\
 \xrightarrow{\cc\sg{n}{0}{i} ~ \cd\sg{n}{0}{k+i}} & 
 \ff \left(\left(\frac{\ket{\bm{0}}_0+\ket{\bm{i}}_0}{\sqrt{2}} \right) \ket{0}_1 \ket{x_1}_2 \ldots \ket{x_{i-1}}_{i} \ket{0}_{i+1} \ldots \ket{0}_k \right. 
\\
& \left.  + \left(\frac{\ket{\bm{0}}_0 + \ket{\bm{k+i}}_0}{\sqrt{2}} \right) \ket{1}_1 \ket{x_{k+1}}_2 \ldots \ket{x_{i-1}}_{i} \ket{0}_{i+1} \ldots \ket{0}_k \right)
\\
\xrightarrow{O_x} &
 \ff \left(\left(\frac{\ket{\bm{0}}_0+(-1)^{x_i}\ket{\bm{i}}_0}{\sqrt{2}} \right)   \ket{0}_1 \ket{x_1}_2 \ldots \ket{x_{i-1}}_{i} \ket{0}_{i+1} \ldots \ket{0}_k \right. 
\\
+& \left. \left(\frac{\ket{\bm{0}}_0 +(-1)^{x_{k+i}} \ket{\bm{k+i}}_0}{\sqrt{2}} \right) \ket{1}_1 \ket{x_{k+1}}_2 \ldots \ket{x_{i-1}}_{i} \ket{0}_{i+1} \ldots \ket{0}_k \right)
\\
\xrightarrow{\cc\parg{n}{0}{i} ~ \cd\parg{n}{0}{k+i}} &
 \ff \left( \ket{\bm{x_i}}_0 \ket{0}_1 \ket{x_1}_2 \ldots \ket{x_{i-1}}_{i} \ket{0}_{i+1} \ldots \ket{0}_k \right. 
\\
+& \left. \ket{\bm{x_{k+i}}}_0 \ket{1}_1 \ket{x_{k+1}}_2 \ldots \ket{x_{i-1}}_{i} \ket{0}_{i+1} \ldots \ket{0}_k \right)
\\
\xrightarrow{\cn{Q_n}{w_{i+1}}} &
 \ff \left( \ket{\bm{x_i}}_0 \ket{0}_1 \ket{x_1}_2 \ldots \ket{x_{i-1}}_{i} \ket{x_i}_{i+1} \ldots \ket{0}_k \right. 
\\
+& \left. \ket{\bm{x_{k+i}}}_0 \ket{1}_1 \ket{x_{k+1}}_2 \ldots \ket{x_{i-1}}_{i} \ket{x_{k+i}}_{i+1} \ldots \ket{0}_k \right)
\\
\xrightarrow{\cn{w_{i+1}}{Q_n}} &
 \ff \left( \ket{\bm{0}}_0 \ket{0}_1 \ket{x_1}_2 \ldots \ket{x_{i-1}}_{i} \ket{x_i}_{i+1} \ldots \ket{0}_k \right. 
\\
+& \left. \ket{\bm{0}}_0 \ket{1}_1 \ket{x_{k+1}}_2 \ldots \ket{x_{i-1}}_{i} \ket{x_{k+i}}_{i+1} \ldots \ket{0}_k \right)
\\
&= \ket{\psi_i}.
\end{align*}
\
\endgroup
\
\end{proof}

The first transformation is applied $k-1$ times
(which requires $k-1$ queries to be made to the oracle)
and it transforms 
the system to 
$$\ket{\psi_{k-1}}=
\ff \left( \ket{\pmb{0}}_0\ket{0}_1 \otimes _{i=2}^k \ket{x_{i-1}}_i+ 
\ket{\pmb{0}}_0 \ket{1}_1 \otimes _{i=2}^k \ket{x_{k+i-1}}_i \right).$$
Then we apply the following transformational result. 
\begin{lemma}
\label{lemma:onetime}
The state $\ket{\psi_{k-1}}$ can be converted to the state
$$\ket{\beta_0}=\ff \left( (-1)^{\prod\limits_{i=1}^k x_i} \ket{\bm{0}}_0 \ket{0}_1 \bigotimes_{j=2}^k \ket{x_{j-1}}_j
+
(-1)^{\prod\limits_{i=k+1}^n x_i}\ket{\bm{0}}_0 \ket{1}_1 \bigotimes_{j=2}^k \ket{x_{k+j-1}}_j \right)$$
by making a single query to the oracle.
\end{lemma}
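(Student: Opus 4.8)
The plan is to obtain the two product phases $(-1)^{\prod_{i=1}^k x_i}$ and $(-1)^{\prod_{i=k+1}^n x_i}$ by a single phase kickback, exploiting the fact that $\ket{\psi_{k-1}}$ already stores $k-1$ of the $k$ relevant variables in the work qubits of each branch. In the $w_1=\ket{0}$ branch the qubits $w_2,\dots,w_k$ hold $x_1,\dots,x_{k-1}$, so only $x_k$ is missing from the monomial $\prod_{i=1}^k x_i$; symmetrically (here the index ranges force $n=2k$, so $\prod_{i=k+1}^n x_i=\prod_{i=k+1}^{2k}x_i$) the $w_1=\ket{1}$ branch stores $x_{k+1},\dots,x_{2k-1}$ and is missing only $x_n$. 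The key observation is that $\prod_{i=1}^k x_i$ equals $x_k$ when $x_1=\dots=x_{k-1}=1$ and equals $0$ otherwise, so the required phase is exactly $(-1)^{x_k}$ applied conditionally on all the stored qubits being $\ket{1}$.

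I would realise this with a multiply controlled routing of the query register around a single oracle call, the contrast with the $acq$ step of Lemma~\ref{lemma:0} being that there one used $\sg{n}{0}{i}$ to create a half-weight superposition in order to \emph{store} a bit, whereas here we route $\ketm{0}$ \emph{fully} onto a single index in order to collect the full phase. Concretely, controlled on $w_1=\ket{0}$ together with $w_2=\dots=w_k=\ket{1}$, apply to $Q_n$ an involution $V_0$ that swaps $\ketm{0}$ and $\ketm{k}$; simultaneously, controlled on $w_1=\ket{1}$ together with $w_2=\dots=w_k=\ket{1}$, apply the analogous involution $V_1$ swapping $\ketm{0}$ and $\ketm{n}$. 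Both are built from the permutation primitives of the $\pg{n}{i}$ type made into multiply controlled gates, and both leave the work qubits untouched. After this step the query register sits in $\ketm{k}$ on the all-ones part of branch $0$, in $\ketm{n}$ on the all-ones part of branch $1$, and in $\ketm{0}$ everywhere else. Applying the oracle $O_x$ once, with $O_x\ketm{i}=(-1)^{x_i}\ketm{i}$, stamps $(-1)^{x_k}$ on the first part, $(-1)^{x_n}$ on the second, and nothing on the $\ketm{0}$ parts; re-applying $V_0$ and $V_1$ (their own inverses) then restores the register to $\ketm{0}$ in every branch while the scalar phases commute through and survive.

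A short case analysis finishes the argument. On the $w_1=\ket{0}$ branch, if some $x_i=0$ for $1\le i\le k-1$ the routing is not triggered, the register stays in $\ketm{0}$, no phase is picked up, and indeed $\prod_{i=1}^k x_i=0$; if all of $x_1,\dots,x_{k-1}$ equal $1$ the oracle contributes $(-1)^{x_k}=(-1)^{\prod_{i=1}^k x_i}$. The $w_1=\ket{1}$ branch is identical with $x_n$ in place of $x_k$, giving $(-1)^{\prod_{i=k+1}^n x_i}$. Collecting the two branches yields exactly $\ket{\beta_0}$, and the oracle is invoked only once, since both branch-dependent targets are prepared before the single call and undone after it.

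The main obstacle I anticipate is bookkeeping rather than conceptual: one must check that the two multiply controlled routings, although they share the register $Q_n$, never interfere, which holds because they are controlled on complementary values of $w_1$ and hence act on orthogonal subspaces; and one must confirm that re-applying $V_0,V_1$ returns every basis component to $\ketm{0}$ without disturbing the kicked-back phases. Writing $V_0,V_1$ honestly in terms of the paper's $\pg{n}{\cdot}$ primitives and verifying that the query count is exactly one are the only points that need real care.
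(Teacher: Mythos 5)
Your proposal is correct and takes essentially the same approach as the paper: the paper likewise routes the query register onto the missing variable's index via multiply controlled gates (a $C^{k-1}$-style operation flipping $\ket{\bm{0}}_0$ to $\ket{\bm{1}}_0$ controlled on $w_2,\dots,w_k$, followed by $\cc\pg{n}{\frac{n}{2}}$ and $\cd\pg{n}{n}$), makes the single oracle call, and uncomputes. Your merging of these stages into the single multi-controlled involutions $V_0,V_1$ is only a cosmetic repackaging of the same compute--query--uncompute phase kickback.
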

\begin{proof}
We begin with the system being in the state
$$\ket{\psi_{\frac{n}{2}-1}}=
 \ff \left( \ket{\bm{0}}_0 \ket{0}_1 \bigotimes_{j=2}^k \ket{x_{j-1}}_j 
+ \ket{\bm{0}}_0 \ket{1}_1 \bigotimes_{j=2}^k \ket{x_{k+j-1}}_j \right).$$
\
At this stage we apply a unitary transformation $C^{k-1}$ 
that changes the state of $Q_n$ 
from $\ket{\bm{0}}_0$ to $\ket{\bm{1}}_0$, controlled on
$w_i=\ket{1}_i, 2 \leq i \leq k$, similar to a $\sf C^{k-1}-NOT$ operation.
That is, iff 
$ \prod_{i=1}^{k-1} x_i=1$, $Q_n$ changes to the state $\ket{\bm{1}}_0$
in the superposition state with $w_1=\ket{0}$ and similarly 
iff 
$ \prod_{i=k+1}^{n-1} x_i=1$ then $Q_n$ changes to $\ket{\bm{1}}_0$
in the superposition state with $w_1=\ket{1}$, forming $\ket{\psi_{k-1}'}$,
which is
$$
C^{k-1} \ket{\psi_{k-1}}=
\ff \left( \kett*{\bm{ \prod_{i=1}^{k-1}x_i }}_0 \ket{0}_1 \bigotimes_{j=2}^k \ket{x_{j-1}}_j
+ \kett*{\bm{\prod_{i=k+1}^{n-1}x_i}}_0 \ket{1}_1 \bigotimes_{j=2}^k \ket{x_{k+j-1}}_j \right).
$$

The next step takes one query and this is the last query the algorithm makes before 
starting the un-entanglement protocol. 
We apply $\cc \pg{n}{\frac{n}{2}} ~\cd \pg{n}{n}$ 
followed by the oracle $O_x$ and then $\cc \pg{n}{\frac{n}{2}} ~\cd \pg{n}{n} $ again.
Let $p_{q}^{r}=\prod\limits_{i=q}^{r}x_i,~ q<r$.
Then the transformation due to the operations is as follows.

\begin{align}
\label{eq:2}
& \ket{\psi_{k-1}'} 
\xrightarrow{\cc \pg{n}{\frac{n}{2}} ~\cd \pg{n}{n}} \\ \nonumber
&
\ff \left( \kett*{\bm{k \times (p_1^{k-1})}}_0 \ket{0}_1 \bigotimes_{j=2}^k \ket{x_{j-1}}_j 
+ \kett*{\bm{n \times (p_{k+1}^{n-1})}}_0 \ket{1}_1 \bigotimes_{j=2}^k \ket{x_{k+j-1}}_j \vph\right)
\\ \nonumber
& \xrightarrow{O_x} 
\ff \left( (-1)^{x_k(p_{1}^{k-1})} \kett*{\bm{k \times (p_{1}^{k-1})}}_0 \ket{0}_1 \bigotimes_{j=2}^k \ket{x_{j-1}}_j  \right. 
\\ \nonumber & 
+ \left. (-1)^{x_n(p_{k+1}^{n-1})}\kett*{\bm{n \times (p_{k+1}^{n-1})}}_0 \ket{1}_1 
\bigotimes_{j=2}^k \ket{x_{k+j-1}}_j \right)
\
\xrightarrow{\cc \pg{n}{\frac{n}{2}} ~\cd \pg{n}{n}} \\ \nonumber  &
\ff \left( (-1)^{x_k(\prod\limits_{j=1}^{k-1}x_j)} \ket{\bm{0}}_0 \ket{0}_1 \bigotimes_{j=2}^k \ket{x_{j-1}}_j 
+ (-1)^{x_n(\prod\limits_{j=1}^{k-1}x_{k+j})}\ket{\bm{0}}_0 \ket{1}_1 \bigotimes_{j=2}^k \ket{x_{k+j-1}}_j \right)
\\ \nonumber
&= 
\ff \left( (-1)^{\prod\limits_{i=1}^k x_i} \ket{\bm{0}}_0 \ket{0}_1 \bigotimes_{j=2}^k \ket{x_{j-1}}_j
+
(-1)^{\prod\limits_{i=k+1}^n x_i}\ket{\bm{0}}_0 \ket{1}_1 \bigotimes_{j=2}^k \ket{x_{k+j-1}}_j \right).
\end{align}
\
\end{proof}

At this stage we only need to untangle the system to obtain the output. 
We now apply Lemma~\ref{lemma:3} to obtain the final output, 
which costs a further $\fl$ queries. 
Together with the result of Lemma~\ref{th:par},
this results in the following Algorithm~\ref{algo}. 

\begingroup
\allowdisplaybreaks


\begin{algorithm}
\caption{$\Q(f)$ to evaluate $f(\xv)=\prod\limits_{i=1}^{\frac{n}{2}}x_i \op 
 \prod\limits_{j=\frac{n}{2}+1}^{n}x_j$ along with query complexity count ($\QC(f)$) :}
\begin{algorithmic}
\label{algo}

\item[1] Begin with the state $\ket{\bm{0}}_0 \otimes_{i=1}^k \ket{0}_i$, consisting of the Query register and $\frac{n}{2}$ work qubits $w_i, 1 \leq i \leq \frac{n}{2}$.

\item[2] We apply a Hadamard to the first work qubit $w_1$ to get
$\ket{\psi_0}=\ff \big( \ket{\bm{0}}_0\ket{0}_1\otimes_{i=2}^k \ket{0}_i 
+\ket{\bm{0}}_0\ket{1}_1 \otimes_{i=2}^k \ket{0}_i \big)$.

\item[3] Then we run the subroutine $acq(i)$ of Lemma~\ref{lemma:0} 
for $\frac{n}{2}-1$ times for $0 \leq i \leq \frac{n}{2}-2$,
which evolves the state from $\ket{\psi_0}$ to $\ket{\psi_{\frac{n}{2}-1}}$,
where 
\begin{align*}
\ket{\psi_i}= & \ff \left( \ket{\bm{0}}_0 \ket{0}_1 
\bigotimes_{j=2}^i \ket{x_{j-1}}_j
\bigotimes_{j=i+1}^k \ket{0}_j 
+\ket{\bm{0}}_0 \ket{1}_1 
\bigotimes_{j=2}^i \ket{x_{k+j-1}}_j
\bigotimes_{j=i+1}^k \ket{0}_j
\right).
\end{align*}

\item[4]
Here let us define $g(\hx)=\prod_{i=1}^{\frac{n}{2}} x_i$
and $h(\tx)=\prod_{j=1}^{\frac{n}{2}} x_{\frac{n}{2}+j}$.
Then we apply the step described in Lemma~\ref{lemma:onetime}
which makes one query to the oracle.

Then after $\frac{n}{2}$ queries the system is in the state
\begin{equation*}
\ket{\psi_{\frac{n}{2}}}=
\ff \left( (-1)^{g(\hx)} \ket{\bm{0}}_0 \ket{0}_1 \bigotimes_{j=2}^k \ket{x_{j-1}}_j
+
(-1)^{h(\tx)}\ket{\bm{0}}_0 \ket{1}_1 \bigotimes_{j=2}^k \ket{x_{k+j-1}}_j \right).
\end{equation*}

\item[5] We then apply the transformation $untangle^s_n$
described in Lemma~\ref{lemma:3} where $s=\frac{n}{2}-1$.
This step requires 
a further $\fl$ queries, and finally the system is in the state
$$\ket{\beta_f}=
(-1)^{g'(\xv)}
\ket{\bm{0}}_0 
\kett*{g(\hx) \op h(\tx) }_1
\bigotimes_{i=1}^{k_1} (\ket{x_{2i}}_{i+1} \ket{x_{k+2i}}_{i+2})
$$
after a total of $\flt$ queries.

\item[6] Get the output by then measuring $w_1$ in the computational basis.

\end{algorithmic}

\end{algorithm}

\endgroup

This coupled with the generalized parity decision tree complexity of the function provides the first separation result.

\begin{theorem}
\label{th:5}
For the function 
$f_1= \prod\limits_{i=1}^{\frac{n}{2}}x_i \op 
 \prod\limits_{j=\frac{n}{2}+1}^{n}x_j $,
we have $\QC(f_1)=\flt$ and $D_{\op}(f_1)=n-1$.
\end{theorem}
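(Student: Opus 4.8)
The plan is to prove the two equalities separately: $\QC(f_1)=\flt$ is an upper bound certified by our explicit algorithm, whereas $D_{\op}(f_1)=n-1$ is an exact value read off from the granularity of $f_1$. Since all the transformational lemmas are already in hand, the work reduces to assembling them, verifying the final measurement, and tallying queries.

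For $\QC(f_1)$ I would first identify Algorithm~\ref{algo} as $\Q(f_1)$ and argue correctness by composing the lemmas in order. Starting from $\ket{\psi_0}$, the $\frac{n}{2}-1$ applications of $acq$ (Lemma~\ref{lemma:0}) produce $\ket{\psi_{\frac{n}{2}-1}}$; Lemma~\ref{lemma:onetime} then installs the relative phases, giving a state in which the branch $w_1=\ket{0}$ carries $(-1)^{g(\hx)}$ with $g(\hx)=\prod_{i=1}^{\frac{n}{2}}x_i$ and the branch $w_1=\ket{1}$ carries $(-1)^{h(\tx)}$ with $h(\tx)=\prod_{j=\frac{n}{2}+1}^{n}x_j$; and finally ${\sf untangle}^s_n$ (Lemma~\ref{lemma:3}) with $s=\frac{n}{2}-1$ decouples the two branches, leaving $w_1$ in $\ff((-1)^{g(\hx)}\ket{0}_1+(-1)^{h(\tx)}\ket{1}_1)$ tensored with a product state on the remaining qubits. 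The essential consistency check is that the state emitted by Lemma~\ref{lemma:onetime} is exactly the input required by Lemma~\ref{lemma:3}, which holds because $\hx$ and $\tx$ partition $\xv$ and each monomial of $f_1$ lies entirely inside one block.

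Next I would apply a Hadamard to $w_1$ and read out the answer. A one-line computation gives $H\,\ff\!\left((-1)^{g(\hx)}\ket{0}_1+(-1)^{h(\tx)}\ket{1}_1\right)=(-1)^{g(\hx)}\kett*{g(\hx)\op h(\tx)}_1$, so measuring $w_1$ in the computational basis returns $g(\hx)\op h(\tx)=f_1(\xv)$ deterministically, the prefactor $(-1)^{g(\hx)}$ being an irrelevant global phase. For the query budget I would sum the stages: the opening Hadamard is free, $acq$ is invoked $\frac{n}{2}-1$ times for $\frac{n}{2}-1$ queries, Lemma~\ref{lemma:onetime} costs one query, and ${\sf untangle}^s_n$ with $s=\frac{n}{2}-1$ costs $\fl$ queries, since under the standing assumption $n\equiv 2\bmod 4$ (Remark~\ref{rem:n}) $s$ is even and $t=\frac{s}{2}=\fl$. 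The total is $\frac{n}{2}+\fl=\flt$, as claimed.

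For $D_{\op}(f_1)=n-1$ the argument is immediate from the granularity machinery: $f_1$ is a perfect direct sum function in which every variable occurs exactly once, so with $l=n$ (whence $f_2$ is undefined) it belongs to ${\sf pdsp}(n,n,2)$, being the XOR of $q=2$ disjoint monomials each of degree $\frac{n}{2}\ge 2=q$. Lemma~\ref{th:par} then yields $D_{\op}(f_1)=n-q+1=n-1$ directly. I expect no genuinely hard step here; the main obstacle in the whole proof is purely organizational, namely matching the intermediate states across Lemmas~\ref{lemma:0}, \ref{lemma:onetime} and \ref{lemma:3} and keeping the query counts of the three stages aligned so that they sum to exactly $\flt$, all the substantive technical content having already been discharged in those lemmas and in Theorem~\ref{lemma:1}.
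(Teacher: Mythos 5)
Your proposal is correct and follows essentially the same route as the paper: the value $D_{\op}(f_1)=n-1$ is obtained exactly as in the paper by placing $f_1$ in ${\sf pdsp}(n,n,2)$ and invoking Lemma~\ref{th:par} with $q=2$, and the bound $\QC(f_1)=\flt$ is certified by Algorithm~\ref{algo}, i.e.\ by chaining Lemma~\ref{lemma:0} ($\frac{n}{2}-1$ queries), Lemma~\ref{lemma:onetime} (one query) and ${\sf untangle}^s_n$ from Lemma~\ref{lemma:3} with $s=\frac{n}{2}-1$ ($\fl$ queries), followed by a Hadamard on $w_1$ and measurement. Your write-up is in fact more explicit than the paper's terse proof — in particular the checks that $s=\frac{n}{2}-1$ is even under $n\equiv 2 \bmod 4$ so that $t=\frac{s}{2}=\fl$, and that the global phase $(-1)^{g(\hx)}$ is harmless — but these are exactly the details the paper delegates to the cited lemmas, so no new ideas are involved.
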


\begin{proof}
\ \\~

{\bf $D_{\op}(f)=n-1:$}
This is a direct implication of Proposition~\ref{th:par} where 
$f \in {\sf pdsp}(n,n,2)$

{\bf $\QC(f)=\flt:$}

We run Algorithm~\ref{algo} initializing it in the state
$\ket{\bm{0}}_0 \otimes_{i=1}^k \ket{0}_i$. 
This algorithm makes a total of $\frac{n}{2} + \fl=\flt$ queries, which completes the proof.

\end{proof}

For $f_1$ we are able to separate $\QC(f)$ and $D_{\op}(f)$,
but the algorithm is not provably optimal for this function. 
However we observe that this technique is indeed optimal for the 
following function. 
\begin{corollary}
\label{cor:2}
For the function $f_2$ on $n=2k$ variables where
$f_2(\xv)= \prod\limits_{i=1}^{ \flt }x_i 
\op \prod\limits_{j=\frac{n}{2} +1}^{n} x_j $
we have $\QC(f_2)=Q_E(f_2)=\flt$ and $D_{\op}(f_2)=n-1$.
\end{corollary}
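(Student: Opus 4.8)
The statement bundles three separate claims, $\QC(f_2)=\flt$, $D_{\op}(f_2)=n-1$, and the optimality $Q_E(f_2)=\flt$, and the plan is to prove the first two directly and then obtain the third as a sandwich. Since $\Q(f_2)$ is itself a legitimate exact quantum algorithm we always have $Q_E(f_2)\le\QC(f_2)$; hence, once $\QC(f_2)=\flt$ is established, proving $Q_E(f_2)=\flt$ reduces to the matching lower bound $Q_E(f_2)\ge\flt$. This lower bound is the one genuinely new ingredient compared with Theorem~\ref{th:5}, which gave the separation for $f_1$ but no optimality.

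First I would record the $\sf pdsp$ membership that controls $D_{\op}$. Setting $A=\{x_1,\dots,x_{\frac{n}{2}}\}$, $B=\{x_{\frac{n}{2}+1},\dots,x_{\flt}\}$ and $C=\{x_{\flt+1},\dots,x_n\}$, the two monomials of $f_2$ are $\prod_{x\in A\cup B}x$ and $\prod_{x\in B\cup C}x$, so that $f_2=\big(\prod_{x\in A}x\op\prod_{x\in C}x\big)\cdot\prod_{x\in B}x$. Thus $f_2\in{\sf pdsp}(n,\frt,2)$ with $\hx=A\cup C$ (of size $\frt$) and $\tx=B$ (of size $\fl$), the inner direct-sum part being a perfect direct sum with $q=2$ monomials, each of size at least $2$. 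Lemma~\ref{th:par} then immediately yields $D_{\op}(f_2)=n-q+1=n-1$.

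For $\QC(f_2)=\flt$ I would run the acquire--phase--untangle algorithm assembled in Section~\ref{sec:2-3}, exactly as for $f_1$ in Theorem~\ref{th:5}: the $acq$ steps of Lemma~\ref{lemma:0} load the branch variables, Lemma~\ref{lemma:onetime} installs the two monomial phases with a single query, and the protocol of Theorem~\ref{lemma:1}/Lemma~\ref{lemma:3} disentangles the stored data qubits before the final Hadamard on $w_1$, the query count telescoping once more to $\frac{n}{2}+\fl=\flt$. The single point that differs from $f_1$ is the shared block $B$, whose variables sit in both monomials; these are loaded identically into the two branches so that they feed both phases without generating entanglement, and the accounting of which stored qubits still require untangling has to be carried out against this overlap.

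The core of the corollary is the lower bound. I would restrict the input by fixing $x_{\flt+1}=0$. This variable lies in $C$, hence in the second monomial $\prod_{x\in B\cup C}x$, which therefore vanishes, whereas the first monomial $\prod_{x\in A\cup B}x=\prod_{i=1}^{\flt}x_i$ is untouched; so $f_2$ restricted to this subcube equals $\textrm{AND}_{\flt}(x_1,\dots,x_{\flt})$ on the $\flt$ surviving free variables. As fixing inputs can only decrease exact quantum query complexity, $Q_E(f_2)\ge Q_E(\textrm{AND}_{\flt})$, and since $\textrm{AND}_{\flt}$ is the full threshold function on $N=\flt$ bits, the formula $Q_E({\sf Threshold}^{N}_{k})=\max\{k,N-k+1\}$ of~\cite{amb3} gives $Q_E(\textrm{AND}_{\flt})=\max\{\flt,1\}=\flt$. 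Combining $\flt\le Q_E(f_2)\le\QC(f_2)=\flt$ closes the argument.

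The decisive point, and the reason optimality holds here but failed for $f_1$, is that this restriction can force $Q_E$ up only to the size of the largest monomial one can isolate by fixing variables, and $f_2$ is built so that its first monomial $\prod_{i=1}^{\flt}x_i$ has exactly $\flt$ variables, matching the algorithmic cost; the same restriction applied to $f_1$ isolates only $\textrm{AND}_{\frac{n}{2}}$ and yields $Q_E(f_1)\ge\frac{n}{2}<\flt$. The remaining technical obstacle I would watch is confirming that the overlap block $B$ does not inflate the untangling cost, so that the algorithm genuinely stays within the $\flt$ budget claimed for the whole ${\sf pdsp}(n,\frt,2)$ family.
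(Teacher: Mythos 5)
Your lower bound and your $D_{\op}$ computation are sound and essentially the paper's: the paper also reduces $f_2$ to $\textrm{AND}_{\flt}$ (fixing all of $x_{\flt+1},\dots,x_n$ to $0$ rather than your single variable $x_{\flt+1}$ --- an immaterial difference), and $D_{\op}(f_2)=n-1$ follows from Lemma~\ref{th:par} with $q=2$ exactly as you say. The genuine gap is in the upper bound, precisely at the point you flagged but deferred. You propose to realize the two full monomials $m_1=\prod_{x\in A\cup B}x$ and $m_2=\prod_{x\in B\cup C}x$ as the two branch phases, loading the shared block $B$ ``identically into the two branches.'' But a query that loads the \emph{same} variable into both branches acquires only one variable, whereas the parallelism that makes the budget work acquires two different variables (one per branch) per query. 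Count it: shared loading of $B$ costs $\fl$ queries with no parallel gain; loading $A$ minus one variable in branch~$0$ in parallel with $C$ minus one in branch~$1$ costs $\frac{n}{2}-1$; one further query installs both monomial phases via the controlled-permutation trick; and you must still untangle the $\frac{n}{2}-1$ positions where branch~$0$ holds $A$-values while branch~$1$ holds $C$-values or $\ket{0}$, costing $\fl$ more queries by Lemma~\ref{lemma:3}. The total is $\frac{n}{2}+2\fl = \flt+\fl = n-1$ queries (for $n\equiv 2 \bmod 4$) --- no better than the generalized parity tree. The scheme double-charges for $B$: once to feed it into both phases, while the $A$-versus-$C$ mismatch still has to be untangled in full.

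The paper's route avoids this by using the factorization $f_2=\bigl(\prod_{x\in A}x\op\prod_{x\in C}x\bigr)\prod_{x\in B}x$ \emph{algorithmically}, not only for the granularity bound as you did. The block $B$ never enters any phase: only $\prod_A$ and $\prod_C$ are acquired as branch phases, costing $\max\{\abs{A},\abs{C}\}=\frac{n}{2}$ queries, and since $\abs{C}=\fr<\frac{n}{2}$, branch~$1$ has exactly enough slack during the parallel loads to store all of $B$ at no extra cost. The missing idea is then the carry-over feature of the untangling protocol: one has freedom in \emph{which} stored values survive into the final product state, and the paper swaps the work qubits so that the $B$-values land on the surviving positions. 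After the Hadamard on $w_1$, one measures $\prod_A\op\prod_C$ from $w_1$ and each $x_j$, $\frac{n}{2}+1\le j\le\flt$, from a carried qubit, and outputs the product classically --- within the $\frac{n}{2}+\fl=\flt$ budget. To repair your proof, replace ``load $B$ into both branches so it feeds both phases'' with ``never phase $B$; store it in the slack of branch~$1$ and read it off after untangling.''
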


\begin{proof}
\

\paragraph*{$Q_E(f) \geq \flt$}
We can reduce $f_1$ to $\textrm{AND}_{\flt}$ by fixing the variables 
$x_i=0, \flt +1 \leq i \leq n$, and therefore evaluating 
$f$ must take at least $\flt$ queries as we know $Q_E(\textrm{AND}_{\flt})=\flt$.

\paragraph*{$\QC(f)=\flt$}
This function can in fact be written as 

$$f(\xv)= 
\left( \prod\limits_{i=1}^{ \frac{n}{2} }x_i
\bigoplus 
\prod\limits_{i=\flt+1}^{ n }x_i
\right)
\prod\limits_{j=\frac{n}{2}+1}^{ \flt }x_j.
$$

We proceed in the same direction as Theorem~\ref{th:5}. After $\frac{n}{2}$ queries 
the system is in the state
\begin{equation*}
\ket{\psi_{\frac{n}{2}}}=
\ff \left( (-1)^{g(\hx)} \ket{\bm{0}}_0 \ket{0}_1 \bigotimes_{j=2}^k \ket{x_{j-1}}_j
+
(-1)^{h(\tx)}\ket{\bm{0}}_0 \ket{1}_1 \bigotimes_{j=2}^k \ket{x_{k+j-1}}_j \right).
\end{equation*}

We now swap the values of the qubit $w_2$ to $w_k$ so that 
the value $\frac{n}{2}+j$ goes to the qubit $w_{1+2j}$. This ensures 
after the $\sf untangle^s_n$ is applied where $s=\frac{\frac{n}{2}-1}{2}$,
the system's state after $\flt$ queries is

$$
\ket{\bm{0}}_0 \ff \left( (-1)^{g(\hx)}  \ket{0}_1 
+
(-1)^{h(\tx)}\ket{\bm{0}}_0 \ket{1}_1 \right) 
\bigotimes_{j=1}^{s} \ket{x_{j-1}}_{2j} \ket{x_{k+j-1}}_{2j+1} 
$$

We can apply a Hadamard to obtain the state 
$
\ket{\bm{0}}_0 \ket{g(\hx) \op h(\tx)} 
\bigotimes_{j=1}^{s} \ket{x_{j-1}}_{2j} \ket{x_{k+j-1}}_{2j+1}.$

We can now obtain the value of $f(\xv)$ as both the values of 
$
g(\hx) \op h(\tx)=
\left( \prod\limits_{i=1}^{ \frac{n}{2} }x_i
\bigoplus 
\prod\limits_{i=\flt+1}^{ n }x_i
\right)
$
and measuring the qubit $w_{2i+2}$ gives the variable $x_{\frac{n}{2}+i}$. 

\end{proof}

We now briefly describe the phase kickback method before
finally proving the result of Theorem~\ref{th:main}
which gives us the main separation of this paper,
which is just a broader extension of Theorem~\ref{th:5}.

\subsubsection*{Phase Kickback}

The only technique we require to evaluate the functions of this kind 
apart from the ones used in Corollary~\ref{cor:2} 
is that of phase kickback a widely used methodology in black box algorithms.
Suppose we have a $k+1$ qubit system 
$\bigotimes_{i=1}^{k+1}w_i$ in the state
$\left( \bigotimes_{i=1}^k \ket{x_i}_i \right) \ket{-}_{k+1}$.
Let $S \subseteq [k]$ where $[k]= \{1,\ldots,k\}$.
Then the controlled not operation $\sf C^{\abs{S}}-NOT$ controlled on 
$w_i=\ket{1}_i, i \in S$ (i.e. $x_1=i~ \forall i \in S$ ) 
with $w_{k+1}$ being target works as follows. 

\begin{align}
\label{eq:3}
\left( \bigotimes_{i=1}^k \ket{x_i}_i \right) \ket{-}_{k+1}
\xrightarrow{\sf C^{\abs{S}}-NOT}
(-1)^{\left( \prod_{i \in S}x_i \right) }
\left( \bigotimes_{i=1}^k \ket{x_i}_i \right) \ket{-}_{k+1}
\end{align}

Let us now present Theorem~\ref{th:main}, which is one of our main results.
\begin{theorem}
\label{th:main}
Let $f \in \sf pdsp(n,\frt,t+1)$ be a function on $n=2k$ variables
such that
\begin{equation*}
f(\xv)= \left( \prod\limits_{i=1}^{\frac{n}{2}} x_i \bigoplus g(\xv') \right) \left(
\prod\limits_{j=\frac{n}{2}+1}^{\flt} x_j \right)
,~\xv'=\left( x_{\flt+1},x_{\flt+2}, \ldots , x_n \right).
\end{equation*}
where $g$ is perfect direct sum function defined on 
$\left( x_{\flt+1},x_{\flt+2}, \ldots , x_n \right)$ so that it contains $t$ monomials 
such that each monomial consists of at least $t+1$ variables.
Then we have (i) $\QC(f)=Q_E(f)=\flt$, (ii) $D_{\op}(f)=n-t$, (iii) $D(f)=n$.
\end{theorem}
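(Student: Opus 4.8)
The statement splits into three parts that I would treat separately, with the quantum upper bound being the substance. Part (iii) is immediate: since $f\in{\sf pdsp}(n,\frt,t+1)$, Theorem~\ref{th:df} gives $\pdeg(f)=n$, and as $D(f)\ge\pdeg(f)$ with the trivial bound $D(f)\le n$, we get $D(f)=n$. Part (ii) is a direct instantiation of Lemma~\ref{th:par} with $l=\frt$ and $q=t+1$, giving $D_{\op}(f)=n-(t+1)+1=n-t$. For the lower bound in part (i), I would restrict the input: setting every variable of $\xv'$ to $0$ annihilates $g$ (all its monomials have degree $\ge t+1\ge 2$, so there is no constant term), whence $f$ collapses to $\prod_{i=1}^{\flt}x_i=\textrm{AND}_{\flt}$. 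Since restriction cannot increase exact quantum query complexity and $Q_E(\textrm{AND}_{\flt})=\flt$, this yields $Q_E(f)\ge\flt$; combined with the algorithm below and $Q_E(f)\le\QC(f)$ it closes the chain $\flt\le Q_E(f)\le\QC(f)=\flt$.

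For the algorithm $\Q(f)$ I would generalize Algorithm~\ref{algo} and Corollary~\ref{cor:2} from a single product to the $t$-monomial direct sum $g$. Write $f=f_1'\cdot f_2'$ with $f_1'=\prod_{i=1}^{\frac{n}{2}}x_i\op g(\xv')$ and $f_2'=\prod_{j=\frac{n}{2}+1}^{\flt}x_j$. After a Hadamard on $w_1$, the $w_1=\ket{0}$ branch is charged with the variables $x_1,\dots,x_{\frac{n}{2}}$ and the $w_1=\ket{1}$ branch with $x_{\frac{n}{2}+1},\dots,x_n$; these two sets are disjoint and exhaust $\xv$, so each application of the $acq$ routine of Lemma~\ref{lemma:0} reads one fresh variable in each branch, and $\frac{n}{2}$ such queries load every variable into a work qubit of the appropriate branch.

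The phase accumulated on the $w_1=\ket{0}$ branch is $(-1)^{\prod_{i=1}^{\frac{n}{2}}x_i}$, obtained exactly as in Lemma~\ref{lemma:onetime}. The genuinely new ingredient is the $w_1=\ket{1}$ branch, which must accumulate $(-1)^{g(\xv')}$ for a sum of $t$ monomials rather than a single product. Once the $\xv'$-variables are loaded, I would compute this by phase kickback (Equation~\eqref{eq:3}): controlled on $w_1=\ket{1}$, apply a $\sf C^{\abs{S}}-NOT$ onto a $\ket{-}$ ancilla once per monomial, so the net phase is $\prod_{i=1}^{t}(-1)^{m_i}=(-1)^{g(\xv')}$. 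These multi-controlled gates make no oracle calls, so the query budget is untouched. Note that branch $1$ needs only $\lceil n/4\rceil$ of its variables for $g$; the remaining $\tx=(x_{\frac{n}{2}+1},\dots,x_{\flt})$ variables it reads are exactly those of $f_2'$, which must be recovered at the end.

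Applying ${\sf untangle}^s_n$ (Lemma~\ref{lemma:3}) then disentangles $w_1$ in a further $\fl$ queries; a Hadamard on $w_1$ leaves it carrying $f_1'=\prod_{i=1}^{\frac{n}{2}}x_i\op g(\xv')$. Because the untangling step retains branch $1$'s stored value in each synchronised slot, arranging the index map so that the $\tx$-variables sit at the surviving positions $\kett*{x_{r(s+2i)}}$ guarantees that the whole assignment to $\tx$ remains measurable, while the $g$-variables, needed only for the already-computed phase, are the ones consumed. Measuring $w_1$ then yields $f_1'$, measuring the retained $\tx$-qubits yields $f_2'$, and $f=f_1'\cdot f_2'$ is read off in $\frac{n}{2}+\fl=\flt$ queries. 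I expect the delicate part to be precisely this bookkeeping: balancing the two branches so that reads pair up into exactly $\frac{n}{2}$ acquire queries, fixing the index map so that the $\fl$ untangling queries disentangle $w_1$ while leaving the full $\tx$-assignment intact, and verifying that no variable is read twice, so that the count lands at exactly $\flt$.
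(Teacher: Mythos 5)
Your treatment of parts (ii) and (iii) and the lower bound $Q_E(f)\ge\flt$ via restriction to $\textrm{AND}_{\flt}$ coincides with the paper's argument and is fine. The gap is in the query accounting of your upper bound, and it is exactly one query. You spend all $\frac{n}{2}$ oracle calls as $acq$ loads, so that \emph{every} variable sits in a work qubit, and you recover both phases for free by kickback. But then each branch stores $k=\frac{n}{2}$ variables, i.e.\ $k$ work qubits are mismatched between the two superposition branches, and the untangling protocol repairs only two qubits per query (Theorem~\ref{lemma:1}; Lemma~\ref{lemma:3} costs $t$ queries for $s=2t$ mismatched qubits). Under the paper's standing assumption $n\equiv 2 \bmod 4$ (Remark~\ref{rem:n}), $k$ is odd, so untangling costs $\lceil k/2\rceil=\fr$ further queries, not the $\fl$ you claim, for a total of $\frac{n}{2}+\fr=\frt=\flt+1$. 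Your algorithm as described therefore misses the bound it must match, and no rearrangement of the untangling bookkeeping can recover the lost query: one residual mismatched qubit still costs a query.

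The idea you are missing is that in the paper the $\frac{n}{2}$-th oracle call is \emph{not} a load but a phase query, in the style of Lemma~\ref{lemma:onetime}: in the branch $w_1=\ket{0}$ it queries $x_k$ conditioned on the stored product $\prod_{i=1}^{k-1}x_i$, and in the branch $w_1=\ket{1}$ it queries $x_n$ conditioned on the stored variables of the \emph{unique} monomial $m_1$ of $g$ containing $x_n$ (uniqueness is guaranteed by the perfect direct sum structure). Phase kickback is then reserved only for the remaining $t-1$ monomials, all of whose variables are already stored. Consequently $x_k$ and $x_n$ never occupy work qubits, only $k-1=2\fl$ qubits are mismatched, untangling costs exactly $\fl$ queries, and the total is $\flt$. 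Note also that your writeup is internally inconsistent on precisely this point: you cannot both ``load every variable'' with $\frac{n}{2}$ $acq$ queries and obtain the branch-$0$ phase ``exactly as in Lemma~\ref{lemma:onetime}'' (that lemma consumes the final query on the phase instead of a load), and kicking back all $t$ monomials of $g$, as you propose, forces $x_n$ to be stored --- which is exactly what blows the budget. Your final remark about arranging the index map so that the $\tx$-variables survive untangling is correct and matches the paper, but it only works within the $\flt$ budget once the last query is merged with phase acquisition as above.
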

\begin{proof}~

\paragraph*{$Q_E(f) \geq \flt$}
For any such function, if we fix the variables $x_i, \flt+1 \leq i \leq n$ to $0$ 
then the function is reduced to $\sf \textrm{AND}_{\flt}$ which implies $Q_E(f) \geq \flt$.

\paragraph*{$D_{\op}(f)=n-t$}
This is a direct implication of Proposition~\ref{th:par} where the number of monomials 
is $t+1$.

\paragraph*{$\QC(f)=\flt$}
We initialize the algorithm in the state $\ket{\bm{0}}_0 \bigotimes_{i=1}^{k+2} \ket{0}_i$.
We first apply a Not gate and a Hadamard gate to $w_{k+2}$ and then a Hadamard gate 
which evolves the system to
\begin{align*}
\ff
\left(
\ket{\bm{0}}_0 \ket{0}_1 \left( \bigotimes_{i=2}^{k} \ket{0}_i \right) \ket{0}_{k+1} \ket{-}_{k+2}
+
\ket{\bm{0}}_0 \ket{1}_1 \left( \bigotimes_{i=2}^{k} \ket{0}_i \right) \ket{0}_{k+1} \ket{-}_{k+2}
\right).
\end{align*}
This state can be written as $\ket{\psi}_0 \ket{0}_{k+1} \ket{-}_{k+2}$ where
$\ket{\psi_0}$ is the starting state of Theorem~\ref{th:5} and Corollary~\ref{cor:2}.
We now apply the transformations $acq(i), 0 \leq i \leq \frac{n}{2}-2$
as defined in Lemma~\ref{lemma:0} which makes $\frac{n}{2}-1$ queries to the oracle. 
This evolves the system to the state 
$\ket{\psi}_{\frac{n}{2}-1} \ket{0}_{k+1} \ket{-}_{k+2}$, that is
$$
\ff \left(
\ket{\bm{0}}_0 \ket{0}_1 \left( \bigotimes_{i=2}^{k} \ket{x_{i-1}}_i \right) 
+
\ket{\bm{0}}_0 \ket{1}_1 \left( \bigotimes_{i=2}^{k} \ket{x_{k+i-1}}_i \right) \right) \ket{0}_{k+1} \ket{-}_{k+2}
.
$$
This transformation is same as described in Theorem~\ref{th:3} and
since no operation is made on the  $k+1$ and $k+2$-th qubit
their states remain unchanged.
We now acquire the phases 
$(-1)^{\left(\prod_{i=1}^k x_i\right)}$ and $(-1)^{g(\xv')}$.

Since $g$ has a perfect direct sum representation 
there is a single monomial (say $m_1$) in $g(\xv')$ that contains $x_n$.
Let the other variables of the monomial be $x_{\flt+i}, i \in S_1$ 
where $S_1 \subseteq [\fr]$.
Therefore the qubits storing these values in the superposition 
state  with $w_i=\ket{1}_1$ are $w_{(1+\flt+i)}, i \in S_1$.
We then apply the following operations.

\begin{itemize}
\item[] Controlled on $w_1=\ket{0}$, we apply a controlled not gate with 
$Q_n$ as target and $w_i, 2 \leq i \leq k$ as controls.

\item[] Controlled on $w_1=\ket{1}$, we apply a controlled not gate with 
$Q_n$ as target and $w_{\fl+i+1}, i \in S_1$ as controls.
\end{itemize}
This transforms the system to the state
\begin{align*}
 \ff
\left( 
\kett*{\bm{ \prod_{i=1}^{k-1} x_i}}_0 \ket{0}_1 \left( \bigotimes_{i=2}^{k} \ket{x_{i-1}}_i \right)
\right.
+
\left.
\kett*{\bm{\prod_{i \in S_1} x_{\flt+i}}}_0 \ket{1}_1 \left( \bigotimes_{i=2}^{k} \ket{x_{k+i-1}}_i \right) 
\right) \\
\ket{0}_{k+1} \ket{-}_{k+2}
.
\end{align*}
\
The next operations are $\cc P^n_k$ and $\cd P^n_n$, followed by the oracle 
and then $\cc P^n_k$ and $\cd P^n_n$ again, which results in the same transformation
as shown in Equation~\ref{eq:2} with the only difference that the monomial corresponding to
the superposition state with $w_1=\ket{1}$ has changed.
This forms 
\begin{align*}
\ket{\psi_k}=
\ff 
\left( (-1)^{ \prod_{i=1}^{k} x_i} 
\ket{\bm{0}}_0 \ket{0}_1 \left( \bigotimes_{i=2}^{k} \ket{x_{i-1}}_i \right) 
\right.
+
\left. (-1)^{m_1}
\ket{0}_0 \ket{1}_1 \left( \bigotimes_{i=2}^{k} \ket{x_{k+i-1}}_i \right) 
\right)
\ket{0}_{k+1} \ket{-}_{k+2}
\end{align*}
after $\frac{n}{2}$ queries.

We now obtain the phases corresponding to the other monomials $m_i, 2 \leq i \leq t$ 
using phase kickback as shown in  Equation~\ref{eq:3}.
Let the variables in the $i$-th monomial be $x_{\flt+j}, j \in S_i$,$ S_i \subseteq [\fr]$.
Controlled on $w_1=\ket{1}$,
corresponding to each monomial $m_i$, we apply 
the operation $\sf C^{\abs{S_i}}-NOT$ on $w_{k+2}$,
where the $\abs{S_i}$ controls are $w_{\fl+j+1}=\ket{1}, j \in S_i$.
After the phases corresponding to the 
$t-1$ monomials of $g$ are evaluated this way, the system is in the 
state
\begin{align*}
\ket{\psi_k}= \ff & \left( (-1)^{ \prod_{i=1}^{k} x_i} 
\ket{\bm{0}}_0 \ket{0}_1 \left( \otimes_{i=2}^{k} \ket{x_{i-1}}_i \right) 
\right.
\\+ &
\left. (-1)^{\op_{i=1}^t m_i}
\ket{0}_0 \ket{1}_1 \left( \otimes_{i=2}^{k} \ket{x_{k+i-1}}_i \right) 
\right)
\ket{0}_{k+1} \ket{-}_{k+2}
\\ =
\ff & \left( (-1)^{ \prod_{i=1}^{k} x_i} 
\ket{\bm{0}}_0 \ket{0}_1 \left( \otimes_{i=2}^{k} \ket{x_{i-1}}_i \right)
\right.
\\+ &
\left. (-1)^{g(\xv')}
\ket{0}_0 \ket{1}_1 \left( \otimes_{i=2}^{k} \ket{x_{k+i-1}}_i \right) 
\right).
\ket{0}_{k+1} \ket{-}_{k+2}
\end{align*}

From here on the algorithm proceeds identically as Corollary~\ref{cor:2}
We first swap the values of the qubits in the superposition state with $w_1=\ket{1}$
so that the qubits are in the state $w_{2i+1}=\ket{x_{k+i}}$.
Then the untangling protocol makes $\frac{s}{2}$ queries and the 
system is in the following state after an application of Hadamard gate on $w_1$.

$$
(-1)^{g'(\xv)}
\ket{\bm{0}}_0 
\kett*{\prod_{i=1}^k x_i \op g(\xv') }_1
\bigotimes_{i=1}^{s} (\ket{x_{2i}}_{2i} \ket{x_{k+i}}_{2i+1})  
\ket{0}_{k+1}\ket{-}_{k+2}.
$$

From here-on we can obtain the value of the monomial 
$\prod\limits_{j=\frac{n}{2}+1}^{\flt} x_j $ as the value of each variable $x_j$
is the state of the qubit $w_{2j+1}$, which is in the state $\ket{x_j}_{2j+1}$.
This completes the proof.

\end{proof}

The number of functions covered by the class, referred in Theorem~\ref{th:main}, is as follows. 
\begin{corollary}
\label{cor:3}
For any $n$ there are $\Omega \left( 2^{\frac{\sqrt{n}}{2}} \right)$ functions
(without considering permutation of variables)
for which we can obtain $\Q(f)=Q_E(f)<D_{\op}(f)$.
\end{corollary}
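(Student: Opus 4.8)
The plan is to show that the separating functions produced by Theorem~\ref{th:main} already come in exponentially many inequivalent flavours, by reducing the count to a purely combinatorial estimate and then exhibiting an explicit sub-family of the right size.

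\emph{Reduction to a partition count.} First I would check that every admissible parameter gives a genuine separation: for $f \in {\sf pdsp}(n,\frt,t+1)$ Theorem~\ref{th:main} gives $\QC(f)=Q_E(f)=\flt$ and $D_{\op}(f)=n-t$, and $\flt < n-t$ exactly when $t < n-\flt = \fr$, which holds for every admissible $t$ (indeed $t \le \fl < \fr$). With the partition of $\xv$ into $\hx,\tx$ and the fixed factor $\prod_{j=n/2+1}^{\flt}x_j$ frozen, the only freedom in $f$ is the choice of the perfect direct sum function $g$ on the $\fr$ variables of $\xv'$, which must have $t$ monomials each of size at least $t+1$. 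Since we count up to permutation of variables, two such $g$ are equivalent iff they share the same multiset of monomial sizes; distinct multisets give inequivalent $g$ and hence inequivalent $f$, so there is no overcounting. Thus it suffices to lower bound $P$, the number of integer partitions of $\fr$ into exactly $t$ parts each at least $t+1$, maximised over admissible $t$ (those with $t(t+1)\le \fr$).

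\emph{Choosing the regime.} Note $\sqrt{\fr}=\tfrac12\sqrt n$, so the target $\Omega(2^{\sqrt n/2})$ is $\Omega(2^{\sqrt{\fr}})$. The feasibility constraint $t(t+1)\le\fr$ caps $t$ at $\Theta(\sqrt\fr)$: at the cap there is no slack and only one partition exists, while for small $t$ there are too few parts to generate many partitions, so I would work at an intermediate $t=\Theta(\sqrt\fr)$. Writing each part as $(t+1)+s_i$ turns $P$ into the number of partitions of $R:=\fr-t(t+1)=\Theta(\fr)$ into at most $t=\Theta(\sqrt R)$ parts -- precisely the balanced regime in which restricted partitions number $2^{\Theta(\sqrt n)}$.

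\emph{An explicit family to pin the constant.} To avoid invoking restricted-partition asymptotics, I would instead count partitions of $\fr$ into $t$ \emph{distinct} parts taken from an integer interval $I$ of length about $2t$ centred at $\fr/t$, chosen so that $t\approx\tfrac12\sqrt\fr$ and every element of $I$ exceeds $t+1$. There are $\binom{|I|}{t}\approx 2^{2t}\approx 2^{\sqrt\fr}$ such $t$-subsets, and a local-limit/concentration estimate shows that a $1/\mathrm{poly}(n)$ fraction of them sum to exactly $\fr$ (the target sum being near the mean by the centring). Each surviving subset is an admissible partition, yielding $2^{\sqrt\fr(1-o(1))}$ inequivalent functions; enlarging $t$ by an additive $O(\log n)$ absorbs the polynomial loss and delivers $\Omega(2^{\sqrt{\fr}})=\Omega(2^{\sqrt n/2})$.

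\emph{Main obstacle.} The reduction is routine bookkeeping; the delicate point is the combinatorial optimisation. One must reconcile the cap $t(t+1)\le\fr$ -- which forbids the naive choice of $\sqrt n$ parts each of size $\sqrt n$ (that would need $\Theta(n)\gg\fr$ variables) -- with the need for enough parts, and then extract the exact exponent $\tfrac12$. Verifying the local-limit count for subset sums hitting the prescribed total $\fr$ (equivalently, carrying out the saddle-point estimate for partitions into $\Theta(\sqrt R)$ parts) is the step that needs genuine care, including checking that $\fr$ falls in the central range and that distinctness across the family is preserved.
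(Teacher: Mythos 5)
Your proof follows the same reduction as the paper's: freeze the partition of $\xv$ into $\hx,\tx$ and the product factor, observe that all the freedom lies in the perfect direct sum function $g$ on the $\fr$ variables of $\xv'$, and reduce the count to partitions of $\fr$ into $t$ parts each of size at least $t+1$ (the separation check $\flt < n-t$ is implicit in the paper via Theorem~\ref{th:main} and the range $1\le t\le \fl$; your verification of it is correct in substance, though your parenthetical $t\le\fl<\fr$ fails when $n\equiv 0\bmod 4$ since then $\fl=\fr$ — harmless, because the feasibility cap $t(t+1)\le\fr$ forces $t=O(\sqrt n)\ll\fr$, and the paper restricts to $n\equiv 2\bmod 4$ anyway). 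Where you genuinely diverge is the enumeration. The paper counts, for each $t$, the solutions of $\sum_{i=1}^{t}v_i=\fr$ with $v_i\ge t+1$ by stars and bars, obtaining the binomial lower bound $\binom{\fr-t^2-t-1}{t-1}$ and summing over $t\le\sqrt{\fr-1}$ to get $\Omega\left(2^{\sqrt{n}/2}\right)$ in two lines; you instead fix $t=\Theta(\sqrt{\fr})$ and count $t$-subsets of distinct elements from an interval of length about $2t$ summing exactly to $\fr$, via a local-limit/unimodality estimate plus an $O(\log n)$ enlargement of $t$ to absorb the polynomial loss. Your route is correct and can be made rigorous — the subset-sum counting function for $t$-subsets of an integer interval is a Gaussian-binomial coefficient, symmetric and unimodal, so the central fraction is at least the average $1/\Theta(n)$ — but it is heavier than needed, and you could bypass the exact-sum step entirely within your own strategy: choose $t-1$ distinct offsets freely from $\{0,\dots,2t-1\}$ and absorb the remainder into one final part of size about $\fr-3t^2=\Theta(\fr)$, which exceeds every other part, so distinct subsets give distinct multisets and you get $\binom{2t}{t-1}=2^{\sqrt{\fr}(1-o(1))}$ directly, with the same logarithmic bump in $t$ recovering the constant. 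On one point you are more careful than the paper: its phrase ``the number of solutions'' counts ordered compositions, which overcounts functions up to variable permutation by as much as $t!$; your multiset-of-monomial-sizes criterion (distinct multisets give distinct ANF monomial-size profiles, hence PNP-inequivalent functions) is the cleaner justification of ``without considering permutation of variables,'' and your distinct-parts family pins the exponent constant $\tfrac12$ without any division by $t!$.
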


\begin{proof}

We give a lower bound on number of functions which satisfy the 
constraints of the function described in Theorem~\ref{th:main}.
Without considering the permutation of variables, we can 
simply count the number of ways the function $g(\xv')$
can be constructed. 
The function $g$ is defined on $\fr$ variables and 
is it self a perfect direct sum function as defined in Definition~\ref{def:2}.
If $g$ contains $t$ monomials then then each of the monomial must 
have at least $t+1$ variables in them. This is because
$\prod_{i=1}^k x_i \bigoplus g(\xv')$ must satisfy the constraints 
of Definition~\ref{def:2}.
Therefore each construction of $g$ is a different way of 
partitioning the $\fr$ variables into $t$ sets.
If we do not consider which variable is in which monomial,
and rather just the distribution of the variables in the
partitions, then this becomes same as
finding the number of solutions to 
$\sum_{i=1}^t v_i=\fr$ where $v_i \geq t+1~ \forall i$.
We do this is as it is well known that if a function is 
derived from some other function just by a permutation of 
the variable names, they have the same query complexity
and are called PNP equivalent~\cite{exact}. We aim to count the 
functions that are not PNP equivalent with each other.
The number of such partitions is $\displaystyle {n +t -(t+1)^2 -1 \choose t-1}= {\fr-t^2-t-1 \choose t-1}$.
Here $t$ is minimum $1$ and at maximum $\sqrt{\fr-1}$.
Therefore the total possible number of function is
\begin{align*}
\left(
 \displaystyle \sum_{x=1}^{\sqrt{\fr-1}} {\fl-x^2-x -1 \choose x-1}
\right)
> 
\left(
 \displaystyle \sum_{x=1}^{\sqrt{\fr-1}} { \sqrt{\fr-1}  \choose x}
\right)
=
\Omega \left( 2^{\sqrt{\frac{n}{4}}} \right)
=
\Omega \left( 2^{\frac{\sqrt{n}}{2}} \right)
.
\end{align*}
\
\end{proof}

Again, note that the advantage is from being able to deterministically untangle two qubits with a single query,
owing to the result of Theorem~\ref{lemma:1} and the fact that these functions have high granularity.

The next important fact is in untangling we have a degree 
of freedom in terms of which variables we want to carry over 
to the end, and then their values can again be deterministically 
to obtain other monomials. 
In fact in the state $\ket{\beta_0}$, if there are $s$ 
variables each whose values are stored in the two superposition 
state, we can carry over $\lceil \frac{s}{2} \rceil$ values
from each of the superposition states to the final state 
that is simply a tensor product of qubits in computational 
basis states, meaning the value of all the variables stored 
in the working memory can be deterministically obtained. 

This is evident from the structure of the state that we 
obtain at the end for $f_1(\xv)$ and $f_2(\xv)$
(For $f_2$ we have already decided on which values
from $x_i, \frac{n}{2} \leq i \leq n$ we want to carry over
to the final, pre-measurement state):  
$$\ket{\beta_f}=
(-1)^{g'(\xv)}
\ket{\bm{0}}_0 
\kett*{\prod\limits_{i=1}^k x_i \op \prod\limits_{i=k+1}^n x_i }_1
\bigotimes_{i=1}^{k_1} (\ket{x_{r(2i)}}_{i+1} \ket{x_{k+i}}_{i+2})  
.$$
The algorithm for the other 
functions progresses in a similar manner.

This coupled with the fact that the $\sf pdsp$ class has high granularity, which allows us to efficiently 
lower bound the generalized parity tree complexity gives us the desired advantage. 

\section{The results for MM type Bent functions}
\label{sec:4}
In this section we apply our techniques on Maiorana-McFarland (MM) type functions~\cite{bent1}. As we have pointed out in the 
introduction, our investigation started with the study of MM bent functions on small number of variables.

\subsection{MM Bent Functions on $4$ and $6$ variables}
\label{sec:31}
It has been shown in \cite{parity} that we can construct an exact quantum query algorithm 
to evaluate any MM Bent function $f$ on $n$ variables with $\lceil \frac{3n}{4} \rceil$ queries using the parity decision tree technique.
This method utilizes the definition of the MM construction.
However, given that we only know $Q_E(f) \geq \frac{n}{2}$,
this does not rule out the possibility of an algorithm with lesser query complexity.
To verify the tightness of the upper bound due to the parity method,
we obtained the exact quantum query complexity of the functions
$f^{id}_4(x)=x_1x_3 \op x_2x_4$ and $f^{id}_6= x_1x_4 \op x_2x_5 \op x_3x_6$ 
using the semidefinite programming method of~\cite{sdp}, utilizing the CVX package of Matlab~\cite{cvx}.
It is worth mentioning here that the default solvers of CVX couldn't accurately solve the 
semidefinite program for $n=6$, and we had to use a commercial solver called 'Mosek'.
The parity method requires $3$ and $5$ queries for $f^{id}_4$ and $f^{id}_6$ respectively. 
For $f^{id}_4$ the exact quantum query complexity of the function indeed matched that value.
However, we found $Q_E(f^{id}_6)=4$, which is lower than the query complexity of the parity method
and could not formulate any other parity based method that reached the query complexity of $4$. This was the starting point of trying to design a new exact quantum query algorithm that could meet touch this lower bound. Although $\F$-polynomial and untangling based algorithms that we designed are not provably optimal for this class, we were able to use the same philosophy to obtain optimal results for the $\sf pdsp$ class, as described in Section~\ref{sec:2}. In this direction we first develop our algorithm for the function $f^{id}_n$ 
and then extend it for a larger class of MM type bent functions, 
of the size doubly exponential in $\frac{n}{4}$. 

\subsection{Extending for general $n$} 
\label{sec:41}
We first extend our techniques to build an exact quantum algorithm 
for evaluating $f^{id}_n(\xv)=
\bigoplus_{i=1}^{\frac{n}{2}} \left( x_ix_{\frac{n}{2}+i} \right)$ 
that requires $\frac{n}{2} + \lceil \frac{n}{8} \rceil = \lceil \frac{5n}{8} \rceil$ 
queries. We then observe that the same algorithm in fact evaluates a much larger
class of functions in $\mathbb{B}_n$, although the permutation is still identity
permutation. Finally we show how 
this algorithm can be modified
to evaluate functions in $\mathbb{B}_n$ beyond the identity permutation. 

For the functions $f^{id}_n$ we need $l+1$ qubits as working memory where $l=\fl$. 
First we describe the phase obtaining method corresponding to the monomials when evaluating $f^{id}_n$. This is very similar to that of the $\sf pdsp$ class.

\begin{lemma}
\label{lemma2}
Corresponding to a query algorithm for a function on $n=2k$ variables
with $l= \fl$,
the state $\ket{\psi_i}$ can be transformed to the state $\ket{\psi_{i+1}}$
by making two queries to the oracle for $0 \leq i < l$ where the state 
$\ket{\psi_i}$ is defined as
\
\begin{align*}
\ket{\psi_i}=&\ff(-1)^{\sum_{j=1}^i x_jx_{k+j}} \ket{\mb{0}}_0\ket{0}_1  
\bigotimes_{j=1}^i \ket{x_j}_{j+1}
\bigotimes_{j=i+2}^{l+1} \ket{0}_j \\
+&\ff (-1)^{ \sum_{j=1}^i x_{l+j}x_{l+k+j}} \ket{\mb{0}}_0\ket{1}_1
\bigotimes_{j=1}^i \ket{x_{l+j}}_{j+1}
\bigotimes_{j=i+1}^{l} \ket{0}_{j+1} 
\end{align*}

\end{lemma}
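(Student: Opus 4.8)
The plan is to realise the single induction step $\ket{\psi_i} \to \ket{\psi_{i+1}}$ as a composition of two sub-routines, each costing exactly one oracle call, so that the total cost is the claimed two queries. Comparing the two target states, the only changes are (i) the work qubit $w_{i+2}$ must be loaded with $x_{i+1}$ in the $w_1=\ket{0}$ branch and with $x_{l+i+1}$ in the $w_1=\ket{1}$ branch, and (ii) the relative phases of the two branches must pick up the new quadratic monomials $(-1)^{x_{i+1}x_{k+i+1}}$ and $(-1)^{x_{l+i+1}x_{l+k+i+1}}$ respectively. The first sub-routine handles (i) and the second handles (ii); the single structural novelty over the $\sf pdsp$ case is that each monomial is a product of two variables, which is precisely why one query no longer suffices.

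The first sub-routine is essentially the $acq$ transformation of Lemma~\ref{lemma:0} re-instantiated with the present indices. I would apply $\cc\sg{n}{0}{i+1}$ and $\cd\sg{n}{0}{l+i+1}$ to branch the query register into $\ff(\ketm{0}+\ketm{i+1})$ and $\ff(\ketm{0}+\ketm{l+i+1})$, call the oracle once to imprint $(-1)^{x_{i+1}}$ and $(-1)^{x_{l+i+1}}$, collapse these phases onto a computational basis state of the query register with $\cc\parg{n}{0}{i+1}$ and $\cd\parg{n}{0}{l+i+1}$, and finally copy the result into $w_{i+2}$ and reset the query register with $\cn{Q_n}{w_{i+2}}$ followed by $\cn{w_{i+2}}{Q_n}$. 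By the analysis of Lemma~\ref{lemma:0}, after this one query $w_{i+2}$ holds $x_{i+1}$ (resp. $x_{l+i+1}$) in the two branches, while all phases and all other qubits are left untouched.

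The second sub-routine mirrors the product-phase acquisition of Lemma~\ref{lemma:onetime} and Equation~\eqref{eq:2}, the difference being that the monomial has only two factors and one of them, $x_{i+1}$ (resp. $x_{l+i+1}$), is already stored in $w_{i+2}$. Controlled on $w_1=\ket{0}$ I would apply $\cn{w_{i+2}}{Q_n}$ to move the query register into $\ketm{x_{i+1}}$ and then $\cc\pg{n}{k+i+1}$ to route $\ketm{1}$ to $\ketm{k+i+1}$; controlled on $w_1=\ket{1}$ I would symmetrically use $\cn{w_{i+2}}{Q_n}$ and $\cd\pg{n}{l+k+i+1}$. A single oracle call then imprints $(-1)^{x_{k+i+1}}$ exactly when $x_{i+1}=1$ in the first branch, i.e. the phase $(-1)^{x_{i+1}x_{k+i+1}}$, and symmetrically $(-1)^{x_{l+i+1}x_{l+k+i+1}}$ in the second branch. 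Re-applying $\cc\pg{n}{k+i+1}$, $\cd\pg{n}{l+k+i+1}$, and the two $\cn{w_{i+2}}{Q_n}$ gates uncomputes the query register back to $\ketm{0}$ without disturbing $w_{i+2}$.

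The bookkeeping needing the most care is verifying that the query register returns to $\ketm{0}$ coherently in \emph{both} branches after each sub-routine, since any residual entanglement between $Q_n$ and the work qubits would destroy the product form of $\ket{\psi_{i+1}}$; this is exactly why the phase step is controlled on $w_{i+2}=\ket{1}$, which guarantees no spurious phase when $x_{i+1}=0$. I would also confirm that the oracle is invoked exactly twice and that the index shifts ($k+i+1$ in the $w_1=\ket{0}$ branch versus $l+k+i+1$ in the $w_1=\ket{1}$ branch) correctly track the two halves of the monomial partition underlying $f^{id}_n$. Composing the two sub-routines and reading off the accumulated phases and the contents of $w_{i+2}$ then yields precisely $\ket{\psi_{i+1}}$, completing the step.
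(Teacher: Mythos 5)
Your proposal is correct and follows essentially the same route as the paper: one query with $\cc\sg{n}{0}{i+1}$, $\cd\sg{n}{0}{l+i+1}$, the oracle, and $\cc\parg{n}{0}{i+1}$, $\cd\parg{n}{0}{l+i+1}$ to acquire $x_{i+1}$ (resp.\ $x_{l+i+1}$), then a second query sandwiched between the permutations $\cc\pg{n}{k+i+1}$, $\cd\pg{n}{l+k+i+1}$ to imprint the product phase $(-1)^{x_{i+1}x_{k+i+1}}$ (resp.\ $(-1)^{x_{l+i+1}x_{l+k+i+1}}$). The only difference is cosmetic: you transfer the acquired variable into $w_{i+2}$ immediately after the first query and reload the query register by a CNOT before the second, whereas the paper keeps the value in the query register through the second query and swaps it into $w_{i+2}$ at the very end; the two orderings are functionally identical.
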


\begin{proof}

We define a protocol $\sf acq_1(i)$ which functions as follows.
We first apply the unitaries $\cc \sg{n}{0}{i+1}$ and $\cd \sg{n}{0}{l+i+1}$ on $\ket{\psi}_i$. 
This transforms the system to 
\
\begin{align*}
&\ff\left( (-1)^{\sum_{j=1}^i x_jx_{k+j}} \right) \frac{\ket{\mb{0}}_0 + \ket{\mb{i+1}}_0}{2}\ket{0}_1\ket{x_1}_2\ldots\ket{x_i}_{i+1}\ket{0}_{i+2}\ldots \ket{0}_{l+1} 
\\
+&\ff \left( (-1)^{ \sum_{j=1}^i x_{l+j}x_{l+k+j}} \right)  \frac{\ket{\mb{0}}_0 + \ket{\mb{l+i+1}}_0}{2}\ket{1}_1\ket{x_{l+1}}_2 \\ & \ldots\ket{x_{l+i}}_{i+1}\ket{0}_{i+2}\ldots \ket{0}_{l+1}. &
\end{align*}
We apply the oracle on this state, forming
\begin{align*}
&\ff \left( (-1)^{\sum_{j=1}^i x_jx_{k+j}} \right)
\frac{\ket{\mb{0}}_0 + (-1)^{x_{i+1}}\ket{\mb{i+1}}_0}{2}\ket{0}_1\ket{x_1}_2 \\ &
\ldots\ket{x_i}_{i+1}\ket{0}_{i+2}\ldots \ket{0}_{l+1}
\\
+& \ff \left( (-1)^{ \sum_{j=1}^i x_{l+j}x_{l+k+j}} \right) \frac{\ket{\mb{0}}_0 + (-1)^{x_{l+i+1}}\ket{\mb{l+i+1}}_0}{2}\ket{1}_1\ket{x_{l+1}}_2 \\ &
\ldots\ket{x_{l+i}}_{i+1}  \ket{0}_{i+2}\ldots \ket{0}_{l+1}.
\end{align*}
The next unitaries are $\cc \parg{n}{0}{i+1}$ and $\cd \parg{n}{0}{l+i+1}$, which forms the state 
\begin{align*}
&\ff(-1)^{\sum_{j=1}^i x_jx_{k+j}} \ket{\mb{x_{i+1}}}_0 \ket{0}_1\ket{x_1}_2\ldots\ket{x_i}_{i+1}\ket{0}_{i+2}\ldots \ket{0}_l \\
+&\ff (-1)^{ \sum_{j=1}^i x_{l+j}x_{l+k+j}} \ket{\mb{x_{l+i+1}}}_0 \ket{1}_1\ket{x_{l+1}}_2\ldots\ket{x_{l+i}}_{i+1}\ket{0}_{i+2}\ldots \ket{0}_{l+1}.
\end{align*}

We then use the permutation matrices $\cc \pg{n}{k+i+1}$ and $\cd \pg{n}{l+k+i+1}$, 
then make a query to the oracle, and use the permutation matrices
with the same controls. The resultant state is then 
\begin{align*}
&\ff(-1)^{\sum_{j=1}^i x_jx_{k+j}} (-1)^{x_{i+1}x_{k+i+1}} \ket{\mb{x_{i+1}}}_0  \ket{0}_1\ket{x_1}_2\ldots\ket{x_i}_{i+1}\ket{0}_{i+2}\ldots \ket{0}_{l+1} \\
+&\ff (-1)^{ \sum_{j=1}^i x_{l+j}x_{l+k+j}} (-1)^{x_{i+l+1}x_{l+k+i+1}} \ket{\mb{x_{l+i+1}}}_0 \ket{1}_1\ket{x_{l+1}}_2\ldots\ket{x_{l+i}}_{i+1} \\ 
&\ket{0}_{i+2}\ldots \ket{0}_{l+1}. \\
=&\ff(-1)^{\sum_{j=1}^{i+1} x_jx_{k+j}} \ket{\mb{x_{i+1}}}_0  \ket{0}_1\ket{x_1}_2\ldots\ket{x_i}_{i+1}\ket{0}_{i+2}\ldots \ket{0}_{l+1} \\
+&\ff (-1)^{ \sum_{j=1}^{i+1} x_{l+j}x_{l+k+j}} \ket{\mb{x_{l+i+1}}}_0 \ket{1}_1\ket{x_{l+1}}_2\ldots\ket{x_{l+i}}_{i+1}\ket{0}_{i+2}\ldots \ket{0}_{l+1}. \\
\end{align*}

Finally we swap the values of the query register and the $i+2$-th work qubit which is i the $\ket{0}$ state
in both superposition state. Which results in the state
\begin{align*}
&
\ff(-1)^{\sum_{j=1}^{i+1} x_jx_{k+j}} \ket{\mb{0}}_0\ket{0}_1
\bigotimes_{j=1}^{i+1} \ket{x_j}_{j+1}
\bigotimes_{j=i+2}^{l+1} \ket{0}_{j_1} \\
+&\ff (-1)^{ \sum_{j=1}^{i+1} x_{l+j}x_{l+k+j}} \ket{\mb{0}}_0\ket{1}_1
\bigotimes_{j=1}^{i+1} \ket{x_{l+j}}_{j+1}
\bigotimes_{j=i+2}^{l} \ket{0}_{j+1} \\
&=\ket{\psi_{i+1}}
\end{align*}
Therefore we get $\ket{\psi_i} \xrightarrow{\sf acq_1(i)} \ket{\psi_{i+1}}$
and this completes the proof.

\end{proof}

We now start describing the algorithm for evaluating $f^{id}_n$ 
assuming $n \equiv 0 \mod 4$ ( then $l=\fl=\fs$).
We start with the state 
$$\ket{\psi}_0= \ff \big( \ket{\mb{0}}_0\ket{0}_1 \bigotimes_{j=1}^l \ket{0}_{j+1}
+
\ket{\mb{0}}_0\ket{1}_1 \bigotimes_{j=1}^l \ket{0}_{j+1}  \big)$$

We then apply $\sf acq_1(i), {0 \leq i < l}$ described in Lemma~\ref{lemma2}.
transforming the system to the state
\begin{align*}
\ket{\psi}_{l}=&\ff(-1)^{\sum_{j=1}^{l} x_jx_{k+j}} \ket{\mb{0}}_0\ket{0}_1
\bigotimes_{j=1}^l \ket{x_j}_{j+1} \\
+&\ff (-1)^{ \sum_{j=1}^{l} x_{l+j}x_{l+k+j}} \ket{\mb{0}}_0\ket{1}_1
\bigotimes_{j=1}^l \ket{x_{l+j}}_{j+1}
\end{align*}

At this point, we have used $\frac{n}{2}$ queries and have obtained all the required phases 
needed to evaluate the function's value for the corresponding input. 
However, the system at this point is entangled if any of the qubits $w_i$ is in different state
in the two states.
We now construct the next building block of the algorithm, which is built to bring two qubits 
in the two superposition states (one with $w_1=\ket{0}$ and the other with $w_1=\ket{1}$
to the same state using a single query.
This method of untangling two qubits with one query is the foundational step of algorithm. 
\
Since there are $\frac{n}{4}$ qubits containing values of different variables 
in $\ket{\psi}_{\frac{n}{4}}$, this process needs to be applied 
$\lceil \frac{n}{8} \rceil$ times to get the desired result and un-entangle the system.
We now describe this methodology.

Recall that the state of the system after
$\frac{n}{2}$ queries is in the state
\begin{align*}
\ket{\psi}_{\fs}=&\ff(-1)^{ \left( \sum_{j=1}^{\frac{n}{4}} x_jx_{\frac{n}{2}+j} \right) } 
\ket{\mb{0}}_0\ket{0}_1\ket{x_1}_2\ket{x_2}_3\ldots \ket{x_{\frac{n}{4}}}_{\frac{n}{4}+1} 
\\
+&\ff (-1)^{ \left( \sum_{j=1}^{\frac{n}{4}} x_{\frac{n}{4}+j}x_{\frac{3n}{4}+j} \right) } 
\ket{\mb{0}}_0\ket{1}_1\ket{x_{\fs+1}}_2\ket{x_{\fs+2}}_3\ldots \ket{x_{\frac{n}{2}}}_{\frac{n}{4}+1}
\end{align*}
We define 
$f_1(\xv)=\left( \bigoplus_{j=1}^{\fl} x_jx_{\frac{n}{2}+j} \right)$ and
$f_2(\xv)= \left( \bigoplus_{j=1}^{\fl} x_{\fl+j}x_{\flt+j} \right)$, and thus
$f^{id}_n(\xv)= f_1(\xv) \op f_2(\xv)$ when $n \equiv 0 \mod 4$.
Therefore,
\begin{align}
\label{eq:psin4}
\ket{\psi}_{\fs}=\ff \left( (-1)^{ f_1(\xv) } 
\ket{\mb{0}}_0\ket{0}_1 
\bigotimes_{j=1}^{\fl} \ket{x_j}_{j+1}
+ (-1)^{ f_2(\xv) } 
\ket{\mb{0}}_0\ket{1}_1
\bigotimes_{j=1}^l \kett*{x_{\fl+j}}_{j+1}
\right).
\end{align}
We have the acquired the value of $f_1$ and $f_2$ as local phases.
At this stage we use the protocol $\sf untangle$ defined in
Theorem~\ref{lemma:1} iteratively to un-entangle the state. 
We define this protocol in a generic manner so that it can be used for 
other states as well.

We now directly apply the $\sf untangle^s_n$ protocol to the state
$\ket{\psi}_{\fs}$ as described in Equation~\ref{eq:psin4}
as
$$\ket{\psi}_{\fs}=
\ff \left( (-1)^{ f_1(\xv) } 
\ket{\mb{0}}_0\ket{0}_1 
\bigotimes_{j=1}^{\fs} \ket{x_j}_{j+1}
+ (-1)^{ f_2(\xv) } 
\ket{\mb{0}}_0\ket{1}_1
\bigotimes_{j=1}^{\fs} \ket{x_{\fs+j}}_{j+1}
\right).
$$

Here $\ket{\psi_{\fs}}$ can be described as the state $\ket{\beta_0}$
described in Lemma~\ref{lemma:3}
by putting $s=\fs$ and $r_i=i, 1 \leq i \leq \fs$.
Thus $\sf untangle^{s}_n$ takes $\lceil \frac{\fs}{2} \rceil = \lceil \frac{n}{8} \rceil$ queries and the system is in the state
\
\begin{align*}
(-1)^{g'(\xv)} 
\ketm{0}  \kett*{x_1x_{\frac{n}{2}+1} \op x_2x_{\frac{n}{2}+2} \ldots \op x_{\frac{n}{2}}x_n }_1 
\ket{x_2}_2 \ket{x_{l+2}}_3 \ldots \ket{x_{\frac{n}{2}}}_{\frac{n}{4}+1}
\end{align*}
and measuring $w_1$ in the computational state gives us the output. 

\ \\
In case of $n \equiv 2 \mod 4$ the number of monomials is not even.
and we have $f^{id}_n(\xv)=f_1(\xv) \op f_2(\xv) \op x_{\frac{n}{2}}x_n$.
Thus we acquire the 
phase related to $\fl$ monomials in the state with $\ket{w_1}=0$ and for $\fr$ monomials
in the state with $\ket{w_1}=1$.
We apply $\sf acquire(i), 0 \leq i \leq \fl$, 
bringing the system to the state
$$
\ff \left( 
(-1)^{f_1(\xv)} 
\ketm{0}\ket{0}_1 
\bigotimes_{j=1}^{\fl} \ket{x_j}_{j+1}
+(-1)^{f_2(\xv)} 
\ketm{0}\ket{1}_1
\bigotimes_{j=1}^{\fl} \ket{x_{\fl+j}}_{j+1}
\right).
$$
\
Thus, the monomial $x_{\frac{n}{2}}x_n$ still needs to be evaluated.
At this point we obtain the last monomial with the state containing $\ket{1}_1$ using two queries.
For the superposition state with $\ket{0}_1$,
the value of qubit $\ket{x_1}_2$ is transformed to
$\ket{x_{\fl+1}}_2$ and the query register holds the value of $x_{\frac{n}{2}}$.
Thus after $\fl+2$ queries the system is in the state
\begin{align*}
(-1)^{g'(\xv)}&
\left(
\ff(-1)^{f_1(\xv)} 
\ketm{x_{\frac{n}{2}}}\ket{0}_1\ket{x_{\fl+1}}_2
\bigotimes_{j=2}^{\fl} \ket{x_j}_{j+1} \right.
\\+& 
\left.
\ff (-1)^{ f_2(\xv) \op x_{\frac{n}{2}}x_n } 
\ketm{x_{\frac{n}{2}}}\ket{1}_1\ket{x_{\fl+1}}_2
\bigotimes_{j=2}^{\fl} \ket{x_{\fl+j}}_{j+1} \right)
\end{align*}

Thus at this stage apart from $w_1$, $\fr-2$ qubits have different variables in the two superposition state.
We swap the value of $Q_n$ and $w_2$ with $w_{\fl}$
and $w_{\fl+1}$ and apply the $\sf untangle^s_n$ protocol 
(with $s=\fr-2$) and reverse the swap operations.
This protocol makes $\lceil \frac{\fr-2}{2} \rceil$ queries
and the system is in the state
$$ 
(-1)^{g'(\xv)}
\ketm{x_{\frac{n}{2}}} \ket{x_1x_{\frac{n}{2}+1} \op x_2x_{\frac{n}{2}+2} \ldots \op x_{\frac{n}{2}}x_n}_1 
\ket{x_{\fl+1}}_2
\ket{x_3}_3 \ldots \ket{x_{\frac{n}{2}-1}}_{\fl+1}.
$$
\
This gives us the answer after making a total of $2\fl +2 +\lceil \frac{\fr}{2} \rceil -1 = \lceil \frac{5n}{8} \rceil$ 
queries to the oracle.
Thus both in the case of $n \equiv 0 \mod 4$ and $n \equiv 2 \mod 4$ we require $\lceil \frac{5n}{8} \rceil$ queries
to evaluate the function $f^{id}_n$ exactly.

The above discussion, combined with the Lemma~\ref{lemma2}, 
Theorem~\ref{lemma:1}, Lemma~\ref{lemma:3} 
and the description of Algorithm~\ref{algo} can be summarized as the following
theorem.
\begin{theorem}
\label{th:3}
The function $f^{id}_n$ can be evaluated by an exact quantum algorithm 
that makes $\lceil \frac{5n}{8} \rceil$ queries to the oracle
and uses $\fl+1$ qubits as working memory.
\end{theorem}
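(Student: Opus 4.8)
The plan is to assemble the statement directly from the routines built above, splitting on the residue of $n$ modulo $4$, since the parity of the monomial count $\frac{n}{2}$ decides how evenly the quadratic terms of $f^{id}_n$ distribute across the two branches of the $w_1$ superposition.

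For $n \equiv 0 \bmod 4$ (so $l=\fl=\fs$) I would begin from the balanced state $\ket{\psi}_0$ and iterate the phase-acquisition routine $\sf acq_1(i)$ of Lemma~\ref{lemma2} for $0 \le i < l$. Each call costs two oracle queries and appends one product term to the phase of each branch, so after $l=\fs$ calls --- that is, $\frac{n}{2}$ queries --- the system reaches $\ket{\psi}_{\fs}$, carrying $(-1)^{f_1(\xv)}$ on the $w_1=\ket{0}$ branch and $(-1)^{f_2(\xv)}$ on the $w_1=\ket{1}$ branch, where $f^{id}_n=f_1\op f_2$. This is precisely the state $\ket{\beta_0}$ of Lemma~\ref{lemma:3} with $s=\fs$ and $r_i=i$, so one application of $\sf untangle^s_n$ (itself a repeated use of the single-query untangling of Theorem~\ref{lemma:1}) disentangles the branches in $\lceil \frac{\fs}{2}\rceil=\lceil \frac{n}{8}\rceil$ queries; a Hadamard on $w_1$ then exposes $f_1\op f_2=f^{id}_n$ and a computational-basis measurement reads it off, for a total of $\frac{n}{2}+\lceil\frac{n}{8}\rceil=\lceil\frac{5n}{8}\rceil$ queries.

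The case I expect to demand the most care is $n \equiv 2 \bmod 4$, where $\frac{n}{2}$ is odd so the monomials cannot be balanced. Here I would write $f^{id}_n=f_1\op f_2\op x_{\frac{n}{2}}x_n$ with $f_1,f_2$ each holding $\fl$ monomials, acquire those $2\fl$ phases with $\sf acq_1(i)$ for $0\le i\le\fl$, and then spend two further queries injecting the leftover term $x_{\frac{n}{2}}x_n$ into the $w_1=\ket{1}$ branch while simultaneously re-indexing the $w_1=\ket{0}$ branch (overwriting $\ket{x_1}_2$ by $\ket{x_{\fl+1}}_2$ and loading $x_{\frac{n}{2}}$ into the query register) so that the branches still agree on the qubits destined to survive untangling. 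After this $\fl+2$-query stage, $\fr-2$ work qubits still disagree; conjugating $\sf untangle^s_n$ with $s=\fr-2$ by suitable swaps clears them in $\lceil\frac{\fr-2}{2}\rceil$ queries. The bookkeeping to confirm is $2\fl+2+\lceil\frac{\fr-2}{2}\rceil=\lceil\frac{5n}{8}\rceil$, which follows from $\lceil\frac{\fr-2}{2}\rceil=\lceil\frac{\fr}{2}\rceil-1$; the base case $n=6$ gives $4$ queries, matching the value $Q_E(f^{id}_6)=4$ found by semidefinite programming.

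In both residues the register usage peaks at the $\fl+1$ working qubits required by the acquisition stage, since untangling only merges and permutes existing qubits rather than allocating new ones, which gives the stated memory bound. The genuine obstacle is not any individual lemma but the internal consistency of the odd case: I must check that injecting $x_{\frac{n}{2}}x_n$ and re-indexing leave both branches in exactly the form to which Lemma~\ref{lemma:3} applies verbatim, so that the single-query-per-pair untangling guarantee of Theorem~\ref{lemma:1} still holds for every surviving disagreeing qubit.
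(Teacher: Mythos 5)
Your proposal is correct and follows essentially the same route as the paper's own argument: the same case split on $n \bmod 4$, phase acquisition via $\sf acq_1(i)$ of Lemma~\ref{lemma2} costing $\frac{n}{2}$ (resp.\ $2\fl+2$) queries, then $\sf untangle^s_n$ of Lemma~\ref{lemma:3} with $s=\fs$ (resp.\ $s=\fr-2$ after the swap conjugation), yielding the identical count $\lceil \frac{5n}{8} \rceil$ and the $\fl+1$-qubit memory bound. Your arithmetic check $\lceil\frac{\fr-2}{2}\rceil=\lceil\frac{\fr}{2}\rceil-1$ and the $n=6$ sanity check against $Q_E(f^{id}_6)=4$ match the paper's tally of $2\fl+2+\lceil\frac{\fr}{2}\rceil-1$ queries in the $n \equiv 2 \bmod 4$ case.
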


This completes the description of the exact quantum algorithm that evaluates 
$f^{id}_n$ using $\lceil \frac{5n}{8} \rceil$ queries. 
As we can observe, in case of $n \equiv 0 \mod 4$ 
the qubits $w_2, w_3, \dots w_{\fl +1}$ are in the states
$x_2, x_{\fl +2},x_4,x_{\fl +2}, \ldots, x_{\frac{n}{2}}$ respectively.
If $n \equiv 2 \mod 4$ then the query register contains the variable $x_{\frac{n}{2}}$
and the qubits contain the variables 
$x_{\fl+1}, x_2, x_{\fl+2}$ so on.
In both cases value of $\fr$ input variables is obtained via these qubits.
Therefore we can also evaluate any function $g$ depending on these variables 
without making any more queries to the oracle, which we summarize in the following corollary.

\begin{corollary}
This algorithm can also be used to evaluate any MM type Bent function with identity permutation
and the function $g$ having at most $\fr$ influencing variables. 
\end{corollary}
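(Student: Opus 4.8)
The plan is to exploit the explicit form of the pre-measurement state produced by the algorithm of Theorem~\ref{th:3}. After its $\lceil \frac{5n}{8}\rceil$ queries, the qubit $w_1$ holds the bit $f^{id}_n(\xv)$ in the computational basis, while the remaining work qubits store the values of $\fr$ distinct input variables of $\hx$ as a tensor product of computational-basis states, completely disentangled from $w_1$ (the only residual effect being an irrelevant global phase $(-1)^{g'(\xv)}$). Since every MM type bent function with identity permutation can be written as $f(\xv)=f^{id}_n(\xv)\op g(\hx)$, where $g$ is a Boolean function on the subspace $\hx$, it suffices to turn the stored bit $f^{id}_n(\xv)$ into $f^{id}_n(\xv)\op g(\hx)$ without querying the oracle again.

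First I would use the degree of freedom in the untangling protocol of Lemma~\ref{lemma:3}: the protocol retains one variable from each untangled pair, and by inserting the appropriate $\swap{\cdot}{\cdot}$ operations before untangling one can dictate which $\fr$ of the $\hx$ variables are carried over to the final state. I would choose this retained set to contain all influencing variables of $g$, which is possible precisely because $g$ has at most $\fr$ influencing variables and all of them lie in $\hx$. Thus, at the end of the Theorem~\ref{th:3} routine, every variable on which $g$ depends is available as a classical value stored in a work qubit.

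Next, writing $g$ in its algebraic normal form $g=\bigoplus_m \prod_{i\in m} x_i$, I would append, after the final Hadamard of the Theorem~\ref{th:3} algorithm, one multi-controlled-NOT gate per monomial $m$, with controls on the work qubits holding $\{x_i : i\in m\}$ and target $w_1$. Each such gate XORs the value $\prod_{i\in m}x_i$ into $w_1$, so the whole block XORs $g(\hx)$ into $w_1$, converting its content from $f^{id}_n(\xv)$ to $f^{id}_n(\xv)\op g(\hx)=f(\xv)$. These gates act only on work qubits already in computational-basis states, so they make no call to $O_x$; measuring $w_1$ then returns $f(\xv)$ exactly, and the total query count remains $\lceil \frac{5n}{8}\rceil$.

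The main obstacle is the variable-selection step: one must check that the $\fr$ variables surviving the untangling can be made to cover exactly the influencing variables of $g$. The untangling keeps $\lceil\frac{n}{8}\rceil$ variables from each of the two branches (the two halves of $\hx$ loaded during phase acquisition), so the cleanest reading of the statement is that the $\fr$ influencing variables of $g$ are chosen to coincide with this retained set, and the swap-based reordering then guarantees their survival. Everything else is a routine consequence of the fact that a function of computational-basis qubits can be evaluated reversibly for free.
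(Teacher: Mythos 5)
Your proposal is correct and takes essentially the same route as the paper: the paper's justification for this corollary is precisely your observation that after the $\lceil \frac{5n}{8} \rceil$ queries the work qubits (and, for $n \equiv 2 \bmod 4$, the query register) hold $\fr$ input variables in computational-basis states disentangled from $w_1$, so $g$ can be XOR-ed into $w_1$ reversibly with no further oracle calls; your monomial-by-monomial multi-controlled-NOT construction is the same step spelled out at the end of the proof of Theorem~\ref{th:4} (``XOR-ing the output of $g$ with $w_1$ and then measuring $w_1$'').

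One caveat on your middle paragraph: swaps before untangling cannot carry over an \emph{arbitrary} set of $\fr$ variables of $\hx$. Each invocation of the untangle step of Theorem~\ref{lemma:1} resolves exactly one variable from the $w_1=\ket{0}$ branch and one from the $w_1=\ket{1}$ branch, so at most $\lceil \frac{n}{8} \rceil$ survivors can come from either branch --- this is exactly why Theorem~\ref{th:4} carries the explicit condition $\abs{x' \cap \hat{y}} \leq \lceil \frac{n}{8} \rceil$, $\abs{x' \cap \hat{z}} \leq \lceil \frac{n}{8} \rceil$. So the claim that retention of all influencing variables is possible ``precisely because $g$ has at most $\fr$ influencing variables'' does not follow from swaps alone; if all of $g$'s variables were loaded into one branch, no reordering saves them. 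Your final paragraph correctly retreats to the balanced reading, which matches the paper's own phrasing (``any function $g$ depending on these variables''). If one wants the literal statement --- any $g$ with at most $\fr$ influencing variables in $\hx$ --- one further degree of freedom, invoked explicitly neither by you nor by the paper, closes the gap for the identity permutation: the assignment of the $\frac{n}{2}$ monomials $x_i x_{\frac{n}{2}+i}$ to the two superposition branches during phase acquisition is arbitrary, so at design time one may distribute the influencing variables of $g$ evenly between the branches, after which your swap-based selection applies verbatim.
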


\subsection{Beyond the Identity Permutation}
We have shown that our algorithm can evaluate the MM Bent functions
of type $f^{id}_n \op g(x')$ where $x'$ is a subset of $\hat{x}$ consisting of 
at most $\fr$ variables.
However, the techniques we have used are do not restrict the permutation to be identity
permutation. The algorithm works on dividing the variables of $\hat{x}$ into two (close to)
equal disjoint sets and then calculating the value of the corresponding points of $\tilde{x}$,
depending on the permutation.
In case of the identity permutation, since the variable $x_{\frac{n}{2}+i} \in \tilde{x}$
depended solely on the value of $x_i \in \hat{x}$ we could realize this procedure in a 
sequential manner.
Therefore, as long we have a permutation such that it can be expressed as the concatenation 
of two permutations on $\frac{n}{4}$ variables each,
or more precisely concatenation of permutations on $\fl$ and $\fr$ variables, 
we should be able to calculate the influencing variables in $\tilde{x}$ corresponding 
to the values of the variables in $\hat{x}$ at parallel, 
and thus be able to evaluate the function with the same query complexity of $\lceil \frac{5n}{8} \rceil$.
We now concretize this relaxation in restraint and the corresponding modifications needed 
in the algorithm.

\begin{theorem}
\label{th:4}
Let $f$ be an MM Bent function $f$ on $n$ variables such that $f= \phi(\hat{x}) \cdot \tilde{x} \op g(x')$,
with the following constraints:

\begin{enumerate}

\item[1] $\phi_1$ and $\phi_2$ are two permutations such that
$\phi(\hat{x}) \cdot \tilde{x} = \phi_1(\hat{y}) \cdot \tilde{y} \op \phi_2(\hat{z}) \cdot \tilde{z}$
\item[2] The sets of variables $\hat{y},\hat{z},\tilde{y},\tilde{z}$ are all disjoint,
$\abs{\hat{y}}=\abs{\tilde{y}}= \fl$, $\abs{\hat{z}}=\abs{\tilde{z}}= \fr$
\item[3] $\hat{y} \cup \hat{z} =\hat{x}$ and $\tilde{y} \cup \tilde{z} = \tilde{x}$
\item[4] $x' \subset{\hat{x}}, \abs{x' \cap \hat{y}}  \leq \lceil \frac{n}{8} \rceil, 
\abs{x' \cap \hat{z}}  \leq \lceil \frac{n}{8} \rceil$
\end{enumerate}

Then the function can be evaluated by an exact quantum query algorithm 
that makes $\lceil \frac{5n}{8} \rceil$ queries to the oracle and uses $\frac{n}{2}+1$ qubits as working memory.
\end{theorem}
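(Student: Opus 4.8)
The plan is to reuse the template of Theorem~\ref{th:3} almost verbatim, replacing the identity pairing by the pairing induced by the two sub-permutations $\phi_1$ and $\phi_2$. Conditions (1)--(3) let me write $f$ (apart from the $g$-term) as the direct sum $\phi_1(\hat y)\cdot\tilde y \op \phi_2(\hat z)\cdot\tilde z$ of two MM forms on the four pairwise-disjoint sets $\hat y,\tilde y$ (of size $\fl$) and $\hat z,\tilde z$ (of size $\fr$). I would load these two forms into the two branches of the work qubit $w_1$: after a Hadamard on $w_1$, the branch $w_1=\ket{0}$ is made responsible for $\phi_1(\hat y)\cdot\tilde y$ and the branch $w_1=\ket{1}$ for $\phi_2(\hat z)\cdot\tilde z$. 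This is exactly the role played by $f_1$ and $f_2$ in the proof of Theorem~\ref{th:3}.

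Next I would run the phase-acquisition routine ${\sf acq}_1$ of Lemma~\ref{lemma2}, but with the query indices dictated by the permutations rather than by the identity map. In step $i$ of the $w_1=\ket{0}$ branch I query the $i$-th variable of $\hat y$ into $Q_n$, use the $\pg{n}{j}$ matrices to route to its $\phi_1$-partner in $\tilde y$, and query again to deposit the product phase; symmetrically, in the $w_1=\ket{1}$ branch I pair each variable of $\hat z$ with its $\phi_2$-partner in $\tilde z$. Because $\phi_1,\phi_2$ are bijections and the four sets are disjoint, this is merely a relabelling of the indices appearing in Lemma~\ref{lemma2}, so the unitaries and their correctness carry over unchanged, and the first factor of each product (the $\hat y$, resp.\ $\hat z$, variable) is stored in a work qubit exactly as before. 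The two branches carry $\fl$ and $\fr$ products, so for $n\equiv 2\bmod 4$ the branch handling $\hat z$ has one surplus product, which I would dispose of with two extra queries precisely as the monomial $x_{\frac{n}{2}}x_n$ is treated in the $n\equiv 2\bmod 4$ analysis of $f^{id}_n$. The acquisition stage thus costs $\frac n2$ queries for $n\equiv0\bmod4$ and $\frac n2+1$ with the surplus-monomial handling for $n\equiv2\bmod4$; I then apply ${\sf untangle}^s_n$ (Lemma~\ref{lemma:3}), costing $\lceil\frac{n}{8}\rceil$ queries (one fewer in the $n\equiv2\bmod4$ case, absorbing the surplus), to disentangle the branches. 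A final Hadamard on $w_1$ and its measurement return $\phi_1(\hat y)\cdot\tilde y \op \phi_2(\hat z)\cdot\tilde z = \phi(\hx)\cdot\tx$, with running total $\lceil\frac{5n}{8}\rceil$ as in $f^{id}_n$.

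It remains to incorporate $g(x')$. The untangling step leaves $\lceil\frac{n}{8}\rceil$ of the variables stored in each branch as definite computational-basis values in the carried work qubits, and by the freedom in the ordering $r$ of Lemma~\ref{lemma:3} I may choose which variables survive. Condition (4) --- $\abs{x'\cap\hat y}\leq\lceil\frac{n}{8}\rceil$ and $\abs{x'\cap\hat z}\leq\lceil\frac{n}{8}\rceil$ --- is exactly what guarantees that every variable of $x'$ can be routed into a surviving position, one half drawn from each branch. Hence after measurement I read the recorded values of the $x'$ qubits, compute $g(x')$ classically, and XOR it onto the measured $w_1$ outcome to output $f$; this requires no further queries. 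The totals match those of Theorem~\ref{th:3}: $\lceil\frac{5n}{8}\rceil$ queries and $\frac n2+1$ work qubits. I expect the main obstacle to be purely bookkeeping: verifying that the permuted pairing keeps $\hat y,\tilde y,\hat z,\tilde z$ disjoint at every stored position (so that ${\sf untangle}^s_n$ still applies) and reconciling the $\fl$ versus $\fr$ asymmetry of the two branches with the carrying capacity of the untangling routine in the $n\equiv 2\bmod 4$ case --- exactly the point where the bounds of condition (4) must be matched tightly against $\lceil\frac{n}{8}\rceil$.
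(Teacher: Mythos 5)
There is a genuine gap, and it sits at the heart of your plan: you have read condition~(1) as saying that $\phi_1$ and $\phi_2$ permute \emph{variable indices}, so that $\phi_1(\hat y)\cdot\tilde y$ decomposes into disjoint variable pairs $\bigoplus_i x_{a_i}x_{b_i}$, each variable of $\hat y$ having a single ``$\phi_1$-partner'' in $\tilde y$. In fact $\phi_1$ and $\phi_2$ are bijections of the \emph{spaces} $\mathbb{F}_2^{\fl}$ and $\mathbb{F}_2^{\fr}$ --- this is exactly what makes the class counted in Section~\ref{mm:num} of size $\left(2^{\fl}!\right)\left(2^{\fr}!\right)\left(2^{2^{\fr}}\right)$ rather than $\fl!\,\fr!$. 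For a general bijection, each coordinate of $\phi_1(\hat y)$ is an arbitrary Boolean function of \emph{all} of $\hat y$, so $\phi_1(\hat y)\cdot\tilde y$ is not a sum of variable-pair products, and your acquisition step (``query the $i$-th variable of $\hat y$, use the $\pg{n}{j}$ matrices to route to its $\phi_1$-partner, query again'') is not well defined: after querying only one variable of $\hat y$ there is no routing index to speak of, since which variables of $\tilde y$ are influential depends on the whole point of $\hat y$. Hence ``merely a relabelling of the indices appearing in Lemma~\ref{lemma2}'' fails; your scheme evaluates only the coordinate-permutation subclass, i.e.\ relabellings of $f^{id}_n$ (all PNP-equivalent to $f^{id}_n$), and the doubly exponential class of Theorem~\ref{th:4} collapses.

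The paper's proof restructures the acquisition phase precisely to cope with this. In the $w_1=\ket{0}$ branch it first spends $\fl$ queries reading all of $\hat y$ into work qubits \emph{without} acquiring any phase; then, at zero query cost, it applies $2^{\fl}$ multi-controlled-NOT gates --- one per point $e\in\mathbb{F}_2^{\fl}$, controlled on the stored qubits matching $e$, with targets the ancillas $w_{\fr+1+t}$ for the coordinates $t$ where $\phi_1(e)$ equals $1$ (for any given input exactly one such gate fires); then it spends $\fl$ further queries, one per coordinate of $\tilde y$, each conditioned on its ancilla, to accumulate the phase $(-1)^{\phi_1(\hat y)\cdot\tilde y}$, and finally uncomputes the ancillas. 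This yields the same per-branch counts $2\fl$ and $2\fr$ that you obtained, and from that point on your outline does agree with the paper: the untangling via Lemma~\ref{lemma:3} at cost $\lceil\frac{n}{8}\rceil$, the use of condition~(4) to route the variables of $x'$ (at most $\lceil\frac{n}{8}\rceil$ from each branch) into the surviving computational-basis qubits, the surplus handling when $n\equiv 2\bmod 4$ (there the shorter branch uses its two spare oracle rounds to align two qubits, leaving $\fr-2$ for untangling), and the query-free evaluation of $g(x')$ at the end --- whether XORed coherently onto $w_1$ as in the paper or computed classically from measured qubits as you suggest is immaterial. But without the compute-$\phi$-into-ancillas mechanism, the claimed generality, and with it the theorem, is not reached.
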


\begin{proof}
Let the variables of $\hat{y}$ be $x_{i_1},x_{i_2}, \ldots x_{i_{\fl}}$ and 
$x_{i_{\fl+1}},x_{i_{\fl+2}}, \ldots x_{i_{\frac{n}{2}}}$ be the variables of $\hat{z}$. 
We start the system in the state in the all zero state and apply a Hadamard gate on the qubit $w_1$
to get the state 
$$\ff \ketm{0}\ket{0}_1\ket{0}_2\ldots\ket{0}_{\frac{n}{2}+1} + 
\ff \ketm{0}\ket{1}_1\ket{0}_2\ldots\ket{0}_{\frac{n}{2}+1}.$$

Corresponding to the state with $w_1=\ket{0}$,
the algorithm progresses as follows:
we obtain the values of the $\fl$ variables in 
$\hat{y}$ using the first $\fl$ queries and 
store them in the qubits $w_2,w_3, \ldots w_{\fl +1}$.
Before the $t$-th query, where $1 \leq t \leq \fl$ the gate $\cc \parg{n}{0}{i_t}$,
is applied, followed by the oracle and then the value of 
query register is swapped with the $t+1$-th work qubit, 
which is in the state $\ket{0}$ at this point. 
The next $\fl$ queries are used to obtain the corresponding 
linear function in $\tilde{y}$ as follows.
The linear function in $\tilde{y}$ can be encoded using $2^{\fl}$ unitary operations.
Each operation correspond to a point in $\hat{y}$. 
For example, if $\phi_1(e_1,e_2,\ldots e_{\fl})=(h_1,h_2,\ldots h_{\fl}),
e_t,h_t \in \{0,1\}$
then we apply a multiple target $\sf C^{\fl}-NOT$ operation controlled on $w_1=\ket{0}$,$w_2=\ket{e_1},\ldots, w_{\fl+1}=\ket{e_{\fl}}$,
with the targets being the qubits $w_{\fr+1+t}$ where $h_t=1$.
We apply these kinds of operations for all $2^{\fl}$ points in $\hat{y}$.
Note that for any input only one of these operations will have all controls satisfied.
Once this operation is applied, 
we have the indexes of the variables in $\tilde{y}$ obtained which are influential at that input point.
We can obtain the corresponding phase one after another in a multiplied form 
by putting a C-NOT from the qubit $w_{\fr+1+t}$ to the query register and
then apply the appropriate $\cc \pg{n}{v}$
gate where $v$ depends on the encoding used.
This is followed by a query to the oracle and then the 
C-NOT operation is applied again to un-compute the
query register.
Thus, after $2 \times \fl$ query this superposition state is in the form.
At this point we apply the $\sf C^{\fl}-NOT$ operations to un-compute
the garbage in the qubits $w_{\fr+1}$ to $w_{\frac{n}{2}+1}$,
leading the system to
$$  (-1)^{\phi_1{\hat{y}} \cdot \tilde{y}} \ketm{0}\ket{0}_1 \ket{x_{i_1}}_2\ldots \ket{x_{i_{\fl}}}_{\fl+1} \ket{0}_{\fl+2} \ldots \ket{0}_{\frac{n}{2}+1}$$

Similarly, in case of the state with $w_1=\ket{1}$,
this set of operations take $2 \times \fr$ queries to get the
phase $(-1)^{\phi_2(\hat{z}).\tilde{z}}$
and at this state the superposition state is in

$$  (-1)^{\phi_1{\hat{z}} \cdot \tilde{z}} \ketm{0}\ket{1}_1 \ket{x_{i_{\fl+1}}}_2\ldots \ket{x_{i_{\fl+\fr}}}_{\fr+1} \ket{0}_{\fr+2} \ldots \ket{0}_{\frac{n}{2}+1}$$

Now, if $n \equiv 0 \mod 4$, then $\fl=\fr$, and we
can apply the method of Lemma~\ref{lemma:3} to un-compute 
the qubits $w_2,w_3, w_{\fl+1}$ using $\lceil \frac{n}{8} \rceil$ queries.

The step of Lemma~\ref{lemma:3} is applied so that the 
variables in $x'$ are the ones that are stored
in the work qubits as final states of those qubits.

If $n \equiv 2 \mod 4$ then $\fl=\fr-1$ and the state with
$w_1=\ket{1}$ requires $2$ less queries to obtain the related phase. It uses the two queries to transforms two qubits $w_{\fr}$ and $w_{\fr+1}$ to the same state as in the other
superposition state, in the same manner as shown for $f^{id}_6$
and thus after $\fl+2$ queries there are $\fr-2$ qubits that
need to be brought to the same state to un-entangle the system.
This takes a further $\lceil \frac{\fr-2}{2} \rceil$ queries 
using the methodology of Lemma~\ref{lemma:3}.
Thus in both cases, the algorithm requires $\lceil \frac{5n}{8}\rceil$ queries, and the system is in the state 

\begin{align*}
(-1)^{x_{i_1} + \ldots + x_{i_{\frac{n}{2}}} }& \Big( (-1)^{\phi_1{\hat{y}} \cdot \tilde{y}} \ketm{0}\ket{0}_1 \ket{x_{i_1}}_2\ldots \ket{x_{i_{\frac{n}{2}}}}_{\fr+1} \ldots \ket{0}_{\frac{n}{2}+1}
\\
+& (-1)^{\phi_1{\hat{z}} \cdot \tilde{z}} \ketm{0}\ket{1}_1 \ket{x_{i_1}}_2\ldots \ket{x_{i_{\frac{n}{2}}}}_{\fr+1} \ket{0}_{\fr+2} \ldots \ket{0}_{\frac{n}{2}+1} \Big)
\end{align*}

Finally in both cases the Hadamard gate is applied on the 
qubit $w_1$ which now contains the value of the permutation
$\phi(\hat{x}).\tilde{x}$ corresponding to the input given to the oracle.

At this point the work qubits $w_2$ through $w_{\fr+1}$ store the variables in $x'$, which are then used to calculate the 
value of the function $g$, XOR-ing the output of $g$ 
with $w_1$ and then measuring $w_1$ in the computational basis
gives us the final output.

\end{proof}
We call the set of MM Bent functions satisfying
the constraints of Theorem~\ref{th:4} as $\Gamma_n$.

\subsection*{The case of odd $n$}
So far, we have concentrated on the class of MM Bent functions, which are defined for all even $n$, and have obtained a 
large class of functions with 
deterministic query complexity of $n$ which our exact quantum algorithm 
evaluates using $\lceil \frac{5n}{8}\rceil$ queries.

However this technique can be extended for all odd values 
of odd $n$ as well. This can be done as follows.
\begin{enumerate}
\item Take any function on $f= \phi(\hat{x}).\tilde{x} \op g(x')$ on $n=2k$ variables such that $\phi$ and $g$ 
follow the constraints of Theorem ~\ref{th:4}. 

\item Form the function $f'=f(x) \op x_{n+1}$
\end{enumerate}
Since $f$ has a polynomial degree of $n$, as shown in \cite{parity}, this implies $f'$ has a polynomial degree of $n+1$.
This function can be evaluated in the exact quantum model by first evaluating $f$ using
$\lceil \frac{5n}{8} \rceil$ queries and using one more 
query to obtain the value of $x_{n+1}$.
Thus this takes $\lceil \frac{5n}{8} \rceil +1 \leq \lceil \frac{5(n+1)}{8} \rceil +1$ queries.
The number of functions that can be evaluated in this case
is same as that for $n$.

\subsection{The number of functions evaluated:}
\label{mm:num}
We finally calculate the number of functions covered via the definition of Theorem~\ref{th:3} for even $n$ ($\abs{\Gamma_n}$), and the number of functions 
for any odd $n$ is the same as the number of functions for
$n-1$.
We essentially give a lower bound on the number of functions, as our calculation is based on a single partition of $\hat{x}$ and $\tilde{x}$ into these four sets, and any choice of $x'$.

There are $2^{\fl}$ inputs to the first permutation and $2^{\fr}$ inputs to the second permutation,
and $x'$ contains $\fr$ inputs.
Therefore the total number of functions are $\left(2^{\fl}!\right)\left(2^{\fr}!\right)\left(2^{2^{\fr}}\right)$.

We now recall the definition of PNP-equivalence from~\cite{exact}.
\begin{definition}
Two functions $f$ and $g$ are called PNP-equivalent 
if $f$ can be obtained from $g$ by permuting the name of 
the variables in $g$, replacing some variables $x_i$ with $x_i \op 1$ in $g$
and by finally complementing the new formed function with $1$.
\end{definition}
If two functions are PNP equivalent then they have the same
deterministic and exact quantum query algorithm and often
an algorithm to evaluate one of them can be very easily modified
to evaluate the other using the same number of queries.

Corresponding to a function on $n$ variables, there can be at most $n! 2^{n+1}$ functions that are PNP-equivalent to it.
This is because there can be $n!$ permutation
of variables and each variable $x_i$ can be replaced with $x_i \op 1$, and finally each function $f(x)$ can be replaced with
$f(x) \op 1$. 
Also, the PNP-equivalence relation 
is reflective, symmetric and transitive in nature.

Therefore if there is a set of cardinality $\sf S$ consisting of functions on $n$ variables, then it consists of at least
$\frac{\sf S}{n!2^{n+1}}$ functions that are not PNP-equivalent.

Therefore in this case the class $\Gamma_n$
(exactly evaluated by our algorithm using $\lceil \frac{5n}{8} \rceil $ or $\lceil \frac{5n}{8} \rceil +1$
queries) must consist of at least
$$\frac{\left(2^{\fl}!\right)\left(2^{\fr}!\right)\left(2^{2^{\fr}}\right)}{n!2^{n+1}} =  \Omega 
\left(2^{\left(\fl 2^{\left( \fl \right)} \right)} \right) $$
functions, which is doubly exponential in $\fl$.

In conclusion, the fact that this algorithm cannot evaluate all MM Bent functions and thus all functions derived using 
the Bent concatenation method for odd values of $n$
is a limitation compared to the parity decision method, which we note down in the following remark.
\begin{remark}
\label{r:2}
The parity decision tree method in \cite{parity} evaluates all MM Bent functions on $n$ variables using $\lceil \frac{3n}{4} \rceil$ queries where as the algorithm described in this requires $\lceil \frac{5n}{8} \rceil$ queries, but is able to evaluate only the MM Bent functions that meet the constraints described in Theorem~\ref{th:4}.
\end{remark}

While the family of algorithms designed by us evaluates a class of functions super exponential in $\fl$,
with a query complexity lower than any known parity decision tree technique, it lacks in two areas.
The first is that we are unable to show that $\QC(f)=Q_E(f)$ for these functions.
The second is that we are unable to show $\QC(f) < D^2_{\op}(f)$ for any of these functions. 
That is, we do not know if there exists a parity decision tree technique that can have the same 
query complexity as the family of algorithms we have presented.
We have noted down in Theorem~\ref{th:par2} that $D_{\op}(f)$ is lower bounded by 
granularity. It is known that MM type Bent functions have a flat Fourier Spectra, 
with $\hat{f}(S)=\frac{1}{2^{\frac{n}{2}}} ~\forall~ S \subseteq [n]$.
Therefore granularity of any MM type Bent function is $\frac{n}{2}$
which gives us a lower bound that we can show to be tight. 

\section{Conclusion and Future Directions} 
\label{sec:5}
In this paper we have designed a new family of exact quantum algorithms ($\Q$) for certain classes 
of non-symmetric functions $f$ with query complexity $\QC(f)$.

First we have described the class ${\sf pdsp}(n,\frt,q)$ using perfect direct sum constructions 
with products, and shown that for a set of $\Omega(2^{\frac{\sqrt{n}}{2}})$ functions in this class 
we get $Q_E(f) = \QC(f) = \flt$ with $D_{\op}(f) > \flt$. For these set of functions we have 
$\flt+1 \leq D_{\op}(f) \leq n-1$, depending on the value of $q$ in ${\sf pdsp}(n, \frt, q)$.
We have obtained this result by designing exact quantum query algorithms based on $\F$ polynomial 
structure and then proven separation from generalized parity complexity technique 
by exploiting the high granularity of these functions. 

In this regard we design a subroutine as described in Theorem~\ref{lemma:0} which un-entangles two qubits in 
an entangled system with a single query, which allows us to obtain the said separations and is central to 
our algorithms. It would be interesting to study if this subroutine can be modified to be more efficient in the
bounded error quantum query model. 

In fact, we not only obtain advantage over the parity decision tree model in which the parity of two bits is 
calculated in a single query, but also the stronger generalized parity decision tree model in which parity of 
any number of bits can be calculated in a single query.

Using similar $\F$ polynomial based techniques we have also designed algorithms for a subclass of MM type Bent functions
(a variable XOR-ed with MM Bent function when $n$ is odd) consisting of at least $\Omega(2^{2^{\fl}})$ functions that
are not PNP equivalent for any value of $n$. This family of algorithms have query complexity of $\lceil \frac{5n}{8} \rceil$
where as the lowest query complexity of any known parity decision tree technique is $\frt$.
While $\Q(f)$ is optimal for $f= x_1x_4 \op x_2x_5 \op x_3x_6$,
we could neither show $\QC(f)=Q_E(f)$ or that $\QC(f)< \DP(f)$
for these classes of functions, which we note down here as open problems.
\begin{enumerate}
\item Does there exist any parity based method that can evaluate functions from this subclass using 
less than $\lceil \frac{3n}{4} \rceil$ queries?
\item What is the exact quantum query complexity of the functions in this class?
\end{enumerate}
Thus we design a family of algorithms that is both more powerful than the parity decision tree technique, 
and even the generalized parity decision tree technique and can be applied to a large class of non symmetric 
functions. In comparison, almost all the existing exact quantum query algorithms can only be applied to $poly(n)$ 
(mostly) symmetric functions. 
   
It remains of interest to understand the extent to which these techniques can be applied and how can they be modified 
to get optimal query complexity for other classes of Boolean functions, towards better understanding of this domain.

\end{document}